\documentclass[twocolumn,aps,prl,groupedaddress,nourl]{revtex4-2}

\usepackage{hyperref}
\usepackage{mathrsfs}
\usepackage{mathtools}
\usepackage{amsfonts}
\usepackage{amsmath}

\usepackage{amssymb}
\usepackage{bbm}
\usepackage{natbib}
\usepackage{amsthm}
\usepackage{graphicx}
\usepackage{centernot}
\usepackage{braket}
\usepackage{lastpage}
\usepackage{enumitem}
\usepackage{setspace}
\usepackage{xcolor}
\usepackage{braket}
\usepackage{tikz-feynman}
\usepackage{soul}
\usepackage{cancel}
\tikzfeynmanset{compat=1.0.0}
\usepackage{bibunits}
\defaultbibliography{bibliography}
\defaultbibliographystyle{apsrev4-2} 
\usepackage{xcolor}

\usetikzlibrary{external}
\newcommand{\norm}[1]{\left\lVert #1 \right\rVert}
\newcommand{\abs}[1]{\left\lvert #1 \right\rvert}

\newcommand{\interior}[1]{%
	{\kern0pt#1}^{\mathrm{o}}%
}
\newcommand{\Tr}{{\rm Tr}}
\renewcommand{\ket}[1]{\left\lvert #1 \right\rangle}
\renewcommand{\bra}[1]{\left\langle #1 \right\rvert \right.}
\renewcommand{\braket}[1]{\left\langle #1 \right\rangle}
\newcommand{\tauexp}{\tau}

\renewcommand{\braket}[1]{\langle#1\rangle}

\renewcommand\bra[1]{{\langle{#1}|}}
\makeatletter
\renewcommand\ket[1]{%
	\@ifnextchar\bra{\k@t{#1}\!}{\k@t{#1}}%
}
\newcommand{\kett}[1]{\lvert #1 \rangle\hspace*{-0.08cm}\rangle}

\newcommand{\bbra}[1]{\left\langle\hspace*{-0.08cm}\left\langle #1 \right\rvert \right.}
\newcommand{\bbrakett}[1]{\langle\hspace*{-0.08cm}\langle #1 \rangle\hspace*{-0.08cm} \rangle}
\newcommand\k@t[1]{{|{#1}\rangle}}
\makeatother

\newcommand{\eg}{\emph{e.g.} }

\newcommand{\R}{\mathbb{R}}
\newcommand{\C}{\mathbb{C}}

\newcommand{\rB}{{\rm B}}
\newcommand{\bJ}{{\mathbf J}}
\newcommand{\tr}{\operatorname{tr}}
\newcommand{\mexp}{{m_{\rm exp}}}
\newcommand{\nexp}{n_{\rm exp}}

\renewcommand{\nexp}{n}
\renewcommand{\mexp}{n_{*
}}

\newcommand{\addFN}[1]\qoijejwpoi

\newcommand{\addJA}[1]\asdfpoaisjd

\newtheoremstyle{mytheoremstyle} 
    {0em}                    
    {0em}                    
    {\upshape}                   
    {.5em}                           
    {\itshape}                   
    {.---}                          
    {.5em}                       
    {}  

\theoremstyle{mytheoremstyle}

\newtheorem{theorem}{Theorem}

\newtheorem{lemma}{Lemma}
\newtheorem{proposition}{Proposition}

\newtheorem{definition}{Definition}

\theoremstyle{plain}
\newtheorem{slemma}{Lemma}

\newtheorem{stheorem}{Theorem}

\newtheorem{sproposition}{Proposition}

\theoremstyle{definition}
\newtheorem{sdefinition}{Definition}

\newtheorem{scorollary}{Corollary}

\definecolor{color1}{rgb}{0.12, 0.47, 0.71}
\definecolor{color2}{rgb}{0.58, 0.40, 0.74}
\definecolor{color3}{rgb}{0.17, 0.63, 0.17}
\definecolor{color4}{rgb}{1.0, 0.5, 0.05}
\definecolor{color5}{rgb}{0.82,0.6, 0.11}

\begin{document}

\title{Markovian  quantum master equations are exponentially accurate {in the\\  weak coupling regime}}

\author{Johannes Agerskov}
\email[]{johannes.agerskov@nbi.ku.dk}
\author{Frederik Nathan}
\email[]{frederik.nathan@nbi.ku.dk}
\affiliation{NNF Quantum Computing Programme, Niels Bohr Institute, University of Copenhagen, Denmark.}

\date{\today}

\begin{abstract}
We consider the evolution of  open quantum systems coupled to one or more Gaussian environments. We demonstrate that such systems can be described by a  Markovian quantum master equation (MQME)  up to a correction that decreases exponentially  with the inverse system-bath coupling strength. We provide an explicit expression for this MQME, along with rigorous bounds on its residual correction, and   numerically benchmark it for an exactly  solvable  model. The MQME is obtained via a generalized Born-Markov approximation that can be iterated  to arbitrary orders in the system-bath coupling; our error bound converges asymptotically to zero with the iteration order. Our results thus demonstrate that  the non-Markovian component in the evolution of an open quantum system, while possibly inevitable,  can be exponentially suppressed at weak coupling. 
\end{abstract}

\maketitle
\begin{bibunit}
Quantum mechanical systems in the real world  inevitably {interact} 
with their surrounding environments. This  {\it open} nature 
 is  key to understand   how the laws of  quantum mechanics manifest themselves in nature~\cite{gardiner_quantum_2004,delgado-granados_quantum_2025,krantz_quantum_2019,akamatsu_quarkonium_2021}.
 The description of open quantum systems is generally much more complicated than that of their isolated counterparts, as they have 
 no general closed-form  law 
 of motion analogous to the Schrodinger equation~\cite{nakajima_quantum_1958,zwanzig_ensemble_1960}. Instead we rely on    {approximations~\cite{nakajima_quantum_1958,zwanzig_ensemble_1960,redfield_theory_1965,feynman_theory_1963,kubo_stochastic_1963,van_kampen_cumulant_1974, chaturvedi_time-convolutionless_1979, lindblad1976generators,gorini1976completely,davies_quantum_1976,tanimura_time_1989,dalibard_wave-function_1992,rosenbach_efficient_2016,strathearn_efficient_2018,kirsanskas_phenomenological_2018,nathan_topological_2019,davidovic_completely_2020,nathan2020universal,mozgunov_completely_2020,Trushechkin_2021,nathan_quantifying_2024,de_vega_dynamics_2017,trushechkin_derivation_2021,breuer_theory_2007,breuer_non-markovian_2006}.
Obtaining  approximate descriptions of open quantum systems is  generically challenging, because their dynamics are {\it non-Markovian}: their trajectory from  a given instant requires knowledge of their entire previous history.}

{Remarkably, some open quantum systems {\it can be} well-described by  simple, Markovian laws of motion,  greatly simplifying their description. These laws of motion take the form of a linear first-order differential equation for their 
density matrix---here termed   a {\it {Markovian} quantum master equation (MQME).} MQMEs can, e.g., be obtained from the Born-Markov approximation or related perturbative expansions in the system environment coupling~\cite{redfield_theory_1965,davies_quantum_1976,mozgunov_completely_2020,nathan2020universal,Trushechkin_2021}. These approaches lead to a variety of widely-used MQMEs, such as    the Bloch-Redfield equation (BRE), Davies equation~\cite{davies_quantum_1976}, {convolutionless} MQME's~\cite{kubo_stochastic_1963,van_kampen_cumulant_1974,chaturvedi_time-convolutionless_1979,breuer_non-markovian_2006,breuer_theory_2007,trushechkin_derivation_2021, crowder_invalidation_2024,Lampert_2025}, and,  recently,  non-secular Lindblad equations~\cite{kirsanskas_phenomenological_2018,nathan_topological_2019,mozgunov_completely_2020,davidovic_completely_2020,nathan2020universal,Trushechkin_2021,Potts_2021}.
This plethora of methods raises the question: 
{\it how accurately can the evolution of an open quantum system be captured by a MQME?}}

\begin{figure}
\includegraphics[width=0.99\columnwidth]{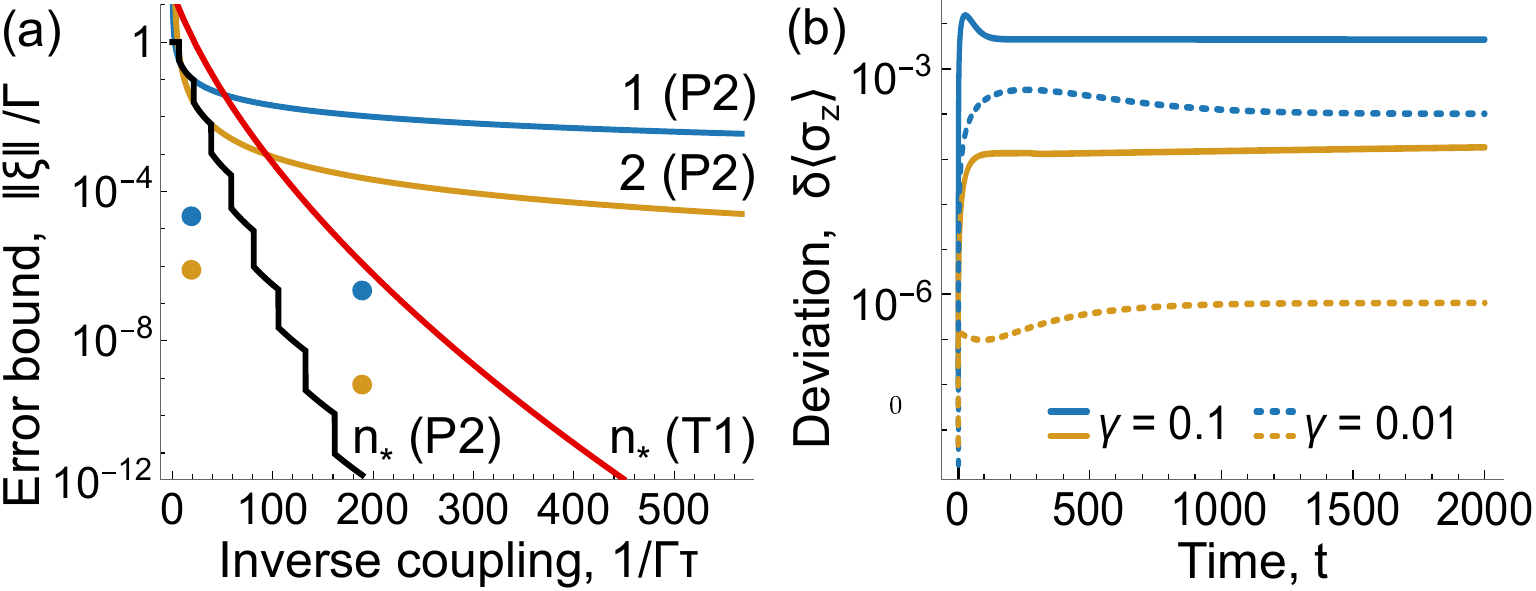}
\caption{\label{fig:1}
{{\bf Exponential suppression of error  for   Markovian quantum master equations.} 
We show that any open quantum system coupled to Gaussian baths can be described by a Markovian quantum master equation with dissipator  $\Delta_{mn}(t)$ up to a bounded residual error  $\xi_{mn}$, Eqs.~(\ref{eq:bm_dis},\ref{EqBornMarkovApproxm}). %
In (a) we show  our   bounds on $\norm{\xi_{nn}}_{\rm tr}/\Gamma$ in terms of $\Gamma \tau$, with $\Gamma$ and $\tau$    scales for bath coupling strength and  correlation time   defined  in Eqs.~(\ref{eq:gamma_mu_def}-\ref{eq:taudef}).
Curve labels indicate $n$, {while}  P2 and T1 refer to {bounds from} Proposition~\ref{proposition:TightestBound} and   Theorem~\ref{ThmExpDecay}, respectively. The bounds for $n=\mexp$, as  defined in  Eq.~\eqref{eq:mexpdef}, decrease exponentially with $1/\sqrt{\Gamma \tau}$. 
(b) Evolution of  $z$-spin  error $\delta \langle \sigma_z(t)\rangle$ resulting from   $\Delta_{11}$  (\textcolor{color1}{\raisebox{-1ex}{\Huge$\cdot$}}) and $\Delta_{22}$  (\textcolor{color5}{\raisebox{-1ex}{\Huge$\cdot$}}) for an exactly solvable  spin-boson model; see below Eq.~\eqref{eq:benchmark_model} for details. Corresponding points in (a) indicate $\delta \langle \sigma_z(t)\rangle/t$ at $t=10/\gamma$.}}
\end{figure}

In this work  we seek to address the question above. Focusing on the broadly relevant case of Gaussian environments, we {prove} that MQMEs are {\it exponentially accurate}  in the weak-coupling regime. Specifically, we identify an MQME that describes any open quantum system coupled to Gaussian baths, up to a correction 
{that} {decreases exponentially} 
with the  inverse system-bath coupling, 
as  {$\Gamma e^{-{2}/\sqrt{\Gamma \tau}}$}, where   $\Gamma$ are $\tau$ are characteristic scales for coupling strength and  correlation time of the environment, {defined in Eqs.~\eqref{eq:gamma_mu_def}-\eqref{eq:taudef}}. See Theorem~\ref{ThmExpDecay} and  Fig.~\ref{fig:1}(a) {for details}.
Thus,  rather than a limit,  open quantum systems  have a finite parameter {\it regime} where  dynamics are, for nearly all purposes, Markovian.

We provide an explicit expression for the exponentially accurate MQME  in Eq.~\eqref{eq:bm_dis}.
It  results from 
two expansions that generalize  the conventional Born and Markov approximations to arbitrary orders in the system-bath coupling. 
{The expansions yield an MQME whose evolution  
converges  {\it asymptotically} to the true dynamics with expansion order}. 
In particular,   {a} 
deviation bound we obtain for the family 
decreases exponentially   down  to an optimal finite   order, $\mexp\sim 1/\sqrt{\Gamma \tau}$ [See Eq.~\eqref{eq:mexpdef}]; terminating here yields our exponentially accurate MQME. 
We numerically benchmark the {MQMEs of the  expansion} for an exactly solvable model [see Fig.~\ref{fig:1}(b)].

{\it Problem introduction---} 
In this work, we consider a quantum system ${\mathcal  S}$ with a (possibly) time-dependent Hamiltonian $H_{\mathcal S}(t)$ coupled to a surrounding  environment, or bath, ${\mathcal B}$, described by  
$H_{\mathcal B}$~\cite{generalbaths}. 
{Without loss of generality, we parameterize} the system--bath interaction as  $H_{\rm int}=\sqrt{\gamma}  \sum_\alpha X_\alpha\otimes  B_\alpha $, with $X_\alpha$ {and $B_{\alpha}$}   Hermitian operators acting exclusively on ${\mathcal S}$ and ${\mathcal B}$, respectively. {We  normalize} each operator $X_\alpha$ 
such that $\norm{X_\alpha}=1$, with $\norm{\cdot}$ 
the {usual operator} norm.
The Hamiltonian of the combined system thus reads $H_{\mathcal S \mathcal B}(t) = H_{\mathcal S}(t)+H_{\mathcal B}+H_{\rm int}$.
We  assume that at {some} initial time $t_0$,  the system {is} in a product state $\rho=\rho_0\otimes\rho_{\mathcal B}$ with  $\rho_{\mathcal B}$ described below, {and let  $\rho_{\mathcal SB}(t)$ denote the density matrix resulting from evolving this state with $H_{\mathcal S \mathcal B}(t)$. }
Without loss of generality, we further   assume {$\Tr[\rho_{\mathcal B}B_\alpha]=0$}.

We assume the bath to be Gaussian, meaning that, under evolution by $H_B$ from the state $\rho_{\mathcal B}$, the correlation functions of $\{B_\alpha\}$ 
satisfy Wick's theorem~\cite{SM}. 
As a result, the bath is fully characterized by its correlation function, $J_{\alpha\beta}(t-s)\coloneq\Tr[\rho_{\mathcal B}{\hat{B}_{\alpha}(t)\hat{B}_\beta(s)}]$, where   $\hat{B}_{\alpha}(t)=e^{iH_B t}B_{\alpha}e^{-iH_B t}$~\cite{nonstationarybaths}.
To have a valid expansion scheme to at least first order {below}, we {assume that}   $\norm{\bJ(t)}_{1,1}$ and $\norm{t \bJ(t)}_{1,1}$ are Lebesgue-integrable on the interval $\R_+=(0,\infty)$. 
Here and below     $\mathbf{J}(t)$ denotes the matrix whose $\alpha,\beta$-entry is given by     $J_{\alpha\beta}(t)$, and 
$\norm{\mathbf{M}}_{1,1}=\sum_{a,b}|M_{ab}|$ for a matrix $\mathbf M$ with entries $M_{ab}$.

{We are interested in  obtaining   the reduced density matrix of the system, $\rho(t)\coloneq \Tr_{\mathcal B}[\rho_{\mathcal S\mathcal B}(t)]$, which  determines the evolution of   {expectation values of all system observables}.}     
{Although} the evolution of $\rho(t)$ is generally non-Markovian{,} 
our goal is to obtain a  differential equation{---a MQME---}for $\rho(t)$ which approximates its true evolution as accurately as possible.

{\it Correlation timescales of {the} bath}---Key for our approximations are the characteristic magnitude and decay timescales of $\mathbf{J}(t)$. We parameterize these as follows:
\begin{definition}\label{def: interaction rate and corr. time}
We define the \emph{interaction rate}, $\Gamma $, and \emph{bath correlation moments}, $\{\mu_i\}$, as
    \begin{equation}
	\Gamma =4\gamma \!\int_{\R_+} \!\!\!\!\!\!dt\norm{\mathbf{J}(t) }_{1,1},\quad \!\mu_i\!=\!\frac{\int_{\R_+}\!\!dt\norm{\bJ(t)}_{1,1}\!t^i}{\int_{\R_+} dt\norm{\bJ(t)}_{1,1}} \text{ for }i\in\mathbb{N}.\label{eq:gamma_mu_def}
\end{equation}
\end{definition}
Here $\Gamma $ defines  a scale for the {coupling strength} between the system and the bath~\cite{gammaproperty}, and $\{\mu_i\}$  a hierarchy of 
correlation timescales for the  bath. 
For simplicity, we encompass this hierarchy in a single timescale:
\begin{definition}[\it Correlation time]\label{def: tau}
    We define the bath correlation time $\tauexp$  to be the smallest  timescale such that 
\begin{equation}
\label{eq:taudef}
{\mu_i<i! \tauexp^i\quad \text{ for } i=1,...,
\left\lceil \frac{2}{\sqrt{\Gamma \tau}}\right\rceil}.
\end{equation}
with $\lceil\cdot\rceil$ and $\lfloor \cdot\rfloor$ denoting the ceiling and floor functions. {Note that $ \mu_1\leq \tau<\infty$ given our  assumptions on $\mathbf J(t)$.
In particular, for a bath with an exponentially decaying correlation function, where $\norm{{\boldsymbol J}(t)}_{1,1}\leq  ce^{- k |t|}$ for some constants $c$ and $k$, we have $\Gamma \tauexp\leq 4\gamma c/k$.} \end{definition}

{\it Preliminaries---}
We now proceed to derive an approximate MQME for  $\rho(t)$ with controlled error bounds. We begin by introducing some convenient notation.

First, our treatment makes use of superoperator notation,  which highlights that the operators on the system can themselves be 
viewed as  vectors in a Hilbert space.
In the following, we  therefore sometimes represent an  operator, $A$ as a ket: $\kett{A}$, {and} 
{denote} the {Hilbert-Schmidt} inner product {by} $\bbrakett{A\vert B}\coloneq\text{Tr}(A^\dagger B)$~\cite{hsproduct}. 
As a rule of thumb, we let calligraphic script denote superoperators, defined as linear operators acting on operator space. 
An important exception are the superoperators corresponding to  left- and right-multiplication by some given operator $O$. We  denote these two by $O^L$ and $O^R$ respectively such that : $ 
        O^L\!\kett{A}\coloneqq \kett{OA}$ and $
        O^R\!\kett{A}\coloneqq \kett{AO}.
$

Secondly,  we shall work  in the interaction picture, which is  reached through the rotating frame transformation $U_{\rm S}(t)e^{-iH_B t}$, where $U_{\rm S}(t)=\mathcal Te^{-i\int_0^t ds H_{\rm S}(s)}$, with $\mathcal T$ denoting time-ordering~\cite{ipdef}. In the interaction picture, the Hamiltonian of the combined system reads $\hat{H}(t)=\sqrt{\gamma} \sum_{\alpha}\hat{X}_\alpha(t) \hat{B}_\alpha(t)$, where $\hat{X}_{\alpha}(t)=U_S^\dagger(t)X_{\alpha}U_S(t)$. 
Below, we use the $\hat{\cdot}$ accent to indicate quantities in the interaction picture.

Our derivation begins  at the exact equation of motion for $\hat \rho(t)$ that results from the   Schr\"odinger equation in the interaction picture. In superoperator notation, this reads 
\begin{equation}
\label{EqSchrodinger}
	\partial_t \kett{\hat{\rho}(t)}=-i\bbra{I_{\mathcal B}} \hat{\mathcal{H}}(t)\hat{\mathcal{U}}(t,t_0)\kett{\rho_0}\kett{\rho_{\mathcal B}},
\end{equation}
where $\hat{\mathcal{H}}(t) = \hat H^L(t)-\hat H^R(t)$ is the interaction picture Liouvillian of the combined system, and  $\hat{\mathcal{U}}(t,t_0)=\mathcal T e^{-i\int _{t_0}^t dt' \hat{\mathcal H}(t')}$ is the unitary evolution superoperator it generates. 
We can   rewrite \eqref{EqSchrodinger} using Wick's theorem,  to obtain~
\cite{nathan2020universal}
\begin{equation}\label{Eq2}
	\begin{aligned}
		\bbra{I_{\mathcal B}} \hat{\mathcal{H}}(t)&\hat{\mathcal{U}}(t,\tau_0)\kett{\rho_0}\kett{\rho_{\!B}}\!=-\gamma
        \int_{t_0}^t\!\! dsJ_{\nu\mu}^{\alpha\beta}(t-s)\\&\times \bbra{I_{\mathcal B}} \hat{X}_\alpha^\nu(t)\hat{\mathcal{U}}(t,s)\hat{X}_\beta^\mu(s)\hat{\mathcal{U}}(s,\tau_0)\kett{\rho_0}\kett{\rho_{\mathcal B}}.
	\end{aligned}
\end{equation}
where $J_{\alpha\beta}^{\nu\mu }(t-s)\coloneq \eta_\mu \eta_\nu\bbra{I_{\mathcal B}}\hat{B}_\alpha^\nu(t)\hat{B}_\beta^\mu(s)\kett{\rho_{B}}$, with $\eta_{L}=-\eta_R=1$, 
are the superoperator bath correlation functions. In Eq.~\eqref{Eq2} and below, we use an Einstein summation convention where we implicitly sum over indices $\alpha,\beta,\mu,\nu$ whenever they appear more than once in a term.
The above expression is formally exact and serves as a starting point for our  approximations.

{\it Generalized Born Approximation---}
Our first approximation is an expansion  that expresses $\partial_t \rho(t)$ in terms of an explicit memory kernel { up to a bounded residual of  any desired order in} 
$\gamma$. 
The first-order  expansion is 
obtained by using the identity
$ 
	\hat{\mathcal{U}}(t,s)=1-i\int_{s}^{t}dt'\hat{\mathcal{H}}(t')\hat{\mathcal{U}}(t',s),
$ 
 in Eq.~\eqref{Eq2} and discarding the {part} resulting from the second term above. This is equivalent to the  conventional Born-approximation~\cite{gardiner_quantum_2004,breuer_theory_2007}. 
Refs.~\cite{mozgunov_completely_2020,nathan2020universal} showed that the error of this approximation is bounded by $\Gamma ^2 \mu_1$. 
Here we generalize this idea: instead of discarding the residual after the first iteration, we keep it and perform the  substitution recursively. The $n$th-order expansion  is obtained  by neglecting the residual after $n$ such recursive substitutions. We term this the {\it $n$th order Born approximation}. 
{The approximation 
yields an explicit memory kernel, 
motivating the following definition \cite{SM}:} 
\begin{definition}[\it Memory kernel]\label{def: Born Kernel}
    We define the $n$th order  memory kernel as 
\begin{align}
        \label{eq: Born Kernel def}
         \mathcal{K}_n(t,s)&\coloneq\sum_{k=1}^{n}(-1)^k\int_{t_0}^{t} \!\!\!ds_1\!\!\int_{s_1}^{t}\!\!\!dt_2\!\!\int_{t_0}^{t_2}\!\!\!\!\!\!ds_2\int_{\min(s_1,s_2)}^{t_2}\!\!\!\!\!\!\!\!\!dt_3\int_{t_0}^{t_3}\!\!\!\!\!\!ds_3\ldots\notag \\&\times \int_{\min(s_1,...,s_{k-1})}^{t_{k-1}}\!\!\!\!\!dt_k\int_{t_0}^{t_{k}}\!\!\!\!\!\!ds_k\delta(s-\min(s_1,...,s_k))\notag \\&\times\mathcal{T}\Big[\prod_{i=1}^{k}J_{\mu_i\nu_i}^{\alpha_i\beta_i}(t_i-s_i)\hat{X}^{\mu_i}_{\alpha_j}(t_j)\hat{X}^{\nu_i}_{\beta_j}(s_j)\Big].
    \end{align}
\end{definition}
Our first main result is a bound on the residual correction to the evolution generated by this memory kernel:
\begin{proposition}\label{prop:higher_order_born}
    Let $\hat \rho(t)$ be the solution of Eq.~\eqref{EqSchrodinger}. Then
    \begin{equation}\label{EqBornApprox1}
	\partial_t\kett{\hat{\rho}(t)}=\int_{t_0}^{t}\!ds\,\mathcal{K}_n(t,s)\kett{\hat{\rho}(s)} + \kett{{\xi^{\rm B}_n}(t)},
\end{equation}
with  $\norm{{{\xi^{\rm B}_n}(t)}}_{\tr}\leq \varepsilon_n$, uniformly in $t$, where
\begin{equation}
 \varepsilon_n \coloneq \Gamma\!\!\!\!\!\!\!\! \sum_{(q_i)_{i=1}^{n}\in {\mathcal{W}_{n}^n}}\prod_{i=1}^{n}(\Gamma \mu_{q_i}).
\label{eq:epsilonbdef}\end{equation} %
Here $\mathcal{W}^n_m$ denotes the set of weak compositions of $n$ into $m$ parts, i.e, the set of  tuples  of non-negative, not necessarily distinct, integers $(q_1,\ldots q_m)$ for which $\sum_{i} q_i = n$.
\end{proposition}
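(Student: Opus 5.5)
We prove the statement by iterating Eq.~\eqref{Eq2}. Substituting $\hat{\mathcal U}(t,s)=1-i\int_s^t dt'\,\hat{\mathcal H}(t')\hat{\mathcal U}(t',s)$ once into Eq.~\eqref{Eq2} splits the right-hand side into two pieces. The first piece contains $\bbra{I_{\mathcal B}}\hat X^\nu_\alpha(t)\hat X^\mu_\beta(s)\hat{\mathcal U}(s,t_0)\kett{\rho_0}\kett{\rho_{\mathcal B}}$. Since $\hat X$ acts only on the system, this equals $\hat X^\nu_\alpha(t)\hat X^\mu_\beta(s)\kett{\hat\rho(s)}$, which is exactly the $k=1$ term of $\mathcal K_n$ with $s=s_1=\min(s_1)$. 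The second piece has the form $\bbra{I_{\mathcal B}}(\cdots)\hat{\mathcal H}(t_2)\hat{\mathcal U}(t_2,\cdot)\cdots\kett{\rho_{\mathcal B}}$. To it we apply the Wick contraction of Eq.~\eqref{Eq2} again, now for $\hat{\mathcal H}(t_2)$. This produces a new pair $(t_2,s_2)$ with correlation $J(t_2-s_2)$ and $s_2\in[t_0,t_2]$. The system operators must then be time-ordered, which is what the $\mathcal T$ in Eq.~\eqref{eq: Born Kernel def} encodes.

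The key point is the choice of where to expand next. After $k$ contractions, the earliest time at which all system insertions have been applied is $\min(s_1,\dots,s_k)$. For $t'$ below this time, the product of insertions acting on $\hat{\mathcal U}(t',t_0)\kett{\rho_0}\kett{\rho_{\mathcal B}}$ factorizes against the bath. We therefore expand $\hat{\mathcal U}$ on $[\min(s_1,\dots,s_k),t_k]$, giving the lower limit $\min(s_1,\dots,s_{k})$ for $t_{k+1}$. Each zeroth-order term yields $\kett{\hat\rho(\min s_i)}$, hence the delta function. I would set this up as an induction on $n$. The hypothesis is that $\partial_t\hat\rho$ equals the $\mathcal K_n$ term plus a remainder $R_n(t)$. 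Here $R_n(t)$ is an $(n+1)$-fold nested integral of products of $n$ factors $\gamma J$ and one more contraction, acting on a full unitary $\hat{\mathcal U}$ sandwiched by superoperators on $\kett{\rho_0}\kett{\rho_{\mathcal B}}$. The induction step re-expands the last $\hat{\mathcal U}$ and uses Eq.~\eqref{Eq2} (Wick) once more. I expect bookkeeping of the sign $(-1)^k$ and of the time ordering to be routine but tedious.

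The main obstacle is bounding the remainder by $\varepsilon_n$. For the trace norm, every $\hat X^{\mu}_\alpha$ has superoperator norm at most $1$ on trace class, and $\hat{\mathcal U}$ is trace-norm contractive. The partial trace $\bbra{I_{\mathcal B}}$ is also contractive. However, the remaining bath operators are not contracted, so I would instead apply Wick's theorem to the full remainder, or use the contraction property of Eq.~\eqref{Eq2} together with $\norm{\bJ}_{1,1}$. This reduces $\norm{R_n}_{\tr}$ to at most $(4\gamma)^{n+1}$ times integrals of $\prod_i\norm{\bJ(t_i-s_i)}_{1,1}$ over the nested domain. The factor $4$ comes from the sums over $\nu,\mu\in\{L,R\}$.

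The integration domain is controlled as follows. Each $t_{i+1}$ ranges over an interval of length at most $t_i-\min_j s_j\le\sum_{j\le i}(t_j-s_j)$. Writing $u_j=t_j-s_j$ and integrating $s_j$ freely against $\norm{\bJ(u_j)}_{1,1}$, the $n$ interval lengths contribute a product $\prod_{i=1}^n\bigl(\sum_{j\le i}u_j\bigr)$. Expanding this product gives monomials $\prod_j u_j^{q_j}$ with $\sum_j q_j=n$, indexed by weak compositions; the final uncontracted factor supplies $q_{n+1}=0$. Each $u_j^{q_j}$ integrates to $\mu_{q_j}\int\norm{\bJ}_{1,1}$. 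This yields $\Gamma\sum\prod_i(\Gamma\mu_{q_i})$. I would check carefully that expanding the product enumerates each composition at most once, or with multiplicities that are absorbed in the stated sum. If the multiplicities exceed one, I would sharpen the domain bound by ordering the $s_j$ appropriately. Uniformity in $t$ follows because all integrals extend to $\R_+$.
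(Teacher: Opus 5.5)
Your derivation of the expansion and your reduction of $\norm{\xi^{\rm B}_n(t)}_{\rm tr}$ to the nested integral of $\prod_{i=1}^{n+1}4\gamma\norm{\bJ(t_i-s_i)}_{1,1}$ follow the paper. In the derivation you re-expand the last $\hat{\mathcal U}$ on $[\min(s_1,\dots,s_k),t_k]$ and Wick-contract, as the paper does.

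The gap is the final counting step, which is the point you left open. You bound the length of the $t_{i+1}$-interval by $\sum_{j\le i}u_j$ and expand $\prod_{i=1}^{n}\bigl(\sum_{j\le i}u_j\bigr)$. This gives $n!$ monomials, with genuine multiplicities: already for $n=3$ the monomial $u_1^2u_2$ occurs twice. By contrast, $\varepsilon_n$ contains each of the $\binom{2n-1}{n}\le 4^n$ weak compositions exactly once. Your bound is therefore not dominated by $\varepsilon_n$ as a function of the moments. For $\mu_q=T^q$ it equals $n!\,\Gamma^{n+1}T^n$, whereas $\varepsilon_n=\binom{2n-1}{n}\Gamma^{n+1}T^n$, which is strictly smaller for all $n\ge 6$. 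Since every product $\prod_i\mu_{q_i}$ is the same there, no regrouping of multiplicities can rescue the argument. Your fallback of ordering the $s_j$ is not available either, because the $s_j$ are not ordered on the integration domain.

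The missing idea is that the $t_j$ \emph{are} ordered. On the domain $t_{i+1}\le t_i$, so $t_i-\min_{j\le i}s_j=\max_{j\le i}(t_i-s_j)\le\max_{j\le i}u_j$: a maximum, not a sum. The paper uses exactly this. It bounds the innermost pair by $\Gamma\max_{j\le n}u_j$. It then runs a recursion based on $\max_{j\le k}u_j^{l}\le\max_{j<k}u_j^{l}+u_k^{l}$, which produces the sum over $\mathcal W^n_n$.

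Once you have the maximum, there is an even shorter route. Use $\prod_{i=1}^n\max_{j\le i}u_j\le(\max_{j\le n}u_j)^n\le\sum_{j=1}^n u_j^n$. This gives $\norm{\xi^{\rm B}_n(t)}_{\rm tr}\le n\,\Gamma^{n+1}\mu_n$. That is the sub-sum of $\varepsilon_n$ over compositions with a single nonzero part (recall $\mu_0=1$), so the claimed bound follows.
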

For our next step, we use the following  Lemma for the decay  of the memory kernel~\cite{SM}:
\begin{lemma}
\label{ref:bk_lemma}
	Let $\mathcal{K}_n$ be the Born kernel of order $n$. Then  $\int_{t_0}^{t} ds\norm{\mathcal{K}_n(t,s)}(t-s)^j\leq M^{\rm B}_n[j]$
    with
	\begin{equation}
        M^{\rm B}_n[j]\coloneq \sum_{k=0}^{n-1}\sum_{(q_i )_{i=0}^k\in {\mathcal{W}^{k+j}_{k+1}}}\prod_{i=0}^{k}(\Gamma \mu_{q_i}).
\label{eq:mdef}	\end{equation}
\end{lemma}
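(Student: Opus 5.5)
The plan is to bound each term $k=1,\dots,n$ of the kernel separately and then identify the bound for the $k$th term with the $(k-1)$th summand of $M^{\rm B}_n[j]$.

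\textbf{Step 1: operator norms.} Since $\norm{X_\alpha}=1$, each superoperator $\hat X^\nu_\alpha(t)$ (left or right multiplication) has norm at most one, and time ordering does not change the norm of a product. So the operator factor in Eq.~\eqref{eq: Born Kernel def} has norm at most one. Moreover, $\sum|J^{\alpha\beta}_{\mu\nu}(t)|$ summed over all indices, including $\mu,\nu\in\{L,R\}$, is at most $4\norm{\bJ(t)}_{1,1}$, using $|J_{\alpha\beta}(-t)|=|J_{\beta\alpha}(t)|$ for the reversed orderings. This is where the factor $4$ in $\Gamma$ comes from, and it will carry the overall factor $\gamma$ hidden in the prefactor of Eq.~\eqref{Eq2}.

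\textbf{Step 2: the $k$th term.} Integrating $\norm{\mathcal K_n(t,s)}(t-s)^j$ over $s$ removes the $\delta$-function and leaves
\[
\int \prod_{i=1}^k 4\gamma\norm{\bJ(t_i-s_i)}_{1,1}\,(t-\min_i s_i)^j
\]
over the nested region. Write $u_i=t_i-s_i\ge 0$. The nesting $\min(s_1,\dots,s_{i-1})\le t_i\le t_{i-1}$ means that $t_i$ lies in an interval of length at most
\[
t_{i-1}-\min_{l<i}s_l\le \sum_{l<i}u_l .
\]
This follows by induction, since each interval $[s_l,t_l]$ overlaps the chain of earlier intervals, so their union is connected and its length is at most $\sum_l u_l$. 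For the same reason, $t-\min_i s_i\le \sum_{i=1}^k u_i$.

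\textbf{Step 3: integrate outward-in.} Change variables from $(t_i,s_i)$ to $(t_i,u_i)$, extend every $u_i$ integral to $\R_+$, and bound the $t_i$ integrals for $i\ge2$ by the lengths $\sum_{l<i}u_l$. The $k$th term is then bounded by
\[
\int_{\R_+^k}\prod_i 4\gamma\norm{\bJ(u_i)}_{1,1}\,\Big(\sum_l u_l\Big)^{j}\prod_{i=2}^{k}\Big(\sum_{l<i}u_l\Big).
\]
Bound each $\sum_{l<i}u_l\le\sum_{l=1}^k u_l$ and expand the resulting $(\sum_l u_l)^{j+k-1}$ by the multinomial theorem. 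Each monomial $\prod_l u_l^{q_l}$ integrates to $\prod_l\Gamma\mu_{q_l}$. The sum runs over weak compositions of $k-1+j$ into $k$ parts, which after reindexing $k\to k+1$ matches $\mathcal W^{k+j}_{k+1}$ in Eq.~\eqref{eq:mdef}.

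\textbf{Main obstacle: the multinomial coefficients.} The stated $M^{\rm B}_n[j]$ has no multinomial coefficients, so the crude expansion above overcounts. The fix is to handle the product $\prod_{i\ge2}\sum_{l<i}u_l$ more carefully. Expanding it gives a sum over choices $l(i)<i$, which correspond to recursive trees; each choice attaches one power of $u_{l(i)}$. The $(t-\min s)^j$ factor needs the same care. To remove the coefficients, I will try two things:
\begin{itemize}
\item either use the $1/(\text{ordering})$ volume from the nested time ordering, bounding the $t$-integrals by simplex volumes rather than box lengths;
\item or absorb the coefficients by counting how many tree labelings give each exponent tuple, as in the proof of Proposition~\ref{prop:higher_order_born}, where the same weak-composition structure appears.
\end{itemize}
I expect the combinatorial bookkeeping (tree labelings versus compositions) to be the hard step. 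The geometric bound in Step 2 is routine.
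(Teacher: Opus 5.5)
Your proposal does not prove the lemma. Step~3 ends with a bound carrying multinomial coefficients, and neither suggested remedy is carried out.

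\textbf{The volume factors.} This part of the obstruction is avoidable. The nesting forces $t_1\ge t_2\ge\cdots\ge t_k$, so $t_{i-1}-s_l\le t_l-s_l$ for every $l<i$. The interval available to $t_i$ therefore has length $t_{i-1}-\min_{l<i}s_l\le\max_{l<i}u_l$, not merely $\le\sum_{l<i}u_l$. This is the idea behind the paper's supplementary nested-integral lemma. Write $I_{k,l}$ for the nested integral with weight $\max_{i\le k}u_i^l$. Integrate out the innermost pair $(t_k,s_k)$, using $\max_{i\le k}u_i^l\le\max_{i<k}u_i^l+u_k^l$ and the length bound $\max_{i<k}u_i$. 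This gives $I_{k,l}\le\Gamma I_{k-1,l+1}+\Gamma\mu_l I_{k-1,1}$, and solving the recursion yields the weak-composition sum with unit coefficients. No tree counting is needed for these factors.

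\textbf{The weight $(t-\min_i s_i)^j$.} Here your Step~2 bound $t-\min_i s_i\le\sum_i u_i$ is the correct one. The paper instead asserts $t-\min_i s_i\le\max_i(t_i-s_i)$ and reuses the same lemma with $l=j$. That inequality is false, because $t=t_1$ is the \emph{latest} time, so the argument above does not apply. For instance, $t_1=2$, $s_1=1$, $t_2=1$, $s_2=0$ lies in the domain when $t_0\le0$, and gives $2$ versus $1$.

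So the coefficients you are fighting are genuinely present, and neither simplex volumes nor tree labelings can remove them. For $k=2$, set $u_1=t-s_1$, $v=t-t_2\in[0,u_1]$, $u_2=t_2-s_2$. Then $t-\min(s_1,s_2)=\max(u_1,v+u_2)$. At $u_1=u_2=u$ the $v$-integral equals $u^{j+1}(2^{j+1}-1)/(j+1)$. The matching integrand of $M^{\rm B}_n[j]$ is $\sum_{q=0}^{j+1}u_1^qu_2^{j+1-q}=(j+2)u^{j+1}$, which is smaller once $j\ge4$.

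\textbf{The stated bound itself fails.} Cancellations inside $\norm{\mathcal K_n(t,s)}$ do not rescue it. Take pure dephasing, $H_{\mathcal S}=0$ and $X=\sigma_z$, with the real, even, positive-definite $J(u)=(1-|u|/\tau_c)_+$, $n=2$, and $t-t_0>2\tau_c$.
\begin{itemize}
\item For $t-s>\tau_c$ only the $k=2$ term of $\mathcal K_2(t,s)$ survives. It equals $\gamma^2G(t,s)(X^L-X^R)^4$ with $G\ge0$, so $\norm{\mathcal K_2(t,s)}=16\gamma^2G(t,s)$ and the triangle inequality is saturated.
\item The region $t-s\in(\tau_c,2\tau_c)$ then contributes of order $2^j\Gamma^2\tau_c^{j+1}$, up to powers of $j$.
\item By contrast, $M^{\rm B}_2[j]=O\big(\Gamma\tau_c^j(1+\Gamma\tau_c)/j^2\big)$.
\end{itemize}
Hence the bound fails for large $j$, and no completion of your plan can reach it.

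\textbf{What your route does give.} After replacing $\sum_{l<i}u_l$ by $\max_{l<i}u_l$, your argument yields a bound that keeps the multinomial weights coming from $(\sum_i u_i)^j$. A correct version of this lemma has to take that form.
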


{\it Generalized Markov approximation---} While Eq.~\eqref{EqBornApprox1} gives an explicit equation of motion for $\hat \rho(t)$, it suffers from being non-Markovian with the right-hand-side depending on the history of the state.  To remedy this, here we introduce our second approximation: a generalized Markov approximation that enables a rewriting of Eq.~\eqref{EqBornApprox1} in terms of  a MQME and a bounded residual to arbitrary order in $\gamma$. 
To obtain this expansion, we first note from \eqref{EqBornApprox1} that\\ $$\kett{\hat{\rho}(s)\!}\!=\!\kett{\hat{\rho}(t)\!}\!-\!\int_{s}^{t}\!\!da\!\!\int_{t_0}^{a}\!db\mathcal{K}_n(a,b)\kett{\hat{\rho}(b)\!}\!-\!\int_s^t\!\! db\kett{{\xi^{\rm B}_n}(b)}.$$
Next, we recursively substitute the above expression in the place of $\kett{\hat{\rho}(b)}$ above. 
The $m$th order of our expansion is obtained by discarding the last residual correction after $m-1$ iterations, and substituting the resulting expression in Eq.~\eqref{EqBornApprox1}---we term this approximation the {\it $m$th order Markov approximation}. 
This procedure results in a  MQME  with dissipator given as follows: 
\begin{definition}[\it Dissipator]
	We define the order $(m,n)$ dissipator as
    	\begin{equation}\begin{aligned}
		\Delta_{mn}(t)\coloneqq  {\int_{t_0}^t\!}  &ds_0\, \mathcal{K}_n(t,s_0) \sum_{k=0}^{m-1}(-1)^k\\ &\times\prod_{j=1}^k\left[\int_{s_{j-1}}^{t} \!\!\!dt_j\!\!\int_{t_0}^{t_j}\!ds_j\, \mathcal{K}_n(t_j,s_j)\right].\label{eq:bm_dis}
        \end{aligned}
	\end{equation}
    \end{definition}
Note that $\Delta_{11}(t)$ is the {dissipator of the conventional BRE} in the interaction picture. In this sense  $\Delta_{m{n}}(t)$  provides a generalization of the BRE dissipator to arbitrary orders of the Born $(n)$ and Markov ($m$) approximations. We thus refer to the MQME $\partial_t \kett{\hat \rho(t)}=\Delta _{mn}(t)\kett{\hat \rho(t)}$ as the {\it $(m,n)$th-order order BRE.} As our second main result, we obtain an error bound for this equation:
\begin{proposition}[\it Tightest bound]\label{proposition:TightestBound}  
Let $\hat \rho(t)$ be the solution of Eq.~\eqref{EqSchrodinger}.  Then
     \begin{equation}\label{EqBornMarkovApproxm}
		\partial_t\kett{\hat{\rho}(t)}=\Delta_{mn}(t)\kett{\hat{\rho}(t)} 
+\kett{\xi_{mn}(t)},
	\end{equation}
where, uniformly in $t$,
\begin{equation}
\begin{aligned}
    \norm{{\xi_{mn}(t)}}_{\rm tr}&\leq  \sum_{(q_i)_1^{m} \in \mathcal W^{m}_{m+1} } 
    \prod_{l=1}^{m+1}\binom{l-1-\sum_{i=1}^{l-1}q_{i}}{q_{l}} M_n[q_{l}]
    \\ & +\varepsilon_{n}\sum_{k={0}}^{m}\sum_{(q_i)_1^{{k}} \in \mathcal W_{k}^k}
    \prod_{{l}=1}^{k} \binom{l-\sum_{i=1}^{l-1}q_{i}}{q_{l}}M_n[q_{l}]. 
        \label{eqa:tight_markov_error_bound}
        \end{aligned}
\end{equation}
Here  $\varepsilon_n$  and $M_n[q]$ are defined in Eqs.~\eqref{eq:epsilonbdef}~and~\eqref{eq:mdef}, and  the $k=0$ term of the second sum is $1$ by convention. 
\end{proposition}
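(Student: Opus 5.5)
We start from Proposition~\ref{prop:higher_order_born}, which gives $\partial_t\kett{\hat\rho(t)}=\int_{t_0}^t ds\,\mathcal K_n(t,s)\kett{\hat\rho(s)}+\kett{\xi^{\rm B}_n(t)}$ with $\norm{\xi^{\rm B}_n}_{\tr}\le\varepsilon_n$. Integrating this identity from $s$ to $t$ gives
\[
\kett{\hat\rho(s)}=\kett{\hat\rho(t)}-\int_s^t\! da\int_{t_0}^a\! db\,\mathcal K_n(a,b)\kett{\hat\rho(b)}-\int_s^t\! db\,\kett{\xi^{\rm B}_n(b)}.
\]
We insert this into the memory integral and iterate $m-1$ times, substituting each time for the innermost $\kett{\hat\rho(s_j)}$. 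After $m-1$ substitutions the exact identity has three kinds of terms.
\begin{enumerate}[label=(\roman*)]
\item Terms proportional to $\kett{\hat\rho(t)}$. These sum exactly to $\Delta_{mn}(t)\kett{\hat\rho(t)}$ by Eq.~\eqref{eq:bm_dis}, since the $k$-th substitution produces the product of $k$ nested integrals $\int_{s_{j-1}}^t dt_j\int_{t_0}^{t_j}ds_j\,\mathcal K_n(t_j,s_j)$ with sign $(-1)^k$.
\item One remaining term of depth $m$, still containing $\kett{\hat\rho(s_m)}$.
\item Terms with $k=0,\dots,m$ kernel factors followed by an integral $\int_{s_k}^{t_k}db\,\kett{\xi^{\rm B}_n(b)}$, together with the original $\kett{\xi^{\rm B}_n(t)}$, which is the $k=0$ term.
\end{enumerate}
The residual $\xi_{mn}$ is (ii) plus (iii). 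We bound it in trace norm using $\norm{\hat\rho(s)}_{\tr}=1$, $\norm{\xi^{\rm B}_n}_{\tr}\le\varepsilon_n$, and the operator-norm bounds on $\mathcal K_n$. This produces the first and second sums of Eq.~\eqref{eqa:tight_markov_error_bound}, respectively.

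The core estimate is a bound on nested integrals of the form
\[
I_k=\int_{t_0}^t ds_0\norm{\mathcal K_n(t,s_0)}\prod_{j=1}^k\int_{s_{j-1}}^t dt_j\int_{t_0}^{t_j}ds_j\norm{\mathcal K_n(t_j,s_j)}.
\]
The plan is to integrate from the innermost level outward, using only Lemma~\ref{ref:bk_lemma}, i.e.\ $\int ds\norm{\mathcal K_n(t,s)}(t-s)^j\le M_n[j]$.

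The key step is to convert the length of each $t_j$-integration interval into powers of the kernel time-spans. Since $t_j\in[s_{j-1},t]$, we have $t-s_{j-1}\le\sum_{i<j}(t_i-s_i)$, taking $t_0\equiv t$ in this chain. Each $t_j$-integral therefore yields a factor of at most $\sum_{i<j}(t_i-s_i)$. We expand the resulting product multinomially. The $j$-th time-span $(t_j-s_j)$ then appears with some power $q_j$, and Lemma~\ref{ref:bk_lemma} turns it into $M_n[q_j]$.

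A cruder bound would proceed level by level. Level $l$ has $l-1-\sum_{i<l}q_i$ unit factors of interval length that have not yet been assigned to earlier spans. Of these, $q_l$ are charged to the current span, which gives the binomial $\binom{l-1-\sum_{i<l}q_i}{q_l}$. Doing the assignment this way, integrating outward with the kernels' time-spans absorbing the length factors, reproduces exactly the product of binomials times $M_n[q_l]$. The constraint $\sum q_i=m$ over $m+1$ parts comes from there being $m$ interval-length integrations at depth $m$. In terms of (iii), the extra $\int db$ over the residual contributes one additional length factor. This shifts the upper entry of the binomial by one and gives $\varepsilon_n$ times the sum over $\mathcal W^k_k$.

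The main obstacle is this combinatorial bookkeeping. Each interval length must be distributed over the preceding time-spans so that the weights match exactly the binomial products, rather than the larger naive factor $(\sum_i x_i)^k$.

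I would first attempt induction on the nesting depth. Define $F_l(x)$ as the bound for the inner integral given an available length $x$. The inductive claim is that $F_l$ is a polynomial in $x$ whose coefficients are the stated sums. The inductive step then needs only Lemma~\ref{ref:bk_lemma} and the binomial expansion of $(x+(t_l-s_l))^{r}$. If the exact binomial coefficients do not come out directly, I would adjust which spans absorb which length factors until the recursion matches Eq.~\eqref{eqa:tight_markov_error_bound}.
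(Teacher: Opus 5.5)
You follow the paper's proof. You use the same split of $\xi_{mn}$ into the depth-$m$ remainder and the $\xi^{\rm B}_n$ terms, which gives $\lVert\xi_{mn}\rVert_{\rm tr}\le f_n[m,0]+\varepsilon_n\bigl(1+\sum_{k<m}f_n[k,1]\bigr)$, followed by an innermost-outward recursion built on $t-s_m\le(t-s_{m-1})+(t_m-s_m)$, the binomial theorem and the moment bound $M_n[j]$; that recursion is exactly your fallback induction on $F_l(x)$. Your primary telescoped route $t-s_{j-1}\le\sum_{i<j}(t_i-s_i)$ needs no adjustment either: the coefficient of $\prod_i x_i^{q_i}$ in $\prod_j\sum_{i<j}x_i$ counts the ways of assigning each interval length to a strictly outer span, which is precisely the stated product of binomials (note that the residual integral runs over $[s_k,t]$, not $[s_k,t_k]$).
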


Proposition~\ref{proposition:TightestBound} allows one to evaluate a bound on the correction of the $(m,n)$th order BRE from the moments $\mu_i$. 
Indeed, in Fig.~\ref{fig:1}(a)  we   plot the right-hand side  above for various $m,n$, after using $\mu_i\leq i!\tau   ^i$. Note   that
the computational complexity of the bound in Proposition~\ref{proposition:TightestBound} grows rapidly with $m$ and $n$ due to the exponentially growing number of weak compositions.
Using additional {combinatorial} inequalities, we obtain a simpler bound:
\begin{lemma}[\it Simple bound]\label{lemma:simple_bound}
    Let $\xi_{mn}(t)$ denote the correction to the $(m,n)$th order BRE as  in Eq.~\eqref{EqBornMarkovApproxm} and
    $ \tau_{0}\coloneq \max_{i=1\ldots m+n-1}(\mu_i /i!)^{1/i}. $
    Then, if $(m+n-1)\Gamma   \tau_{0}< 1/4$,
    \begin{equation}
    \begin{aligned}
              \frac{\norm{{\xi_{mn}(t)}}_{\rm tr}}{\Gamma}&\leq\sum_{k=0}^{m}(k!)^2(\Gamma \tau_{0})^{k}\frac{ \delta_{mk}+ n!(4\Gamma \tau_{0})^{n}}    {(1-4(m+n-1)\Gamma \tau_{0})^{2k+1}}. 
        \end{aligned}
    \end{equation}
\end{lemma}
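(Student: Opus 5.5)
Starting from Proposition~\ref{proposition:TightestBound}, the plan is to bound every ingredient in terms of the single scale $\tau_0$. Since all moments involved have index at most $m+n-1$, we may substitute $\mu_i\le i!\,\tau_0^i$ throughout. The proof then reduces to three estimates: a bound on $\varepsilon_n$, a bound on $M_n[q]$, and a bound on the combinatorial sums over weak compositions with their binomial weights.

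\textbf{Bounding $\varepsilon_n$.} Using $\mu_q\le q!\,\tau_0^q$ gives
\begin{equation*}
\prod_i \Gamma\mu_{q_i}\le \Gamma^n\tau_0^{n}\prod_i q_i! \le (\Gamma\tau_0)^n\, n!,
\end{equation*}
because $\prod_i q_i!\le (\sum_i q_i)!$. The number of weak compositions of $n$ into $n$ parts is $\binom{2n-1}{n}\le 4^n$. Together these give $\varepsilon_n\le \Gamma\, n!\,(4\Gamma\tau_0)^n$, which accounts for the $n!(4\Gamma\tau_0)^n$ term.

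\textbf{Bounding $M_n[j]$.} In the same way, $\prod_{i=0}^k\Gamma\mu_{q_i}\le (\Gamma\tau_0)^{k+1}\tau_0^{j-1}\,(k+j)!$. The number of compositions is $\binom{2k+j}{k}$. The useful form is $M_n[j]\le \Gamma\,\tau_0^{j}\, j!\, C_j$, with $C_j$ a geometric sum in $4(k+j)\Gamma\tau_0$. To obtain it, write
\begin{equation*}
\frac{(k+j)!}{j!}\binom{2k+j}{k}\le \bigl(4(k+j)\bigr)^k ,
\end{equation*}
and use $k+j\le m+n-1$. The sum over $k$ is then dominated by $\sum_k \bigl(4(m+n-1)\Gamma\tau_0\bigr)^k\le \bigl(1-4(m+n-1)\Gamma\tau_0\bigr)^{-1}$. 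This is where the hypothesis $(m+n-1)\Gamma\tau_0<1/4$ enters. The result is
\begin{equation*}
M_n[j]\le \frac{\Gamma\,\tau_0^j\, j!\,(4(m+n-1))^{j}}{1-4(m+n-1)\Gamma\tau_0},
\end{equation*}
or a similar expression with the $j$-dependent powers absorbed.

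\textbf{The outer composition sums.} Insert these into \eqref{eqa:tight_markov_error_bound}. A term with $k$ slots has total degree $\sum_l q_l=k$ and contains $k$ or $k+1$ factors of $M$. Its $\Gamma$ and $\tau_0$ powers therefore combine to $\Gamma(\Gamma\tau_0)^k$, multiplied by a product of factors $1/(1-x)$, where $x=4(m+n-1)\Gamma\tau_0$. The remaining task is to bound
\begin{equation*}
\sum_{q\in\mathcal W_k^k}\ \prod_l\binom{l-\sum_{i<l}q_i}{q_l}\,q_l!
\end{equation*}
by $(k!)^2$ times geometric factors. The approach is to note that $\binom{a}{q}q!\le a^q$ with $a\le l\le k$, so each product is at most $k^{k}\le e^k k!$. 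The $4^k$ choices of compositions and the leftover $(4(m+n-1))^{q}$ factors are to be absorbed into further factors $(1-x)^{-1}$, which yields the power $(1-x)^{-(2k+1)}$. The first sum in \eqref{eqa:tight_markov_error_bound}, which has no $\varepsilon_n$, only contributes at total degree $m$; this accounts for the $\delta_{mk}$ term. The second sum contributes $\varepsilon_n$ times the $k$-th term, giving $n!(4\Gamma\tau_0)^n$.

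\textbf{Main obstacle.} I expect the hardest step to be this last piece of bookkeeping: showing that the binomial-weighted, factorial-weighted composition sum is at most $(k!)^2$ times exactly $2k+1$ powers of $(1-x)^{-1}$. My first attempt would be induction on $k$, peeling off the last slot $q_k$. Each step should contribute at most a factor $k^2$ and two geometric series: one from the $M$ estimate and one from summing over $q_k$ weighted by $x^{q_k}$.
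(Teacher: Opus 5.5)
\textbf{The $M_n[j]$ step.} Your estimate of $\varepsilon_n$ is exactly the paper's. Identifying the $\delta_{mk}$ term with the first sum of Proposition~\ref{proposition:TightestBound} is also correct. The $M_n[j]$ step, however, rests on a false inequality. The claim $\frac{(k+j)!}{j!}\binom{2k+j}{k}\le(4(k+j))^k$ fails because $\binom{2k+j}{k}$ grows with $j$. For $k=1$, $j=3$ the left side is $20$ and the right side $16$. This case is in the range you need: for $m=3$, $n=2$ we have $j\le m$, $k\le n-1$, and $4(m+n-1)=16$ as well.

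The paper keeps the growing factor instead: $(k+j)!\binom{2k+j}{k}\le j!\,(j+n-1)^k\,4^k\binom{j+k}{k}$. It then sums with $\sum_{k\ge0}\binom{j+k}{k}y^k=(1-y)^{-(j+1)}$, which gives $M_n[j]\le j!\,\Gamma\tau_0^{j}\,(1-4(j+n-1)\Gamma\tau_0)^{-(j+1)}$. This is the real origin of the exponent in the Lemma. Set $x=4(m+n-1)\Gamma\tau_0$. A product of $a$ such factors with $\sum_l q_l=b\le m$ carries $(1-x)^{-(a+b)}$. That is $(1-x)^{-(2m+1)}$ in the first sum and $(1-x)^{-2k}\le(1-x)^{-(2k+1)}$ in the second. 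Your fallback, with an extra $(4(m+n-1))^{j}$ per factor, would leave an overall $(4(m+n-1))^{k}$ that nothing can remove.

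\textbf{The composition sum (the decisive gap).} You cannot absorb $\Gamma\tau_0$-independent constants such as $4^k$, $e^k$ or $(4(m+n-1))^{q}$ into powers of $(1-x)^{-1}$, because those tend to $1$ as $\Gamma\tau_0\to0$. Concretely, for fixed $n>m$ the right-hand side of the Lemma is $(m!)^2(\Gamma\tau_0)^m[1+O(\Gamma\tau_0)]$. Your bound on the first sum has leading term $S_m(\Gamma\tau_0)^m$, where $S_m=\sum_{q\in\mathcal W^{m}_{m+1}}\prod_l\binom{l-1-\sum_{i<l}q_i}{q_l}q_l!$. So whatever you prove for $S_m$ must itself be at most $(m!)^2$, with no slack. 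Your estimate (each product at most $m^m$, times the number of compositions) gives $2>1$ for $m=1$ and $6\cdot 4=24>4$ for $m=2$; the true values are $S_1=1$ and $S_2=3$. The proposed induction does not help either. Because $\sum_l q_l$ is fixed, every composition carries the same power $(\Gamma\tau_0)^k$, so there is no weight $x^{q_k}$ to sum geometrically.

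\textbf{The missing idea.} What you need is the exact identity $\sum_{q\in\mathcal W^{k}_{k+j}}\prod_l\binom{l-j-\sum_{i<l}q_i}{q_l}=k!$ for $j\in\{0,1\}$. The paper derives it by induction from $\sum_{q_1,\dots,q_p\ge0}\prod_{l=1}^{p}\binom{l-1+a-\sum_{i<l}q_i}{q_l}=p!\,(p+1)^a$. Combine it with a split of the factorial weights:
\begin{itemize}
\item Among compositions concentrated in a single slot, only the one with everything in the last slot has nonzero binomial weight. That weight is $1$, and the composition contributes $\prod_l q_l!=k!$.
\item Every other composition with nonzero weight has at least two positive entries. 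Hence $\prod_l q_l!\le(k-1)!$, using $\alpha!\beta!\le(\alpha+\beta-1)!$ for $\alpha,\beta\ge1$.
\end{itemize}
This gives a weighted sum of at most $k!+(k-1)!\,(k!-1)\le(k!)^2$. The $-1$ is there because the concentrated composition was already counted, and it is needed at $k=1$. Together with $\varepsilon_n\le\Gamma\,n!\,(4\Gamma\tau_0)^n$, this yields the Lemma.
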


Lemma~\ref{lemma:simple_bound}  indicates that the evolution generated by $\Delta_{mn}(t)$  converges {\it asymptotically} with $m$ and $n$ towards the true dynamics. For instance, the {first} term in the numerator above leads to a term that scales as $(m!)^2 (\Gamma \tau_0)^m$, and thus is minimized for a finite, nonzero value of $m$.
Indeed, by picking $m$ and $n$ to take the same $\Gamma\tau$-dependent value $\mexp$, it is possible to bound the right-hand side above with a quantity that {decays} {\it  exponentially} with  $1/\sqrt{\Gamma \tauexp}$. 
This is our third main result: 
\begin{definition} 
    \label{def:mexp}
    We define \begin{equation}
        \mexp\coloneq\left\lfloor \frac{1+4\Gamma \tau}{\sqrt{\Gamma \tau}+8\Gamma \tau}\right\rfloor.
        \label{eq:mexpdef}
    \end{equation}
\end{definition}
\begin{theorem}[\it Exponential accuracy of MQMEs]\label{ThmExpDecay} 
Let $\xi_{\mexp\mexp}(t)$ denote the correction to the $(\mexp,\mexp)$th order BRE as {defined} in Eq.~\eqref{EqBornMarkovApproxm}. Then,
\begin{equation}
   \begin{aligned}
	        \frac{\norm{\xi_{n_*n_*}(t)}_{\rm tr}}{\Gamma}  
            \!\leq \exp\left(\frac{-2}{\sqrt{\Gamma \tau}}\frac{1-\sqrt{\Gamma \tau}-4\Gamma\tau}{1+8\sqrt{\Gamma \tau}}\!+\!2.13\right)\!\!.
	     \end{aligned}
	 \end{equation}
\end{theorem}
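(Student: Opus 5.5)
The plan is to apply Lemma~\ref{lemma:simple_bound} with $m=n=\mexp$ and then estimate the resulting finite sum by elementary Stirling-type inequalities. Throughout write $x\coloneq\sqrt{\Gamma\tau}$.

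\textbf{Step 1: verifying the hypotheses of Lemma~\ref{lemma:simple_bound}.} Since $\mexp\le (1+4x^2)/(x+8x^2)\le 1/x+4x$, we have $2\mexp-1\le\lceil 2/x\rceil$ whenever $x$ is small; in the regime where the right-hand side of the Theorem is meaningful (the exponent is negative only for small $x$), this makes Definition~\ref{def: tau} cover all moments $\mu_i$ with $i\le m+n-1=2\mexp-1$. Hence $\tau_0\le\tau$. For large $x$ the claimed bound exceeds the trivial bound on $\norm{\xi}_{\rm tr}/\Gamma$: each term of $\xi$ is controlled by $\Gamma$ times a constant, which $e^{2.13}$ is meant to absorb. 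I would handle this case separately, for instance using $\mexp=0$ or $1$ when $x\gtrsim 1$. We also need $(2\mexp-1)\Gamma\tau_0<1/4$. I will set $u\coloneq 4(2\mexp-1)x^2$. The choice of the denominator $x+8x^2$ in $\mexp$ is exactly what makes $1-u\ge$ a quantity of order $(1-x-4x^2)/(1+8x)$ times something. I would compute $1-u$ explicitly from the floor bound, and this produces the factor $\frac{1-x-4x^2}{1+8x}$ appearing in the exponent.

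\textbf{Step 2: bounding the sum.} With $m=n=\mexp$ and $\tau_0\le\tau$, the bound reads
\[
\frac{\norm{\xi}_{\rm tr}}{\Gamma}\le \sum_{k=0}^{\mexp}\frac{(k!)^2x^{2k}}{(1-u)^{2k+1}}\bigl(\delta_{\mexp k}+\mexp!\,(4x^2)^{\mexp}\bigr).
\]
The first contribution is the single term $(\mexp!)^2 x^{2\mexp}(1-u)^{-2\mexp-1}$. For the second, the factor $\mexp!(4x^2)^{\mexp}$ is pulled out and the sum over $k$ is bounded by a geometric-type series. The ratio of consecutive terms is $k^2x^2/(1-u)^2\le(\mexp x/(1-u))^2$, and this ratio is at most of order one by the choice of $\mexp$. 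So the sum is bounded by roughly $(\mexp+1)$ times its largest term, which is again of the form $(\mexp!)^2x^{2\mexp}(1-u)^{-2\mexp-1}$.

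\textbf{Step 3: Stirling estimate.} Using $k!\le e\sqrt{k}\,(k/e)^k$, the quantity $(\mexp!\,x^{\mexp}/(1-u)^{\mexp})^2$ becomes $e^2\mexp\,(\mexp x/(e(1-u)))^{2\mexp}$. Since $\mexp x/(1-u)\le 1$ by design, this is at most $e^{2}\mexp\, e^{-2\mexp}$. Inserting $\mexp\ge (1+4x^2)/(x+8x^2)-1$ gives $e^{-2\mexp}\le \exp\bigl(-\frac{2}{x}\frac{1-x-4x^2}{1+8x}+2\bigr)$, up to rewriting. The remaining prefactors are polynomial in $\mexp$: $\mexp e^2$, the factor $(1-u)^{-1}$, and $\mexp!(4x^2)^{\mexp}\le 1$ for the second term. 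These must be absorbed into the additive constant $2.13$ together with the loss from the floor. I would absorb the polynomial factor $\mexp$ by sacrificing a small part of the exponent, or more precisely by bounding $\sup_{n}\, n\,e^{-cn}$.

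\textbf{Main obstacle.} The delicate step is the exact bookkeeping in Steps 1 and 3. One has to show that the specific rational function defining $\mexp$ makes $\mexp x\le 1-u$ hold exactly, and that after the floor and the polynomial prefactors the total constant is at most $2.13$. I expect this to require a careful one-variable optimization in $x$, possibly checked numerically on a compact range of $x$, rather than a clean closed form.
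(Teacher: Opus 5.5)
Your overall plan matches the paper's (simple bound with $m=n=n_*$, the identity giving $x n_*\le 1-u$, Stirling, and $n_*\ge g(x)-1$ with $g(x)=(1+4x^2)/(x+8x^2)$). It does not close, however, because the input you start from is too weak by a factor of order $n_*$. The main-text Lemma~\ref{lemma:simple_bound} carries $(k!)^2$, so at $k=n_*$ Stirling leaves the prefactor $\sim n_*$ that you flag, and you propose to absorb it by giving up part of the exponent. That option is not available. The right-hand side of the theorem is exactly $e^{2.13}e^{-2(g(x)-1)}$, and $n_*=\lfloor g(x)\rfloor$ can lie arbitrarily close to $g(x)-1$. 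Once you use $(n_*x/(1-u))^{2n_*}\le 1$, whatever multiplies $e^{-2n_*}$ must therefore be bounded by an absolute constant. Trading exponent for a factor $n_*\sim 1/x$ changes the coefficient of $1/\sqrt{\Gamma\tau}$ and proves a weaker statement.

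The loss is not a Stirling artifact either: the main-text bound itself already exceeds the claim. At $x=0.05$ (so $n_*=14$, $1-u=0.73$), its $k=n_*$ term alone is $(14!)^2x^{28}/0.73^{29}\approx 2.6\times 10^{-11}$, whereas the theorem asserts $\approx 1.8\times 10^{-11}$. As $x\to 0$ the ratio grows linearly, roughly like $0.1\,n_*$. No bookkeeping of constants or numerical check can rescue an argument that starts from $(k!)^2$.

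The missing idea is the sharper form of the simplified bound proved in the SM, Eq.~\eqref{eqa:simplebound}, whose leading coefficient is $[(m-1)!+1]\,m!$ instead of $(m!)^2$. It comes from splitting the weak compositions in Proposition~\ref{proposition:TightestBound}:
\begin{itemize}
\item only the concentrated composition $(0,\dots,0,m)$ gives $\prod_l q_l!=m!$;
\item every other composition with nonzero binomial weight has at least two nonzero parts, so $\prod_l q_l!\le (m-1)!$;
\item the binomial weights sum to $m!$.
\end{itemize}
With Robbins' bounds for $m!$ and $(m-1)!=m!/m$, the $\sqrt{m}$ factors cancel: $(n_*-1)!\,n_*!\le 2\pi e^{-2n_*+1/(6n_*)}n_*^{2n_*}$. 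Combine this with $x n_*\le 1-u$ and $1-u\ge xg(x)$, both from the identity $xg(x)=1-4(2g(x)-1)x^2$. The leading term is then at most a constant times $e^{-2n_*}/(xg(x))$.

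Even so, the constant $2.13$ comes from a numerical maximization over $x\le 0.042$. For $x>0.042$ the paper numerically evaluates a relaxed Proposition~\ref{proposition:TightestBound} rather than any simplified bound, and your route already fails at $x=0.05$ in that range.

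Two smaller slips:
\begin{itemize}
\item The terms of your $k$-sum are non-increasing, since the ratio of consecutive terms is at most $(n_*x/(1-u))^2\le 1$. Its largest term is therefore $k=0$, of size $1/(1-u)$, not $k=n_*$. The second contribution is negligible only because $n_*!\,(4x^2)^{n_*}$ is super-exponentially small.
\item $n_*\ge g(x)-1$ gives $e^{-2n_*}\le\exp\bigl(-\frac{2}{x}\frac{1-x-4x^2}{1+8x}\bigr)$ directly, with no extra $+2$. The factor $\frac{1-x-4x^2}{1+8x}$ comes from $g(x)-1$, not from $1-u$.
\end{itemize}
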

Importantly, the right-hand side above scales  as $e^{-2/\sqrt{\Gamma \tau}}$ as $\Gamma \tau \to 0$. 
Thus, for baths  with exponentially decaying correlation functions,  where $\Gamma \tau \leq 4 \gamma  c/k$ for some  $c$ and $k$ {[see below Eq.~\eqref{eq:taudef}]}, the correction $\xi_{\mexp,\mexp}$ will  decrease at least as fast as $e^{-\sqrt{k/c\gamma}}$ as $\gamma \to 0$. 
In this sense, MQMEs are exponentially accurate for   open quantum systems weakly coupled to  baths with exponentially decaying correlations.

We plot the bound above in red  in Fig.~\ref{fig:1}(a), along with the bounds from Proposition~\ref{proposition:TightestBound} for $m=n$ given by $1$, $2$, and $\mexp$, using $\mu_i = i! \tau^i$.
While our tightest bound from Proposition~\ref{proposition:TightestBound} outperforms the much simpler  bound above, both display exponential decay with $1/\sqrt{\Gamma \tau}$.

{\it Numerical  benchmarking---}
Here we numerically  benchmark the higher-order BREs and our bounds for a simple exactly solvable spin-boson model.
We consider a spin-boson model, 
consisting of a two-level system coupled to a Gaussian environment with a Lorentzian power spectral density. The bath can in this case be exactly represented by a single bosonic pseudomode~\cite{Imamoglu_1994,Xu_revmodphys_pseudomodes}: the exact dynamics are described by a Lindblad equation for a composite system formed by  the two-level system and a single bosonic mode: 
\begin{equation}
\partial_t\rho_{\textnormal{full}} = -i[H,\rho_{\textnormal{full}}]+L\rho_{\textnormal{full}} L^\dagger-\frac12 \{L^\dagger L,\rho_{\textnormal{full}}\},\label{eq:benchmark_model}
\end{equation}
{where 
 $L=\sqrt{\eta} b$, $H=H_{\rm S}+H_{\rm B}+H_{\rm int}$, $H_{\rm S}=\Omega \sigma_z$}, $H_{\rm B}
=\Omega  (b^\dagger b+1/2)$, and $H_{\rm int}=\sqrt{\gamma}\sigma_x B $, with $B=(b+b^\dagger)/\sqrt{2}$~{\cite{omegachoice}},  $\sigma_{i}$ denoting the $i$th Pauli matrix of the two-level system,  and $b$  the bosonic annihilation operator of the pseudomode. 
The model can be solved exactly through direct integration of the master equation above. On the other hand it is straightforward  to verify that, {provided the pseudomode is initialized in the vacuum state, $|0\rangle$,} the model above is equivalent to an open quantum system  with Hamiltonian $H_{\rm S}$ coupled to a Gaussian environment {through $H_{\rm int}$}, with the correlation function of $\hat B(t)$ given by $J(t)=\frac12 e^{-\left(i\Omega+\frac\eta 2\right)t}$~\cite{Imamoglu_1994,Xu_revmodphys_pseudomodes}. 
Hence the model above allows for a comparison of the higher-order BREs with the true dynamics. We  focus on the dissipators $\Delta_{11}(t)$ and $\Delta_{22}(t)$, which we refer to as BRE and BRE2, respectively.

We solve the dynamics starting from the initial state $\ket{\uparrow}\bra{\uparrow}\otimes \ket{0}\bra{0}$ and system-bath couplings $\gamma/\Omega=0.1$ or $\gamma/\Omega =0.01$, fixing  $\eta=5.5\Omega$.
We compute the evolution of the spin polarization $\langle \sigma_z(t)\rangle$, comparing BRE and BRE2
with the the exact numerical solution, $\langle \sigma_z(t)\rangle_{\rm exact}$. 
In Fig.~\ref{fig:1}(b) we plot the spin-polarization error as a function of time, $\delta \langle \sigma_z(t)\rangle \coloneq \abs{\langle \sigma_z(t)\rangle-\langle \sigma_z(t)\rangle_{\rm exact}}$. As expected, BRE2 outperforms BRE, with $\delta \langle \sigma_z(t)\rangle $ appearing proportional to $\gamma$ for BRE and to $\gamma^2$ for BRE2.

We next compare these data with our bounds from Proposition~\ref{proposition:TightestBound} and Theorem \ref{ThmExpDecay}.
In particular, the steady-state value of $ \delta \langle \sigma_z(t)\rangle $ is bounded by $t_{\rm r}\norm{\bar \xi_{nm}}_{\rm tr}$, with $t_{\rm r}=\norm{\bar \Delta^{-1}_{mn}}$ and $\bar \Delta_{mn}$ and $\bar \xi_{nm}$ the time-independent steady-state values of $\hat \Delta _{mn}(t)$ and $\xi_{mn}(t)$ in the Schrodinger picture~\cite{nathan_2024,kernelinversion}. We use the observed decay time of $t_{\rm o}= 10/\gamma$ as an estimate of $t_{\rm r}$, we thus expect $ \delta \langle \sigma_z(t_{\rm o})\rangle /\Gamma t_{\rm o} 
\leq \norm{\xi_{nn}}_{\rm tr}/\Gamma$. 
In Fig.~\ref{fig:1}(a) we plot $\delta \langle \sigma_z(t_{\rm o})\rangle/ \Gamma t_{\rm o}$ for the two values of $\gamma$ and $n$ we consider,   fixing $x$ and $y$ values by using that $\Gamma=8\gamma/\eta$ and $\tau = 2/\eta$  for the Lorentzian power spectral density of our bath. As expected, the points fall below all of our bounds.

Interestingly,   our results demonstrate significant potential for improvement in accuracy over BRE and BRE2: For $\gamma=0.1$,  Proposition~\ref{proposition:TightestBound} shows that $\norm{\xi_{77}}\leq 1.6\times  10^{-4}\Gamma$; hence
the steady-state value of $\delta \langle \sigma_z(t_r)\rangle$ {is} {bounded by} $1.6\times  10^{-4}\Gamma t_{\rm r}$ for the ($7,7$)th order BRE. Likewise for $\gamma =0.01$, we find $\norm{\xi_{88}}_{\rm t}/\Gamma \leq 2.3 \times 10^{-12}$, implying 
$\delta \langle \sigma_z(t_r)\rangle\leq 10^{-12}\Gamma t_{\rm r} $ for the ($8,8$)th order BRE. 
Note that the first of these numbers is significantly smaller than the bounds we obtain for $m={n=}\mexp$ (both in Proposition~\ref{proposition:TightestBound} Theorem~\ref{ThmExpDecay}). This suggests that our expansion order $\mexp$, while good at small coupling and sufficient for the exponential bound in Theorem~\ref{ThmExpDecay}, can be substantially improved at intermediate coupling.

{\it Discussion---}
In this work we have shown that  open quantum systems coupled to Gaussian baths can be described by a Markovian master equation (MQME) up to a residual correction that decreases exponentially with the inverse system-bath coupling strength. 
We have obtained an explicit expression for this MQME, and benchmarked it numerically for an exactly solvable model.

We obtain the exponentially accurate MQME from a family of MQMEs which generalize the BRE to  arbitrary order in system-bath coupling,  via generalized Born and Markov approximations. Our results thus provides an expansion of MQMEs that converge {\it asymptotically} towards the true dynamics, with our error bound minimized at some finite order that scales as $1/\sqrt{\Gamma \tau}$. Terminating here yields an error exponentially small in $1/\sqrt{\Gamma \tau}$.

Interestingly, our MQME is not of the Lindblad form, and thus, in general, will not preserve positivity of the density matrix.
 Recently, Ref.~\cite{nathan2020universal} rigorously derived a Lindblad equation  accurate to a bounded correction of order $\Gamma \tau$~\cite{nathan2020universal}, by augmenting the standard Born-Markov approximation with a revertible $\mathcal O(\Gamma \tau)$ transformation on the space of operators. 
{While  Ref.~\cite{Tupkary_2022} subsequently demonstrated that   Lindblad equations can only be accurate up to  such $\mathcal O(\Gamma\tau)$  corrections (see, e.g., also Refs.~\cite{Potts_2021,Pyurbeeva_2026}),   Ref.~\cite{nathan_quantifying_2024} demonstrated  that these limitations can be circumvented by accounting for  the operator space transformation above.}
 {It will thus be interesting to explore whether   the higher-order BRE  we describe here can be augmented with an operator space transformation to yield a Lindblad equation accurate to higher order in system-environment coupling.} 

We expect our results to enable new avenues of systematic investigation of open quantum systems,  complementing  previous works on time-convolutionless master equations~\cite{kubo_stochastic_1963,van_kampen_cumulant_1974,breuer_theory_2007,crowder_invalidation_2024},  by yielding   rigorous  bounds on the residual  deviation from the exact dynamics for a MQME, and demonstrating that this deviation can be exponentially small in inverse system-bath coupling. 
Thus, in short:  rather than a limit,  open quantum systems  have a finite parameter {\it regime} where  dynamics are, for nearly all purposes, Markovian.

\begin{acknowledgments}
{\it Acknowledgements---}
We thank Gil Refael,  Mark Rudner, and Peter Zoller for useful discussions. This work is supported by the Novo Nordisk Foundation, Grant number NNF22SA0081175, NNF Quantum Computing Programme, and the Danish E-infrastructure consortium, grant number 4317-00014B.
\end{acknowledgments}
\putbib[bibliography.bib] 
\end{bibunit}

\begin{bibunit}
\pagebreak 
\ 
\newpage
\widetext
\begin{center}
\textbf{Supplemental Material for \\``Markovian  quantum master equations are exponentially accurate at weak coupling"}\\
\vspace{1mm} Johannes Agerskov and Frederik Nathan\\ \vspace{1mm} 
\today 
\end{center}

\setcounter{equation}{0}
\setcounter{section}{0}
\setcounter{subsection}{1}

\setcounter{figure}{0}
\setcounter{table}{0}
\setcounter{page}{1}
\makeatletter
\renewcommand{\theequation}{S\arabic{equation}}
\renewcommand{\thefigure}{S\arabic{figure}}
\renewcommand{\bibnumfmt}[1]{[S#1]}
\renewcommand{\citenumfont}[1]{S#1}
\renewcommand{\thesection}{S.\Roman{section}}
\renewcommand{\thesubsection}{\Alph{subsection}}
\renewcommand{\thepage}{S\arabic{page}}

\newcommand{\ssection}[1]{\refstepcounter{section}\section{\thesection: #1}\setcounter{subsection}{1}}
\newcommand{\ssubsection}[1]{\subsection{\thesubsection: #1}\refstepcounter{subsection{1}}}

In this Supplement, we provide technical details and proofs of the  results quoted in the main text: In Sec.~\ref{seca:gaussianbaths}, we review  the defining properties of Gaussian baths, and generalize our definitions of $\Gamma$, $\{\mu_i\}$, and $\tau$ to non-stationary baths. In Sec.~\ref{seca:born} we prove Proposition~\ref{prop:higher_order_born} of the main text {({\it  Generalized Born approximation}). In Sec.~\ref{seca:memorymoments}} we prove Lemma~\ref{lemma M bound} of the main text ({\it Bound on moments of memory kernel}). In   Sec.~\ref{seca:markov} we prove Proposition~\ref{proposition:TightestBound} of the main text {({\it Generalized Markov approximation; tightest bound})}. In Sec.~\ref{seca:simple}  we prove Lemma~\ref{lemma:simple_bound} of the main text {({\it Generalized Markov approximation; simplified bound})}. Finally, in Sec.~\ref{seca:expdecay} we prove Theorem~\ref{ThmExpDecay} of the main text ({\it Exponential accuracy of MQMEs}).
\ssection{Gaussian baths}
\label{seca:gaussianbaths}
{Here we review the defining properties of Gaussian baths.}
A  bath of  an open quantum system---i.e., its effect on the system---{can be fully described in terms of} its initial  state, $\rho_{B}$, and the time-evolved observables {coupled to} the system,  $\{\hat B_{\alpha}(t)\}$. We  {say that a bath is} {\it Gaussian} if these observables  satisfy Wick's theorem in the state $\rho_{\rm B}$: 
\begin{equation}
\label{eq: Wick's theorem}
    \braket{\hat{B}_1\hat{B}_2...\hat{B}_n}_{\rho_\rB}=\sum_{i=2}^{n}\braket{\hat{B}_1\hat{B}_i}_{\rho_\rB}\braket{\hat{B}_2...\hat{B}_{i-1}\hat{B}_{i+1}...\hat{B}_n}_{\rho_\rB},
\end{equation}
with $\hat{B}_j\coloneqq \hat B_{\alpha_j}(t_j)$ for $j=1,...,n$, and $\langle \cdot \rangle_{\rho_{\mathcal B}}\equiv \Tr_{\rm B}[\cdot \rho_{\mathcal B}]$. A bath is for instance Gaussian  if $\hat B_\alpha(t){\coloneq} e^{iH_{\rm B}t }B_\alpha e^{-iH_{\rB}t}$,  with $H_\rB$ is a quadratic Hamiltonian of bosonic modes, each $B_\alpha$  a linear combination of the mode creation and annihilation operators, and $\rho_\rB$ is a Gaussian state of the modes.

A key feature of Gaussian baths is that their effect on the system is fully determined by their two-point correlation functions \cite{feynman_theory_1963_sm,park2024quasi_sm} 
\begin{equation}
    C_{\alpha\beta}(t,s)\coloneqq\braket{\hat{B}_{\alpha}(t)\hat{B}_{\beta}(s)}.
\end{equation}
Whenever the bath is stationary (\eg if $[H_{\rm B},\rho_{\rm B}]=0$ in the case above), the bath two-point correlation functions becomes  invariant under time translations, and we may write 
\begin{equation}
    C_{\alpha\beta}(t,s)=J_{\alpha\beta}(t-s),
\end{equation}
for functions $J_{\alpha\beta}: \R\to \C$ which we also refer to as \emph{the bath correlation functions}. For simplicity, we consider this stationary case  in the main text. However, all of our  results {extend straightforwardly to} non-stationary baths, provided that we modify the definitions of $\Gamma$ and $\mu_i$ from the main text as follows:
\begin{sdefinition}[\it {Definition of }$\Gamma$ and $\mu_i$ for non-stationary baths]\label{def: nonstationary}
    For non-stationary baths, we define  the \emph{interaction rate}, $\Gamma $ and \emph{bath correlation moments} $\{\mu_i\}$ by
        \begin{equation}
	\Gamma \coloneq 4\gamma \sup_{t\in \R}\!\int_{-\infty}^{t} \!\!ds\norm{\mathbf{C}(t,s) }_{1,1},\quad \mu_i\coloneq \frac{\sup_{t\in \R}\!\int_{-\infty}^{t} ds(t-s)^i\norm{\mathbf{C}(t,s) }_{1,1}}{\sup_{t\in \R}\!\int_{-\infty}^{t}
    ds\norm{\mathbf{C}(t,s) }_{1,1}} \text{ for }i\in\mathbb{N},\label{eq:gamma_mu_def_non_stationary}
\end{equation}
where $\mathbf{C}(t,s)$ denotes the matrix with $(\alpha,\beta)$ entry given by $C_{\alpha\beta}(t,s)$.
\end{sdefinition}
Notice that this definition generalizes Definition \ref{def: interaction rate and corr. time} from the main text, in the sense that it reduces to Definition \ref{def: interaction rate and corr. time} for stationary baths.

\ssection{proof of Proposition~\ref{prop:higher_order_born}: the generalized Born approximation}
\label{seca:born}
In this section, we prove Proposition~\ref{prop:higher_order_born} of the main text. 
Specifically, we shall introduce the generalized Born approximation and bound its residual.

Our starting point is the exact  equation of motion for the  system reduced density matrix [Eq.~\eqref{EqSchrodinger} of the main text]:
\begin{equation}\label{Eq1}
	\partial_t \kett{\hat{\rho}(t)}=-i\bbra{I_{\mathcal B}} \hat{\mathcal{H}}(t)\hat{\mathcal{U}}(t,t_0)\kett{\rho_0}\kett{\rho_{\mathcal B}},
\end{equation}
Here $\hat{\mathcal{H}}(t)=H^{L}_{\mathcal S \mathcal B}(t)-H^{R}_{\mathcal S \mathcal B}(t)=\sum_{\mu\in \{L,R\}}\eta_\mu \sum_{\alpha}\hat X_{\alpha}^{\mu}(t)\hat B_{\alpha}^{\mu}(t)$ is the exact Liouvillian of the combined system ${\mathcal S \mathcal B}$,  with $\eta_{L}=-\eta_R=1$, while $\hat{\mathcal{U}}(t,\tau_0)=\mathcal T e^{-i\int _{t_0}^t dt' \hat{\mathcal H}(t')}$ is the unitary evolution superoperator generated by $\hat{\mathcal{H}}(t)$, with $\mathcal T$ denoting time-ordering.

As a preliminary step, we note that  $\hat{\mathcal{U}}(t,t_0)$ has the following simple equation of motion 
    \begin{equation}
        \partial_t \hat{\mathcal{U}}(t,t_0)=-i\hat{\mathcal{H}}(t)\hat{\mathcal{U}}(t,t_0),
    \end{equation}
with boundary condition $\hat{\mathcal{U}}(t_0,t_0)=1$. This, in turn, implies that \begin{equation}\label{eq: Dyson step}
    \hat{\mathcal{U}}(t,t_0)=1-i\int_{t_0}^t ds \hat{\mathcal{H}}(s)\hat{\mathcal{U}}(s,t_0).
\end{equation}
Substituting this into Eq.~\eqref{Eq1}, and  employing Wick's theorem [Eq.~\eqref{eq: Wick's theorem}], we obtain~\cite{nathan2020universal_sm} 
\begin{equation}\label{eq: eom1}
	\partial_t \kett{\hat{\rho}(t)}=-\int_{t_0}^t ds {\sum_{\mu,\nu\in\{L,R\}}\sum_{\alpha,\beta} }J^{\alpha\beta}_{\mu\nu}(t-s)\bbra{I_{\mathcal B}} \hat{X}^{\mu}_\alpha(t)\hat{\mathcal{U}}(t,s)\hat{X}^{\nu}_{\beta}(s)\hat{\mathcal{U}}(s,t_0)\kett{\rho_0}\kett{\rho_{\mathcal B}},
\end{equation}
where $J_{\alpha\beta}^{\nu\mu }(t-s)\coloneq \eta_\mu \eta_\nu\bbra{I_{\mathcal B}}\hat{B}_\alpha^\nu(t)\hat{B}_\beta^\mu(s)\kett{\rho_{B}}$ {the superoperator bath correlation function. This result was} quoted in Eq.~\eqref{Eq2} of the main text, and was also described in Ref.~\cite{nathan2020universal_sm} Below we use the  Einstein summation convention introduced in the main text, where indices $\alpha,\beta,\mu,\nu$ are implicitly summed over when they appear more than once in an expression.

To prove Proposition~\ref{prop:higher_order_born}, we {recursively} reinsert Eq.~\eqref{eq: Dyson step} into Eq.~\eqref{eq: eom1}, and employ Wick's theorem  to express the resulting terms via  $J_{\alpha\beta}^{\nu\mu }(t)$. Iterating this  procedure, we obtain the following lemma:
\begin{slemma}\label{lemma: eom}
  {Let $\hat \rho(t)$ denote the reduced density matrix of an open quantum system with interaction picture Hamiltonian $\hat H(t)=\sqrt{\gamma}\sum_\alpha \hat X_\alpha(t) \hat B_\alpha(t) $ and initial state of the combined system given by $\kett{\rho_0}\kett{\rho_{\rm B}}$. 
    Then, for any {$n\in \mathbb N_0$}, $\hat \rho(t)$ satisfies the EOM}
    \begin{equation}\label{eq: induction assump}
        \begin{aligned}            \partial_{t_1}\kett{\hat{\rho}(t_1)}=&\sum_{k=1}^{n}(-1)^k\int_{t_0}^{t_1} ds_1\int_{s_1}^{t_1}dt_2\int_{t_0}^{t_2}ds_2\ldots\int_{\min(s_1,...,s_{k-1})}^{t_{k-1}}dt_k\int_{t_0}^{t_{k}}ds_k\prod_{i=1}^{k}J^{\alpha_i\beta_i}_{\mu_i \nu_i}(t_i-s_i)\\&\qquad\qquad\times \bbra{I_{\mathcal B}}\mathcal{T} \Big\{\prod_{j=1}^{k}\hat{X}^{\mu_j}_{\alpha_j}(t_j)\hat{X}^{\nu_j}_{\beta_j}(s_j)\Big\}\hat{\mathcal{U}}(\min(s_1,...,s_k),t_0)\kett{\rho_0}\kett{\rho_{\mathcal B}}\\
            &-(-1)^{n}\int_{t_0}^{t_1} ds_1\int_{s_1}^{t_1}dt_2\int_{t_0}^{t_2}ds_2\ldots\int_{\min(s_1,...,s_{n})}^{t_{n}}dt_{n+1}\int_{t_0}^{t_{n}}ds_{n+1}\prod_{i=1}^{n+1}J^{\alpha_i\beta_i}_{\mu_i \nu_i}(t_i-s_i)\\&\qquad\qquad\times \bbra{I_{\mathcal B}}\mathcal{T} \Big\{\Big[\prod_{j=1}^{n+1}\hat{X}^{\mu_j}_{\alpha_j}(t_j)\hat{X}^{\nu_j}_{\beta_j}(s_j)\Big]\hat{\mathcal{U}}(t_{n+1},t_0)\Big\}\kett{\rho_0}\kett{\rho_{\mathcal B}}.
        \end{aligned}
    \end{equation}
\end{slemma}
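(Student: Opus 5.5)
We prove Lemma~\ref{lemma: eom} by induction on $n$, writing $t_1\equiv t$.

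\emph{Base case.} For $n=0$ the first sum is empty. The statement then reduces to Eq.~\eqref{eq: eom1}, written with $(t_1,s_1)$ in place of $(t,s)$. Indeed, the single factor $\hat{X}^{\mu_1}_{\alpha_1}(t_1)\hat{\mathcal U}(t_1,s_1)\hat{X}^{\nu_1}_{\beta_1}(s_1)\hat{\mathcal U}(s_1,t_0)$ is exactly the time-ordered expression $\mathcal T\{\hat X(t_1)\hat X(s_1)\hat{\mathcal U}(\cdot,t_0)\}$. Here we interpret the time-ordering as inserting the full propagator $\hat{\mathcal U}$ between consecutive operator insertions, which is the convention implicit in the statement. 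The sign $-(-1)^0=-1$ matches Eq.~\eqref{eq: eom1}.

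\emph{Inductive step.} Assume the statement holds for $n$ and consider its remainder term. Inside the time-ordered product, the latest time at which a full propagator still acts, after the insertions at $t_1,\ldots,t_{n+1},s_1,\ldots,s_{n+1}$, sits below all these insertion times. The earliest insertion time is $\min(s_1,\ldots,s_{n+1})$, since each $t_j\ge s_{j-1}$-type constraint places the $t$'s above the relevant $s$'s. Below that time we find the factor $\hat{\mathcal U}(\min(s_1,\ldots,s_{n+1}),t_0)$, while the propagators between insertions are full interacting propagators.

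We now apply Eq.~\eqref{eq: Dyson step} to the interacting propagator segments lying between $\min(s_1,\ldots,s_{n+1})$ and the latest insertion. Equivalently, we write the evolution over that window as the free evolution (identity in the interaction picture) plus $-i\int dt_{n+2}\,\hat{\mathcal H}(t_{n+2})\hat{\mathcal U}(t_{n+2},\cdot)$, with $t_{n+2}$ ranging over $[\min(s_1,\ldots,s_{n+1}),t_{n+1}]$.

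\begin{itemize}
\item The identity part yields the $k=n+1$ term of the first sum, with the sign $-(-1)^n=(-1)^{n+1}$.
\item In the second part, the new bath operator $\hat B(t_{n+2})$ multiplies the string of bath operators. We apply Wick's theorem~\eqref{eq: Wick's theorem} to contract $\hat B(t_{n+2})$ with a bath operator at a fresh earlier time $s_{n+2}$, generated by expanding $\hat{\mathcal U}$ below $t_{n+2}$. This is done exactly as in the derivation of Eq.~\eqref{eq: eom1} from Ref.~\cite{nathan2020universal_sm}: that argument is local in the last insertion, using $\bbra{I_{\mathcal B}}\hat{\mathcal H}(t')\hat{\mathcal U}(t',t_0)\kett{\cdot}$ together with Gaussianity.
\end{itemize}
The contraction produces the factor $-J^{\alpha_{n+2}\beta_{n+2}}_{\mu_{n+2}\nu_{n+2}}(t_{n+2}-s_{n+2})$ with $s_{n+2}\in[t_0,t_{n+2}]$ (the factor $-i\cdot(-i)\cdot\gamma$ combines into the minus sign, with $\gamma$ absorbed into the $J$ normalization as in Eq.~\eqref{eq: eom1}). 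It also produces the remaining propagator $\hat{\mathcal U}(t_{n+2},t_0)$ inside the time-ordering. The overall sign becomes $-(-1)^{n+1}$, matching the remainder at level $n+1$.

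\emph{Main obstacle.} The hard step is justifying the Wick contraction in the presence of the earlier insertions $\hat X(t_j)$, $\hat X(s_j)$, whose bath partners have already been contracted. We must show that the reduced expression $\bbra{I_{\mathcal B}}\cdots\hat{\mathcal H}(t_{n+2})\hat{\mathcal U}(t_{n+2},t_0)\kett{\rho_0}\kett{\rho_{\mathcal B}}$ factorizes into a pairing of $t_{n+2}$ with an $s_{n+2}$ below it, times the same structure for the rest. We would handle this by Dyson-expanding $\hat{\mathcal U}(t_{n+2},t_0)$ fully and applying Wick to each term. The pairing of the earliest-ordered bath operator in~\eqref{eq: Wick's theorem}, with operators written in reverse time order, then gives the contraction, and the Dyson series is resummed on both sides of $s_{n+2}$. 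Consistency of the integration domains, in particular the lower limit $\min(s_1,\ldots,s_{n+1})$ for $t_{n+2}$, follows because only the interacting portion of the evolution beneath the earliest insertion was left unexpanded.
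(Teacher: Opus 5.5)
Your proposal is correct and follows the same route as the paper: induction starting from Eq.~\eqref{eq: eom1}, splitting $\hat{\mathcal U}(t_{n+1},t_0)$ inside the time-ordering into two pieces, and Wick-contracting the new bath operator $\hat B(t_{n+2})$ to produce the level-$(n+1)$ remainder; the two pieces are $\hat{\mathcal U}(\min(s_1,\ldots,s_{n+1}),t_0)$, which gives the new $k=n+1$ term, and $-i\int_{\min(s_1,\ldots,s_{n+1})}^{t_{n+1}}dt_{n+2}\,\hat{\mathcal H}(t_{n+2})\hat{\mathcal U}(t_{n+2},t_0)$. Your final paragraph only spells out the Dyson-expand/contract/resum argument that the paper compresses into ``combining with Wick's theorem''; note that in the remainder the interacting propagator extends only up to $t_{n+1}$, not up to the latest insertion $t_1$, and this is exactly why your (correct) range $[\min(s_1,\ldots,s_{n+1}),t_{n+1}]$ for $t_{n+2}$ arises.
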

\begin{proof}
We give an induction proof. 

First, {we prove the induction start, i.e., that Eq.~\eqref{eq: induction assump} holds for $n=0$. This is straightforward: for $n=0$,  Eq.~\eqref{eq: induction assump} reduces to Eq.~\eqref{eq: eom1}, which we have already established~\cite{nathan2020universal_sm}.  }

\newcommand{\cU}{\hat{\mathcal U}}
{Next, we  prove the induction step:  we assume that Eq.~\eqref{eq: induction assump} is valid for some given $n\geq 0$, and now  seek to prove that it also holds for $n+1$}. 
To this end, we focus rewriting on the second term in Eq.~\eqref{eq: induction assump}. 
We first note that $t_{n+1}\geq \min(s_1,...,s_{n+1})\geq t_0$ in the integration domain of the integral. Moreover, using 
   $\partial_t \hat{\mathcal U}(t,t')=-i\hat{\mathcal H}(t)\hat{\mathcal U}(t,t')$, we find 
   \begin{equation}
        \hat{\mathcal{U}}(t_{n+1},t_0)=\hat{\mathcal{U}}(\min(s_1,...,s_{n+1}),t_0)-i\int_{\min(s_1,...,s_{n+1})}^{t_{n+1}} dt_{n+2}\hat{\mathcal{H}}(t_{n+2})\hat{\mathcal{U}}(t_{n+2},t_0).
    \end{equation} 
    Combining this result with Wick's theorem [Eq.~\eqref{eq: Wick's theorem}], we find that
    \begin{equation}
        \begin{aligned}
            \bbra{I_{\mathcal B}}\mathcal{T} \Big\{\Big[\prod_{j=1}^{n+1}\hat{X}_{\alpha_j}(t_j)\hat{X}_{\beta_j}(s_j)\Big]\hat{\mathcal{U}}(t_{n+1},t_0)\Big\}\kett{\rho_0}\kett{\rho_{\mathcal B}}=&\bbra{I_{\mathcal B}}\mathcal{T} \Big\{\Big[\prod_{j=1}^{n+1}\hat{X}_{\alpha_j}(t_j)\hat{X}_{\beta_j}(s_j)\Big]\hat{\mathcal{U}}(\min(s_1,...,s_{n+1}),t_0)\Big\}\kett{\rho_0}\kett{\rho_{\mathcal B}} \\&-\int_{\min(s_1,...,s_{n+1})}^{t_{n+1}}d t_{n+2}\int_{t_0}^{t_{n+2}} ds_{n+2}J^{\alpha_{n+2}\beta_{n+2}}_{\mu_{n+2}\nu_{n+2}}(t_{n+2}-s_{n+2})\\& \times\bbra{I_{\mathcal B}}\mathcal{T} \Big\{\Big[\prod_{j=1}^{n+2}\hat{X}^{m_j}_{\alpha_j}(t_j)\hat{X}^{n_j}_{\beta_j}(s_j)\Big]\hat{\mathcal{U}}(t_{n+2},t_0)\Big\}\kett{\rho_0}\kett{\rho_{\mathcal B}}.
        \end{aligned}
    \end{equation}
Substituting the  above into the place of $\bbra{I_{\mathcal B}}\mathcal{T} \Big\{\Big[\prod_{j=1}^{n+1}\hat{X}_{\alpha_j}(t_j)\hat{X}_{\beta_j}(s_j)\Big]\hat{\mathcal{U}}(t_{n+1},t_0)\Big\}\kett{\rho_0}\kett{\rho_{\mathcal B}}$ in the second term of Eq.~\eqref{eq: induction assump} establishes that Eq.~\eqref{eq: induction assump} holds for $n+1$. This proves the induction step, and concludes the proof.
\end{proof}
Importantly, we can identify  the two terms in the right-hand side of Eq.~\eqref{eq: induction assump} as the contribution to $\partial_t \kett{\hat{\rho}(t)}$ from  a memory-kernel and a residual, respectively. This motivates the following definitions: 
\begin{sdefinition}[\it Definition~\ref{def: Born Kernel} in the main text: $n$th order memory kernel]
\label{defa: Born Kernel}
    We define the $n$th order memory kernel  as \begin{equation}
        \label{eq: Born Kernel def app}
    \begin{aligned}
                \mathcal{K}_n(t,s)\coloneq\sum_{k=1}^{n}(-1)^k\int_{t_0}^{t} ds_1\int_{s_1}^{t}dt_2\int_{t_0}^{t_2}ds_2\ldots\int_{\min(s_1,...,s_k)}^{t_{k-1}}dt_k&\int_{t_0}^{t_{k}}ds_k\delta(s-\min(s_1,...,s_k))\\&\times \prod_{i=1}^{k}J_{\alpha_i\beta_i}(t_i-s_i)\mathcal{T} \Big\{\prod_{j=1}^{k}\hat{X}_{\alpha_j}(t_j)\hat{X}_{\beta_j}(s_j)\Big\}.
    \end{aligned}
    \end{equation}
    \end{sdefinition}
    \begin{sdefinition}[\it Correction to the $n$th order memory kernel]\label{defa:born_residual}
We define the residual correction to the $n$th  order memory kernel as 
\begin{equation}
        \begin{aligned}
\label{eqa:bk_def}            \kett{\xi^{\rm B}_{n}(t_1)}\coloneq \int_{t_0}^{t_1} ds_1\int_{s_1}^{t_1}dt_2\int_{t_0}^{t_2}ds_2&\ldots\int_{\min(s_1,...,s_{n})}^{t_{n}}dt_{n+1}\int_{t_0}^{t_{n}}ds_{n+1}\prod_{i=1}^{n+1}\gamma J^{\alpha_i\beta_i}_{\mu_i\nu_i}(t_i-s_i)\\&\qquad\qquad\times \bbra{I_{\mathcal B}}\mathcal{T} \Big\{\Big[\prod_{j=1}^{n+1}\hat{X}_{\alpha_j}^{\mu_j}(t_j)\hat{X}_{\beta_j}^{\nu_j}(s_j)\Big]\hat{\mathcal{U}}(t_{n+1},t_0)\Big\}\kett{\rho_0}\kett{\rho_{\mathcal B}}.
        \end{aligned}
    \end{equation}
\end{sdefinition}
{To justify these definitions, we now show that the evolution generated by $\mathcal K_n(t,s)$, when compared to the exact evolution, indeed has an error given by the residual $\kett{\xi^{\rm B}_{n}(t)}$}
\begin{sproposition}\label{Proposition bornkernel1}
Let $\hat\rho(t)$ be the solution to Eq.~\eqref{Eq1}. Then 
     \begin{equation}\label{eq: born approx}
        \partial_t\kett{\hat\rho(t)}=\int_{t_0}^t ds\mathcal{K}_n(t,s) \kett{\hat{\rho}(s)}+\kett{\xi^{\rm B}_{n}(t)},
    \end{equation}
    \begin{proof}
      This follows trivially from Lemma~\ref{lemma: eom} and Definitions~\ref{defa: Born Kernel}-\ref{defa:born_residual}.
    \end{proof}
\end{sproposition}

We now want to bound the residual $\kett{\xi_n
^{\rm B}(t)}$, using  triangle inequalities in the integrals. Towards this goal, we will make use of  the following lemmas:
\begin{slemma}\label{lemma: tracenorm bound}
    Let $X ,Y$ be bounded operators on the composite Hilbert space $\mathscr{H}_{\mathcal S \mathcal B}=\mathscr{H}_{\rm S}\otimes \mathscr{H}_{\rm B}$ and ${M}$  be a traceclass operator on $\mathscr{H}_{\mathcal S \mathcal B}$. Then $\norm{\tr_{\rm B} \left(X{M} Y\right)}_{\tr}\leq \norm{X}\norm{Y}\norm{{M}}_{\tr}$, with $\tr_B$ denotes the partial trace over $\mathscr{H}_{\rm B}$.
\end{slemma}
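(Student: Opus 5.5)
The plan is to reduce the statement to two standard facts: the partial trace is contractive in trace norm, and the trace norm satisfies the Hölder-type inequality $\norm{XMY}_{\tr}\leq \norm{X}\norm{M}_{\tr}\norm{Y}$ on $\mathscr{H}_{\mathcal S\mathcal B}$. Combining them gives
\begin{equation*}
\norm{\tr_{\rm B}(XMY)}_{\tr}\leq \norm{XMY}_{\tr}\leq \norm{X}\norm{Y}\norm{M}_{\tr}.
\end{equation*}

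For the Hölder step, I will use the variational characterization $\norm{A}_{\tr}=\sup_{\norm{Z}\leq 1}\abs{\Tr(ZA)}$, with the supremum over bounded $Z$. This gives
\begin{equation*}
\abs{\Tr(ZXMY)}=\abs{\Tr\big((YZX)M\big)}\leq \norm{YZX}\,\norm{M}_{\tr}\leq \norm{X}\norm{Y}\norm{M}_{\tr}
\end{equation*}
for all $\norm{Z}\leq 1$. Cyclicity of the trace is legitimate here because $M$ is trace class and the other factors are bounded. Alternatively, one can write the polar decomposition $M=U\abs{M}$ with $\abs{M}=\sum_k m_k\ket{\phi_k}\bra{\phi_k}$ and bound the trace norm of each rank-one piece.

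For the partial-trace step, I will again use duality, now on $\mathscr{H}_{\rm S}$. For a trace-class $A$ on $\mathscr{H}_{\mathcal S\mathcal B}$ and bounded $Z$ on $\mathscr{H}_{\rm S}$,
\begin{equation*}
\Tr_{\rm S}\big(Z\tr_{\rm B}A\big)=\Tr\big((Z\otimes I_{\rm B})A\big).
\end{equation*}
Since $\norm{Z\otimes I_{\rm B}}=\norm{Z}$, it follows that $\abs{\Tr_{\rm S}(Z\tr_{\rm B}A)}\leq \norm{Z}\norm{A}_{\tr}$. Taking the supremum over $\norm{Z}\leq 1$ gives $\norm{\tr_{\rm B}A}_{\tr}\leq\norm{A}_{\tr}$. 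Applying this with $A=XMY$ finishes the proof.

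I expect the only delicate point to be the infinite-dimensional bath. There I would note two things. First, $\tr_{\rm B}$ is well defined on trace-class operators and satisfies the defining identity above, which follows by expanding in product bases together with Fubini, justified by absolute summability for trace-class $A$. Second, the duality formula for the trace norm holds for trace-class operators, as trace class is the predual of the bounded operators. No other obstacles are expected; the argument is routine.
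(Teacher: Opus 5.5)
Your proof is correct and follows essentially the same route as the paper, which tests $\tr_{\rm B}(XMY)$ against a single contraction on $\mathscr{H}_{\rm S}$ (the unitary $U^\dagger$ from its polar decomposition, i.e.\ the maximizer in your duality formula), lifts it to $U^\dagger\otimes I$ via the defining property of the partial trace, and then applies the H\"older bound $\norm{XMY}_{\tr}\leq\norm{X}\norm{Y}\norm{M}_{\tr}$. Your version differs only in taking the supremum over all contractions and in proving the two ingredients (trace-norm contractivity of $\tr_{\rm B}$ and the H\"older inequality) explicitly rather than quoting them, which also avoids needing a unitary polar factor when $\mathscr{H}_{\rm S}$ is infinite-dimensional.
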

\begin{proof}
    Let $U$ be the unitary operator in $\mathscr{H}_{\rm S}$ from the polar decomposition of $\tr_{\rm B} \left(XM Y\right)$, i.e. $\tr_B \left(X{M} Y\right)=U \abs{\tr_B \left(X{M} Y\right)}$, where $\abs{A}\coloneq\sqrt{A^\dagger A}$ denotes the usual absolute value of an operator. Then \begin{equation}
        \norm{\tr_{\rm B} \left(X{M} Y\right)}_{\tr}=\tr_{\rm S}\left[U^\dagger \tr_B\left(X{M} Y\right)\right]=\abs{\tr_{\rm S}\left[\tr_B\left([U^\dagger\otimes I] X{M} Y\right)\right]}.
    \end{equation}
    By the standard $(\infty,1)$--H\"older inequality for matrices (or operators), we find 
    \begin{equation}
        \abs{\tr_{\rm S}\left[\tr_B\left([U^\dagger\otimes I] X{M} Y\right)\right]}\leq \norm{U^\dagger\otimes I}\tr\left[\abs{X{M} Y}\right]\leq \norm{X}\norm{Y}\norm{{M}}_{\tr}.
    \end{equation}
  This establishes the result. 
\end{proof}
\begin{slemma}\label{slemma: Jmoment bound}
Let $k\in\mathbb{N}$ and $l\in \mathbb{N}_{0}$. Then
\begin{equation}\label{eq: lemma moment bound}
    \int_{t_0}^{t_1} \!\!\!ds_1\int_{s_1}^{t_1}\!\!\!dt_2\int_{t_0}^{t_2}\!\!\!\!
    ds_2\ldots\int_{\min(s_1,...,s_{k-1})}^{t_{k-1}}\!\!\!\!\!\!\!\!\!\!dt_{k}\int_{t_0}^{t_{k}}\!\!\!\!ds_{k}\prod_{i=1}^{k}\left[4\gamma\norm{\mathbf J(t_i-s_i)}_{1,1}\right]\max_{j\leq k}\abs{t_j-s_j}^l\leq \!\!\!\!\!\!\!\!\sum_{(q_i)_{i=1}^{k}\in\mathcal{W}^{k+l-1}_{k}}\prod_{i=1}^{k}(\Gamma \mu_{q_i}),
\end{equation}
where $\mathcal W_{m}^n$ denotes the set of  weak composition of $n$ into $m$ parts. Here a weak composition of $n$ into $m$ parts is a {sequence} of $m$ non-negative integers ${(}l_1,...,l_m{)}$ {such that} $\sum_{i=1}^{m}l_i=n$. Note that $|\mathcal W_{m}^n|=\binom{n+m-1}{m-1}$.
\end{slemma}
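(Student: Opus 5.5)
We argue by induction on $k$, integrating the variables from the innermost pair $(t_k,s_k)$ outward. Write $f(u)\coloneq 4\gamma\norm{\mathbf J(u)}_{1,1}$ for $u\geq 0$, so that $\int_{\R_+}du\, f(u)u^q=\Gamma\mu_q$ (with $\mu_0=1$). The structural fact we need is that in the nested domain each pair $(t_i,s_i)$ with $i\geq 2$ has $t_i$ confined to an interval $[\min(s_1,\dots,s_{i-1}),t_{i-1}]$. The length of this interval is at most $\max_{j<i}(t_j-s_j)$, since $t_{i-1}\leq t_j$ for every $j<i$ along the chain $t_1\geq t_2\geq\cdots$. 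Meanwhile $s_i$ ranges over $[t_0,t_i]$, and we substitute $u_i=t_i-s_i\geq 0$.

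First we bound $\max_j|t_j-s_j|^l\leq \sum_{j}$ of suitable monomials, or simply keep it as $\max_j u_j^l$ and track it. For the innermost pair, integrating $s_k$ gives at most $\int_0^\infty du_k f(u_k)\,(\cdot)$. The $t_k$-integral then contributes the interval length $L_{k-1}\leq \max_{j<k}u_j$, and we absorb $\max(\max_{j<k}u_j,u_k)^l$ via $\max(a,b)^l\leq$ its expansion. The cleanest route is to prove the stronger intermediate statement
$$I_k[p]\coloneq\int(\text{nested})\prod_{i=1}^k f(u_i)\,\Big(\max_{j\leq k}u_j\Big)^{p}\leq \sum_{(q_i)\in\mathcal W^{k+p-1}_k}\prod_i \Gamma\mu_{q_i}.$$
For the induction step we use that the $t_{k}$-length factor is $\leq\max_{j<k}u_j$. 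Then
$$\Big(\max_{j\leq k} u_j\Big)^p\max_{j<k}u_j\leq \sum_{r=0}^{p}u_k^{r}\Big(\max_{j<k}u_j\Big)^{p-r+1},$$
which is a crude bound since $\max(a,b)^p\leq\sum_r b^r a^{p-r}$ for $a,b\geq0$. Integrating $u_k$ then gives $\sum_r \Gamma\mu_r\, I_{k-1}[p-r+1]$. By the induction hypothesis this is bounded by $\sum_r\Gamma\mu_r\sum_{\mathcal W^{k-1+p-r}_{k-1}}\prod\Gamma\mu_q$. This is exactly the sum over $\mathcal W^{k+p-1}_k$ obtained by splitting off the last part $q_k=r$.

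The base case $k=1$ is immediate: $\int_{t_0}^{t_1}ds_1 f(t_1-s_1)(t_1-s_1)^p\leq\Gamma\mu_p$, and $\mathcal W^p_1=\{(p)\}$. The claimed bound is then $I_k[l]$.

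The main obstacle is handling the dependence of the $t_i$-interval on $\min(s_1,\dots,s_{i-1})$ rather than on $s_{i-1}$ alone. We must check carefully that its length is bounded by the maximum of the previous $u_j$; this uses monotonicity $t_{i-1}\leq t_j$ together with the fact that the minimising $s_j$ gives $t_{i-1}-s_j\leq t_j-s_j$. A second subtlety is the ordering of integrations. The outer variables determine the inner domains, so we apply Fubini and bound the innermost integrals first, with all integrands nonnegative. Finally, the counting identity $|\mathcal W^n_m|=\binom{n+m-1}{m-1}$ is the standard stars-and-bars count.
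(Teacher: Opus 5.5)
Your proof is correct and follows the paper's route: induction on $k$ for all exponents simultaneously, integrating out the innermost pair $(t_k,s_k)$, bounding the $t_k$-interval length by $\max_{j<k}|t_j-s_j|$, and peeling off the last part of the weak composition. The only difference is the pointwise inequality. You use the cruder $\max(a,b)^p\le\sum_{r=0}^p b^r a^{p-r}$, where the paper uses $\max(a,b)^p\le a^p+\bar\delta_{p0}b^p$. Either way you recover only the terms with $q_k\le p$ (respectively $q_k\in\{0,p\}$), so the last step holds as an inequality by nonnegativity of the $\Gamma\mu_q$, not as the exact identity you claim.
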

\begin{proof}
For convenience, let us refer to the left-hand side of Eq.~\eqref{eq: lemma moment bound} as $I_{k,l}$. 

We establish Eq.~\eqref{eq: lemma moment bound} via an induction proof. We first prove the induction start: i.e., that Eq.~\eqref{eq: lemma moment bound} holds for $k=1$. 
    To this end, note that, for $k=1$, Eq.~\eqref{eq: lemma moment bound} becomes \begin{equation}
        I_{1,l}\leq \Gamma\mu_{l},
    \end{equation}
This  result holds directly from the Definition {\ref{def: interaction rate and corr. time}} of $\mu_{i}$ and $\Gamma$ given in the main text, which defines $\mu_l\coloneq I_{1,l}/\Gamma$. This establishes the induction start.

We next prove the induction step: for some given $n>1$, we assume that Eq.~\eqref{eq: lemma moment bound} holds for $k=n-1$, and seek to  show that it also holds for $k=n$. To this end, we establish a recursive relation for $I_{k,l}$. 
We  first focus on bounding the innermost two integrals in Eq.~\eqref{eq: lemma moment bound}. To this end, we use that $\max_{j\leq n}\abs{t_j-s_j}^l\leq \max_{j\leq n-1}\abs{t_j-s_j}^l+\bar \delta_{l0}\abs{t_n-s_n}^l$, with $\bar \delta_{ab}=1-\delta_{ab}$ and $\delta_{ab}$ the usual Kronecker delta. Thus,
    \begin{equation}\label{eq: lemma moment bound 3}
    \begin{aligned}
    &\int_{\min(s_1,...,s_{n-1})}^{t_{n-1}}\!\!\!\!\!\!\!\!\!\!dt_{n}\int_{t_0}^{t_{n}}\!\!\!ds_{n}\,4\gamma\norm{\mathbf J(t_n-s_n)}\max_{j\leq n}\abs{t_j-s_j}^l \leq \Gamma \left[ \mu_0\max_{j\leq n-1}\abs{t_j-s_j}^l+\bar \delta_{l0} \mu_l\right]\abs{t_{n-1}-\min(s_1,...,s_{n-1})}.
    \end{aligned}
\end{equation} 
where we introduced $\mu_0=1$ for convenience, and used that $\int_{a}^{b} dx \int^x_c dy |f(x-y)| \leq \int_{0}^\infty dx |f(x)| |b-a|$.

We next note that, for the integration domain of Eq.~\eqref{eq: lemma moment bound}, where  $t_{i-1}\geq t_i$,  we have $\abs{t_{n-1}-\min(s_1,...,s_{n-1})}\leq\max_{j\leq n-1}\abs{t_j-s_j} $.
Thus, for $t_{1}\geq t_2,\ldots \geq t_n$, we have 
    \begin{equation}\label{eq: lemma moment bound 4}
    \begin{aligned}
    &\int_{\min(s_1,...,s_{n-1})}^{t_{n-1}}\!\!\!\!\!\!\!\!\!\!dt_{n}\int_{t_0}^{t_{n}}\!\!\!ds_{n}\,4\gamma\norm{\mathbf J(t_n-s_n)}\max_{j\leq n}\abs{t_j-s_j}^l \leq \Gamma \left[ \mu_0\max_{j\leq n-1}\abs{t_j-s_j}^{l+1}+\bar \delta_{0l}\mu_l\max_{j\leq n-1}\abs{t_j-s_j}\right].
    \end{aligned}
\end{equation} 
Substituting this into the place of the innermost two integrals of the left-hand side in Eq.~\eqref{eq: lemma moment bound}, we thus find 
\begin{equation}
    I_{n,l}\leq \Gamma \mu_0I_{n-1,l+1}+\Gamma \mu_l\bar \delta_{0l} I_{n-1,1}
\end{equation}
Given our assumption that Eq.~\eqref{eq: lemma moment bound} holds for $n-1$ and any $l$, we can use $I_{n-1,j}\leq\sum_{(q_i)_{i=1}^{n-1}\in\mathcal{W}^{n+j-2}_{n-1}}\prod_{i=1}^{n-1}(\Gamma \mu_{q_i})$ for $j\in \{1,l+1\}$. This leads to 
\begin{equation}
        I_{n,l}\leq \Gamma \mu_0\sum_{(q_i)_{i=1}^{n-1}\in\mathcal{W}^{n+l-1}_{n-1}}\prod_{i=1}^{n-1}(\Gamma \mu_{q_i})+\Gamma \mu_l\bar \delta_{0l}\sum_{(q_i)_{i=1}^{n-1}\in\mathcal{W}^{n-1}_{n-1}}\prod_{i=1}^{n-1}(\Gamma \mu_{q_i})
\end{equation}
Rewriting the right-hand side above, we find 
\begin{equation}
        I_{n,l}\leq \sum_{(q_i)_{i=1}^{n}\in\mathcal{W}^{n+l-1}_{n}}(\delta_{0q_n}+ \delta_{lq_n}\bar \delta_{0l})\prod_{i=1}^{n }(\Gamma \mu_{q_i}).  
\end{equation}
The result now follows by using that $(\delta_{0q_n}+ \delta_{lq_n}\bar \delta_{0l})\leq 1$. 
\end{proof}

Having established these preliminary lemmas, we can now prove the main result of this section:
\begin{sproposition}[Proposition~\ref{prop:higher_order_born} in  main text: {Generalized Born approximation}]\label{propoition: n born}
    {The residual correction to the $n$th order memory kernel, $\xi_n^{\rm B}(t)$, satisfies} {the bound}
    \begin{equation}
        \norm{\xi^{\rm B}_n(t)}_{\tr}\leq \Gamma\sum_{(q_i)_{i=1}^{n}\in \mathcal{W}_{n}^{n}}\prod_{i=1}^{n}(\Gamma\mu_{q_i}),
    \end{equation}
    {where, for $n=0$, the sum on the right is given by $1$ by convention.}
\begin{proof}
    To prove this we consider the definition of $\kett{\xi_n^{\rm B}(t)}$ in Definition~\ref{defa:born_residual}.
    Using the triangle inequality, along with Lemma \ref{lemma: tracenorm bound} and our assumption that $\norm{X_\alpha}=1$, we find  $\lVert \bbra{I_{\mathcal B}}\mathcal{T} \Big\{\Big[\prod_{j=1}^{n+1}\hat{X}_{\alpha_j}^{\mu_j}(t_j)\hat{X}_{\beta_j}^{\nu_j}(s_j)\Big]\hat{\mathcal{U}}(t_{n+1},t_0)\Big\}\kett{\rho_0}\kett{\rho_{\mathcal B}}\rVert_{\tr}\leq 1$. Using this in Definition~\ref{defa:born_residual} along with  $\sum_{\mu,\nu}\sum_{\alpha,\beta}\abs{J_{\mu \nu}^{\alpha\beta}(t)}=\sum_{\alpha,\beta}4\abs{J_{\alpha\beta}(t)}=4\norm{\mathbf{J}(t)}_{1,1}$, we find 
    \begin{equation}\label{eq: dev0}
        \lVert \xi^{\rm B}_{n}(t_1)\rVert_{\tr} \leq \int_{t_0}^{t_1} ds_1\int_{s_1}^{t_1}dt_2\int_{t_0}^{t_2}ds_2\ldots\int_{\min(s_1,...,s_{n})}^{t_{n}}dt_{n+1}\int_{t_0}^{t_{n}}ds_{n+1}\prod_{i=1}^{n+1}4\gamma\norm{\mathbf{J}(t_i-s_i)}_{1,1}.
    \end{equation}
If $n=0$, the result follows immediately from Definition \ref{def: interaction rate and corr. time} of the main text of $\Gamma$. If $n\geq 1$ we bound the  two innermost integrals, using $\int_{\min(s_1,...,s_{n})}^{t_{n}}dt_{n+1}\int_{t_0}^{t_{n}}ds_{n+1}4\gamma\norm{\mathbf J(t_{n+1}-s_{n+1})}_{1,1}\leq \Gamma\abs{t_n-\min(s_1,...,s_n)}$, which implies 
  \begin{equation}\label{eq: dev1}
\int_{\min(s_1,...,s_{n})}^{t_{n}}dt_{n+1}\int_{t_0}^{t_{n}}ds_{n+1}\prod_{i=1}^{n+1}4\gamma\norm{\mathbf J(t_i-s_i)}_{1,1}\leq \Gamma\prod_{i=1}^{n}4\gamma\norm{\mathbf J(t_i-s_i)}_{1,1}\abs{t_n-\min(s_1,...,s_n)}.
    \end{equation}
        {Now,  using that $\abs{t_n-\min(s_1,...,s_n)}\leq \max_{{k\leq n}}\abs{t_k-s_k}$ in the integration domain of Eq.~\eqref{eq: dev0}, we can bound the right-hand side above as}
    \begin{equation}\label{eq: dev2}
        \int_{\min(s_1,...,s_{n})}^{t_{n}}dt_{n+1}\int_{t_0}^{t_{n}}ds_{n+1}\prod_{i=1}^{n+1}4\gamma\norm{\mathbf J(t_i-s_i)}_{1,1}\leq \Gamma\prod_{i=1}^{n}4\gamma\norm{\mathbf J(t_i-s_i)}_{1,1}\max_{j\leq n}\abs{t_j-s_j}.
    \end{equation}
     The result now follows by inserting Eq.~\eqref{eq: dev2} into Eq.~\eqref{eq: dev0} and using Lemma \ref{slemma: Jmoment bound} with $k=n$ and $l=1$.
\end{proof}
\end{sproposition}
{\ssection{Proof of Lemma~\ref{ref:bk_lemma}: bound on  moments of the memory kernel}
\label{seca:memorymoments} 
Here we prove Lemma~\ref{ref:bk_lemma} of the main text, which bounds the moment of the $n$th order memory kernel.
This result is used to prove Proposition~\ref{proposition:TightestBound} in next section.
\begin{slemma}[Lemma~\ref{ref:bk_lemma} of main text: {Moments of Born kernel}]\label{lemma M bound}
	{For $j,n\geq 0$,} we have \begin{equation}
	     \int_{t_0}^{t} ds\norm{\mathcal{K}_n(t,s)}(t-s)^j\leq M_n[j],\end{equation} where 
	\begin{equation}\label{EqDeltaMoments}
		M_n[j]\coloneq \sum_{k=1}^{n}\sum_{(q_i)_{i=1}^k\in \mathcal{W}_{k+j-1}^{k}}\prod_{i=1}^{k}(\Gamma\mu_{q_i}).
	\end{equation}
\begin{proof}
	For $n=0$, the bound is trivially satisfied. Assume therefore $n\geq 1$ From {Definition~\ref{defa: Born Kernel}}  of the $n$th order memory kernel, we first note that
	\begin{equation}
		\begin{aligned}
			&\int_{t_0}^{t}ds \norm{\mathcal{K}_n(t,s)}(t-s)^j\leq\\
			&\sum_{k=1}^{n}  \int_{t_0}^{t}ds_1\int_{s_1}^{t}dt_2\int_{t_0}^{t_1}ds_2...\int_{\min(s_1,...,s_{k})}^{t_{k}}dt_k\int_{t_0}^{t_k}ds_k \prod_{i=1}^{k}\left[4\gamma\norm{\mathbf J(t_i-s_i)}_{1,1}\right]|t-\min_{i\leq k}(s_i)|^j\label{eqa:mbound}
		\end{aligned}
	\end{equation} 
	{This follows from the triangle inequality along with $\norm{\hat{X}_\alpha^\mu (t)\hat{X}_\beta^\nu(s)J_{\mu\nu}^{\alpha\beta}(t-s)}\leq 4\norm{\bJ(t-s)}_{1,1}$. Eq.~\eqref{EqDeltaMoments}  now follows by using $|t-\min_{i\leq k}(s_i)|^j\leq |\max_{{i\leq k}}(t_i-s_i)|^j$ and subsequently using Lemma~\ref{slemma: Jmoment bound}, which bounds the $k$th term on the right hand side by $\sum_{(q_i)_{i=0}^k\in \mathcal{W}^{j+k-1}_{k}}\prod_{i=1}^{k}(\Gamma\mu_{q_i})$.} 
\end{proof}
\end{slemma}}

\ssection{proof  of Proposition~\ref{proposition:TightestBound}: {the generalized Markov approximation}}
\label{seca:markov}
{Here we  prove Proposition~\ref{proposition:TightestBound} in the main text. Specifically, we introduce  the higher-order Markov approximation, and bound its residual correction.}

{To see the principle of the generalized Markov approximation,  recall {f}rom {Proposition}~\ref{Proposition bornkernel1}} that 
\begin{equation}
    \partial_t \kett{\hat{\rho}(t)}=\int_{t_0}^t ds \mathcal K_n(t{,s}){\kett{\hat{\rho}(s)}}+\kett{\xi^\rB_n(t)}
    \label{eqa:nzn}
\end{equation}
from which it follows
\begin{equation}\label{EqMarkov1}
	\kett{\hat{\rho}(s)}=\kett{\hat{\rho}(t)}-\int_{s}^{t}dt_1\left[\int_{t_0}^{t_1}ds_1\mathcal{K}_n(t_1,s_1)\kett{\hat{\rho}(s_1)}-\kett{\xi^\rB_{n}(t_1)}\right].
\end{equation}
{We make the ${m}$th order generalized Markov approximation by recursively substituting the above relation into itself ${m}$ times,  discarding the second term at at the last iteration, and inserting the result into Eq.~\eqref{eqa:nzn}. 
A subset of terms yielded by this procedure defines a Markovian quantum master equation. We identify the remaining terms as its residual correction.}
Specifically, let us make the following definitions: 
\begin{sdefinition}[\it Definition~\ref{def: born markov generator} of the main text: Order $(m,n)$ dissipator] \label{def: born markov generator}
	We define the order $(m,n)$ dissipator  as 
	\begin{equation} \label{eq: born markov generator}
		\Delta_{mn}(t)\coloneq\sum_{j=0}^{m-1}(-1)^j\int_{t_0}^t ds\int_{s}^{t}dt_1\int_{t_0}^{t_1}ds_1...\int_{s_{j-1}}^{t}dt_j\int_{t_0}^{t_j}ds_j\mathcal{K}_n(t,s)\mathcal{K}_n(t_1,s_1)\cdots \mathcal{K}_n(t_j,s_j).
	\end{equation}
\end{sdefinition}

{\begin{sdefinition}[\it Correction to the order $(m,n)$ dissipator] 
\label{sdef:bmerror}We define the residual correction to the  order $(m,n)$ dissipator as 
    \begin{equation}
    \begin{aligned}
    \label{eqa:BornMarkovErrorExpression}
        \kett{\xi_{mn}(t)} \coloneq 
		&(-1)^m\int_{t_0}^{t}ds\int_{s}^{t}dt_1\int_{t_0}^{t_1}ds_1...\int_{s_{m-1}}^{t}dt_m\int_{t_0}^{t_m}ds_m\mathcal{K}_n(t,s)\mathcal{K}_n(t_1,s_1)\cdots \mathcal{K}_n(t_m,s_m)\kett{\hat{\rho}(s_m)}+\kett{\xi_n^{\rB}(t)} \\
		&-\sum_{k=0}^{m-1} (-1)^k\int_{t_0}^{t}ds_0\int_{s_0}^{t}dt_1\int_{t_0}^{t_1}ds_1...\int_{s_{k-1}}^{t}dt_k\int_{t_0}^{t_k}ds_k\int_{s_k}^t dt_{k+1}\mathcal{K}_n(t,s)\mathcal{K}_n(t_1,s_1)\cdots \mathcal{K}_n(t_k,s_k)\kett{\xi_{n}^{\rm B}(t_k)}
		\end{aligned}
	\end{equation}
\end{sdefinition}}
To justify these definitions, we now show that the evolution generated by the Markovian quantum master equation with dissipator  $\Delta_{mn}(t)$ indeed has an error given by $\kett{\xi_{mn}(t)}$  compared to the exact evolution:
\newcommand{\rM}{{\rm M}}
\newcommand{\rBM}{{\rm BM}}
\begin{slemma}\label{LemmaBornMarkovIteration}
Let $\hat \rho(t)$ be the solution to Eq.~\eqref{Eq1}. 
If $\Gamma^i\mu_i<\infty$ for $i=1,...,m+n-1$, then 
\begin{equation}\label{EqBornMarkovRecursion}
		\begin{aligned}
		\partial_t\kett{\hat{\rho}(t)}=&\hat \Delta_{mn}(t)\kett{\hat{\rho}(t)}+\kett{\xi_{mn}(t)}
        \end{aligned}\end{equation}
\begin{proof}
   We prove this by induction.
   
   As the  induction start, we first prove  that Eq.~\eqref{EqBornMarkovRecursion} holds for {$m=0$}.
   To this end, we note that 
   \begin{equation}
    \begin{aligned}\kett{\xi_{0n}(t)} = 	\int_{t_0}^{t}ds\mathcal{K}_n(t,s)\kett{\hat{\rho}(s)}+\kett{\xi_{n}^\rB(t)}  .
        \end{aligned}\end{equation}
   From Proposition~\ref{Proposition bornkernel1} we identify the right-hand side above as $\kett{\hat \rho(t)}$. 
   Hence Eq.~\eqref{EqBornMarkovRecursion} holds for $m=0$, since    $\Delta_{0n}(t)=0$.   
We next prove the induction step. Specifically, we shall prove that  Eq.~\eqref{EqBornMarkovRecursion} holds for $m=m_0+1$ given that it holds for $m=m_0$ for some $m_0 \geq 0$. 
    To this end, we insert the recursive relation  in Eq.~\eqref{EqMarkov1} once into  the definition in Definition~\ref{sdef:bmerror} of $\kett{\xi_{mn}(t)}$, to  reexpress $\kett{\hat\rho(s_m)}$.
    We identify the two terms that result from the first and second term of  Eq.~\eqref{EqMarkov1}  as $[\Delta_{(m+1)n}(t)-\Delta_{mn}(t)]\kett{\hat{\rho}(t)}$ and $\kett{\xi_{(m+1)n}(t)}$, respectively. 
    Thus $$\kett{\xi_{mn}(t)}=[\Delta_{(m+1)n}(t)-\Delta_{mn}(t)]\kett{\hat{\rho}(t)}+\kett{\xi_{(m+1)n}(t)}.$$ 
    Since we assume Eq.~\eqref{EqBornMarkovRecursion} holds for $m=m_0$, we see, by simple rearrangement, that it also holds for $m=m_0+1$, concluding the proof.
\end{proof}\end{slemma}
Having found an explicit expression for the correction $\kett{\xi_{mn}(t)}$, we next seek to bound it. 
To this end we make use of the following lemmas:
\begin{slemma}\label{lemma:bmbound}
    The correction to the order $(m,n)$ dissipator, $\kett{\xi_{mn}(t)}$, satisfies
\begin{equation}
    \norm{{\xi_{mn}(t)}}_{\rm tr}\leq f_{n}[m,0]+ \varepsilon_n\left(1+\sum_{k=0}^{m-1}f_n[k,1]\right).
\end{equation}
where $\varepsilon_n$ denotes the bound on $\norm{\xi_n^{\rm B}(t)}_{\rm tr}$ in Proposition~\ref{propoition: n born}, i.e., 
\begin{equation}
    \varepsilon_n {\coloneq} \Gamma\sum_{(q_i)_{i=1}^{n}\in \mathcal{W}_{n}^{n}}\prod_{i=1}^{n}(\Gamma\mu_{q_i}),
\end{equation}
and 
	\begin{equation}
		f_n[m,j
        ]\coloneqq \int_{t_0}^{t}ds\int_{s}^{t}dt_1\int_{t_0}^{t_1}ds_1...\int_{s_{m-1}}^{t}dt_m\int_{t_0}^{t_m}ds_m\norm{\mathcal{K}_n(t,s)\mathcal{K}_n(t_1,s_1)\cdots \mathcal{K}_n(t_m,s_m)(t- s_m)^j}.
        \label{eqa:fndef}
	\end{equation}
\begin{proof}
    This result follows straightforwardly from using the triangle inequality in Eq.~\eqref{eqa:BornMarkovErrorExpression} along with the submultiplicative property of the superoperator norm, and the fact that     {$\norm{\int_{s_k}^t dt_k\kett{\xi_n^\rB(t_k)}}_{\rm tr}\leq |t-s_k|\varepsilon_n$, which holds by Proposition~\ref{propoition: n born}.}
\end{proof}
\end{slemma}

{We now seek to bound the functions $f_n[m,j]$, by establishing a recursive relation among them. }
\begin{slemma}
\label{lemma: f(m,j) bound}
	The function $f_n[m,j]$, as defined in Eq.~\eqref{eqa:fndef}, satisfies
	\begin{equation}
		f_n[m,j]\leq \sum_{k=0}^{j}\binom{j}{k}f_n[0,k]f_n[m-1,j-k+1].
        \label{eqa:fn_recursion}
	\end{equation}
\begin{proof}
    First we note that, on the  integration domain of the integral in Eq.~\eqref{eqa:fndef},  $s_{m-1}\leq t_m$ and $s_m\leq t_m$, implying  $t-s_m\leq (t-s_{m-1})+(t_m-s_m)$. {The lemma follows by using {this fact} along with  the triangle inequality and the binomial expansion.}
\end{proof}
\end{slemma}

\begin{scorollary} \label{CorollaryBornKernelMoments}
{For $j,m\in\mathbb{N}_0$, let $f_n[m,j]$ be defined as above. Then}
	\begin{equation}
		\begin{aligned}
        \label{eqa:fn_expansion}
		f_n[m,j]&\leq \sum_{k_{1},\ldots k_m=0}^{\infty}M_n\left[j+m-\sum_{i=1}^mk_i\right]\prod_{l=1}^m 
        \binom{j+l-1-\sum_{i=1}^{l-1}k_{i}}{k_{l}}M_n[k_l]    .
	\end{aligned}
	\end{equation}
 where $M_n[j]$ is defined in Eq.~\eqref{EqDeltaMoments}, and we use the convention $\binom{a}{b}=0$ if $b>a$ and if $a<0$.
    \begin{proof}
        This follows straightforwardly by induction: for $m=0$, Eq.~\eqref{eqa:fn_expansion} reduces to $f_n[0,j]\leq M_n[j]$, which holds {due to Lemma~\ref{lemma M bound}}. 
        Then, assuming Eq.~\eqref{eqa:fn_expansion} holds for $m=m_0$ for some $m_0\geq 0$, it is straightforward to show that it holds for $m=m_0+1$ by inserting Eq.~\eqref{eqa:fn_expansion} into Eq.~\eqref{eqa:fn_recursion} and using $f_n[0,j]\leq M_n[j]$. 
    \end{proof}
\end{scorollary}

By combining  Proposition~\ref{propoition: n born} with Lemma~\ref{lemma:bmbound}, and Corollary~\ref{CorollaryBornKernelMoments}, we   can now establish Proposition~\ref{proposition:TightestBound} of the main text, which is the goal of this section.
\begin{sproposition}[Proposition~\ref{proposition:TightestBound} of the main text: Generalized Markov approximation, tightest bound]\label{proposition:TightestBound appendix}
For $m,n\geq 0$, the residual correction to the order $(m,n)$ dissipator, $\kett{\xi_{mn}(t)}$, satisfies
\begin{equation}
\begin{aligned}
    \norm{{\xi_{mn}(t)}}_{\rm tr}\leq  \sum_{(q_i)_1^{m} \in \mathcal W_{m+1}^{m} } 
    \prod_{l=1}^{{m+1}}
    \binom{l-1-\sum_{i=1}^{l-1}q_{i}}{q_{l}} M_n[q_{l}] 
    +\varepsilon_{n}\sum_{k={0}}^{m}\sum_{(q_i)_1^{{k}} \in \mathcal W_{k}^k}
    \prod_{{l}=1}^{k} \binom{l-\sum_{i=1}^{l-1}q_{i}}{q_{l}}M_n[q_{l}].
        \label{eqa:tight_markov_error_bound_app}
        \end{aligned}
\end{equation}
where the $k=0$ term in the second sum is $1$ by convention, $M_n[j] {\coloneq \sum_{k=1}^{n}\sum_{(q_i)_{i=1}^k\in \mathcal{W}_{k+j-1}^{k}}\prod_{i=1}^{k}(\Gamma\mu_{q_i}).}$ is defined in Eq.~\eqref{EqDeltaMoments} [Eq.~\eqref{eq:mdef} of the main text], and $ \varepsilon_{n}=  \Gamma\sum_{(q_i)_{i=1}^n\in \mathcal{W}_{n}^n}\prod_{i=1}^{n}(\Gamma\mu_{\ell_i})$ denotes the Born error bound from Proposition~\ref{propoition: n born}.
\end{sproposition}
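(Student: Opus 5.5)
The proof will combine the three ingredients already established: Lemma~\ref{lemma:bmbound}, Corollary~\ref{CorollaryBornKernelMoments}, and the Born bound $\varepsilon_n$ from Proposition~\ref{propoition: n born}.

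First, I invoke Lemma~\ref{lemma:bmbound}, which gives
\[
\norm{\xi_{mn}(t)}_{\rm tr}\leq f_n[m,0]+\varepsilon_n\Big(1+\sum_{k=0}^{m-1}f_n[k,1]\Big).
\]
It then remains to bound $f_n[m,0]$ and $f_n[k,1]$ by the combinatorial sums appearing in Eq.~\eqref{eqa:tight_markov_error_bound_app}.

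For the first term, I apply Corollary~\ref{CorollaryBornKernelMoments} with $j=0$. The summation indices are $k_1,\dots,k_m$, and the leftover argument of the outer factor, $M_n[m-\sum_i k_i]$, is renamed as a final index $q_{m+1}\coloneq m-\sum_{i\le m}q_i$ with $q_i=k_i$. The binomial convention $\binom{a}{b}=0$ for $b>a$ forces every partial sum to satisfy $\sum_{i\le l}q_i\le l$; in particular the $q_i$ are nonnegative and sum to $m$. Hence the sum over $k$'s is exactly a sum over weak compositions $(q_i)_{i=1}^{m+1}\in\mathcal W^{m}_{m+1}$. The product $\prod_{l=1}^m\binom{l-1-\sum_{i<l}q_i}{q_l}M_n[q_l]$ can then be extended to $l=m+1$ by including the factor $\binom{m-\sum_{i\le m}q_i}{q_{m+1}}=1$. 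This reproduces the first term of Eq.~\eqref{eqa:tight_markov_error_bound_app}.

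For the second term, I apply Corollary~\ref{CorollaryBornKernelMoments} with $j=1$ and $m\to k$, where $k=0,\dots,m-1$. The binomials become $\binom{l-\sum_{i<l}q_i}{q_l}$, and the outer factor $M_n[k+1-\sum_i q_i]$ is absorbed as an extra index $q_{k+1}$. Shifting $k\to k+1$, this yields, for $k=1,\dots,m$, a sum over $(q_i)_1^{k}$ with total $k$ and product $\prod_{l=1}^{k}\binom{l-\sum_{i<l}q_i}{q_l}M_n[q_l]$. The binomial at $l=k$ is $\binom{k-\sum_{i<k}q_i}{q_k}=1$, so inserting it is harmless. The leading $1$ in Lemma~\ref{lemma:bmbound} is the $k=0$ term, by the stated convention.

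The main obstacle is the index bookkeeping: the Corollary sums over unrestricted $k_i\ge0$ with a free final $M_n$ argument, and I must verify that the vanishing binomials restrict exactly to weak compositions. They do, by the partial-sum constraint $\sum_{i\le l}q_i\le l$ (respectively $\le l-1$ plus the shift) together with the convention that $\binom{a}{b}=0$ when $a<0$.

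I would also double-check the base cases and edge conventions: $m=0$ gives $f_n[0,0]\le M_n[0]$, and $n=0$ gives $M_0=0$ and $\varepsilon_0=\Gamma$. Finally, I would confirm that the finiteness assumption $\Gamma^i\mu_i<\infty$ for the needed moments makes all terms well defined.
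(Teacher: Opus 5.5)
Your proposal is correct and follows the paper's proof. Like the paper, you combine Lemma~\ref{lemma:bmbound} with Corollary~\ref{CorollaryBornKernelMoments} at $j=0$ and $j=1$, then identify the unrestricted sums with the weak-composition sums by treating the leftover $M_n$ argument as a final index whose binomial factor is $1$; the paper does the same identification in reverse, inserting a Kronecker delta and summing out $q_{m+1}$. One small slip: you have the vanishing-binomial constraints swapped. They are $\sum_{i\le l}q_i\le l-1$ for the $j=0$ term and $\sum_{i\le l}q_i\le l$ for the $j=1$ term. Either version still makes the final index nonnegative, so the argument is unaffected.
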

\begin{proof}
By combining  Lemma~\ref{lemma:bmbound} and Corollary~\ref{CorollaryBornKernelMoments}, we find 
\begin{equation}
\begin{aligned}
    \norm{{\xi_{mn}(t)}}_{\rm tr}\leq &\sum_{q_{1},\ldots q_m=0}^{\infty}M_n\left[m-\sum_{i=1}^mq_i\right]\prod_{l=1}^m 
        \binom{l-1-\sum_{i=1}^{l-1}q_{i}}{q_{l}}M_n[q_l]  
        \\& + \varepsilon_n \Bigg(1+\sum_{a=0}^{m-1}\sum_{q_{1},\ldots q_a=0}^{\infty}M_n\left[a+1-\sum_{i=1}^aq_i\right]\prod_{l=1}^a 
        \binom{l-\sum_{i=1}^{l-1}q_{i}}{q_{l}}M_n[q_l]  \Bigg)
        \label{eqa:xi1expr}
        \end{aligned}
\end{equation}
{We now note that the right-hand side of Eq.~\eqref{eqa:tight_markov_error_bound_app} is identical to the right-hand side above: to see this, consider first the first term on the right-hand side of  Eq.~\eqref{eqa:tight_markov_error_bound_app}. We have $\sum_{(q_i)_1^{m} \in \mathcal W_{m+1}^{m} } = \sum_{q_1,\ldots q_{m+1} }\delta[x-\sum_{i=1}^{m+1}q_{i}]$ with $\delta[0]=1$, and $\delta[x]=0$ for $x\neq 0$. Furthermore
$\binom{l-1-\sum_{i=1}^{l-1}q_{i}}{q_{m+1}}=1$ for $l=m+1$. Evaluating the sum over $q_{m+1}$ after performing these substitutions hence recovers the first term of Eq.~\eqref{eqa:xi1expr}. The same line of arguments allows us to identify the second term in Eq.~\eqref{eqa:tight_markov_error_bound_app} with the second term of Eq.~\eqref{eqa:xi1expr}. Thus, the right-hand sides of Eq.~\eqref{eqa:tight_markov_error_bound_app} and Eq.~\eqref{eqa:xi1expr} are identical, from which the result immediately follows.}
\end{proof}

\ssection{proof of Lemma~\ref{lemma:simple_bound}: simplified bound on $\kett{\xi_{mn}(t)}$.} 
\label{seca:simple}
{In this section we prove Lemma~\ref{lemma:simple_bound} of the main text, i.e. a simplified bound on  the norm of the correction to the order ($m,n$) dissipator,  $\kett{\xi_{mn}(t)}$.
}

\newcommand{\xnc}{\kett{\xi_{m_{\rm cut},n_{\rm cut}}}}

Our derivation proceeds by first bounding the prefactors $M_n[j]$ and $\varepsilon_n$  in proposition~\ref{proposition:TightestBound appendix} in terms of the timescale $\tau_0$ defined in {the assumptions of} Lemma~\ref{lemma:simple_bound} [Lemma~\ref{slemma:mbound} below]. Subsequently we use this to establish Lemma~\ref{lemma:simple_bound} {of the main text [Proposition \ref{PropositionMainExponential} below]}.

We first present a simple lemma that will be needed  to bound a combinatorial sum below.
\begin{slemma}\label{lemma: binomial sum bound}
     Let $0\leq x<1$. Then \begin{equation}
        \sum_{k=0}^{n-1}\binom{j+k}{j}x^k\leq \frac{{1}
        }{(1-x)^{j+1}}.
    \end{equation}
\end{slemma}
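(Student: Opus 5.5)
The plan is to compare the truncated sum with the full negative-binomial series. For $0\le x<1$ and $j\ge 0$ we have the identity
\begin{equation}
\sum_{k=0}^{\infty}\binom{j+k}{j}x^k=\frac{1}{(1-x)^{j+1}}.
\end{equation}
Every term in this series is non-negative because $x\ge 0$. The finite sum up to $k=n-1$ is therefore bounded by the full series, which gives the claim directly.

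The main work is establishing the series identity itself, and there are two natural routes.

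The first route starts from the geometric series $\sum_{k\ge 0}x^k=(1-x)^{-1}$, valid for $|x|<1$. Differentiate it $j$ times termwise, which is legitimate inside the radius of convergence. Dividing by $j!$ gives $\sum_{k\ge 0}\binom{k+j}{j}x^{k}=(1-x)^{-(j+1)}$.

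The second route avoids analytic considerations by inducting on $j$. The case $j=0$ is the geometric series. For the step, use the Cauchy product $(1-x)^{-(j+2)}=(1-x)^{-(j+1)}\cdot(1-x)^{-1}$; both factors have non-negative terms and converge absolutely. The coefficient of $x^k$ in the product is $\sum_{i=0}^{k}\binom{j+i}{j}$, and the hockey-stick identity gives $\sum_{i=0}^{k}\binom{j+i}{j}=\binom{j+1+k}{j+1}$.

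Both routes are standard, so I do not expect a real obstacle. The one point to handle with care is that $x\ge0$ is needed for the comparison between the truncated and the full sum. The hypothesis $x<1$ is what guarantees convergence of the series. The edge cases are also fine: at $x=0$ both sides equal $1$, and for $n=0$ the left-hand side is the empty sum, which is trivially bounded.
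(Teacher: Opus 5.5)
Your proof is correct but takes a different route from the paper's. The paper never uses the infinite series. It multiplies the truncated sum by $(1-x)$ and telescopes. It then discards the nonpositive boundary term $-\binom{j+n-1}{j}x^n$, which is where $x\ge 0$ enters. Pascal's rule $\binom{j+k}{j}-\binom{j+k-1}{j}=\binom{j+k-1}{j-1}$ then gives
\[
(1-x)\sum_{k=0}^{n-1}\binom{j+k}{j}x^k\le\sum_{k=0}^{n-1}\binom{j-1+k}{j-1}x^k .
\]
Iterating this $j$ times, which is legitimate because $1-x>0$, reduces everything to the finite geometric sum $(1-x^n)/(1-x)\le 1/(1-x)$.

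This is essentially the finite, one-step-at-a-time version of your second route. Multiplying by $(1-x)$ undoes one Cauchy product with $(1-x)^{-1}$, and Pascal's rule is the single-step form of the hockey-stick identity. Because every sum stays finite, the paper needs no convergence statements, no termwise differentiation and no Cauchy-product theorem; the argument is purely algebraic.

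Your version gives the cleaner picture. It identifies the right-hand side as exactly the $n\to\infty$ limit of the left-hand side. This makes it immediate that the bound is uniform in $n$ and is the best possible $n$-independent bound. The cost is relying on the closed form of the negative binomial series, which, as you note, is standard.
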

\begin{proof}
{    We first use that  $(1-x)\sum_{k=0}^{n-1}\binom{j+k}{j}x^k\leq 1+\sum_{k=1}^{n-1}\left[\binom{j+k}{j}-\binom{j+k-1}{j}\right]x^k$ to obtain 
$$(1-x)\sum_{k=0}^{n-1}\binom{j+k}{j}x^k \leq  \sum_{k=0}^{n-1}\binom{j-1+k}{j-1}x^k,$$
    where we used $\binom{a}{b}-\binom{a-1}{b}=\binom{a-1}{b-1}$.
    Hence, by induction we have $$(1-x)^j\sum_{k=0}^{n-1}\binom{j+k}{j}x^k\leq \sum_{k=0}^{n-1}\binom{k}{0}x^k.$$ Since $\binom{k}{0}=1$, we identify the right-hand side above as $\frac{1-x^n}{1-x}$. The result follows when using $1-x^n\leq 1$.}
\end{proof}
{With these preparations in place, we are now ready to bound $M_n[j]$:}
\begin{slemma}
\label{slemma:mbound}
    Let $\tau_0>0$, {and let $M_n[j]\coloneq 	 \sum_{k=1}^{n}\sum_{(q_i)_{i={1}}^k\in \mathcal{W}^{j+k-1}_{k}}\prod_{i=1}^{k}(\Gamma\mu_{q_i})$   denote the bound on the $j$th moment of the order $n$ Born kernel, 
    as defined in Eq.~\eqref{EqDeltaMoments}}. For a Gaussian bath with $\mu_i<i! \tau_0^i$ for $i=1\ldots j+n-1$ and $4\Gamma \tau_0[n+j-1]<1$, we have
    \begin{equation}
             M_n[j]\leq \frac{j!\Gamma\tau_0^{j}}{(1-4(j+n-1)\Gamma \tau_0)^{j+1}},
    \end{equation}
    \begin{proof}
   {Note that, for $k\leq n$,  we have   $q_i \leq j+n-1$ for $(q_1,\ldots q_k)\in \mathcal{W}^{j+k-1}_{k}$.   Hence, by  our assumption that $\mu_i<i! \tau_0^i$ for $i=1\ldots j+n-1$, we have  $\mu_{q_i}<q_i! \tau_0^{q_i}$ for  all weak composition entering in the definition of $M_n[j]$ above.}
   Thus, 
\begin{equation}
    		M_n[j]\leq  \sum_{k=1}^{n}\Gamma^k \tau_0^{j+k-1}\sum_{(q_i)_{i={1}}^k\in \mathcal{W}^{j+k-1}_{k}}\prod_{i=1}^{k} q_i!.
\end{equation}
We next use that $a!b! \leq (a+b)!$ and  $|\mathcal W_{a}^b|=\binom{a+b-1}{{b}}$, and shift the summation variable $k$ by $1$, to find 
\begin{equation}\label{eq: Mbound}
    		M_n[j]\leq  \Gamma\tau_0^{j}\sum_{k=0}^{n-1} (j+k)! (\Gamma\tau_0)^k\binom{j+2k}{j+k}.
\end{equation}
Now note that $k\leq n-1$ in the sum above, so that $(j+k)!\leq j! (j+n-1)^k$. Furthermore, $\binom{j+2k}{j+k}\leq \binom{2k}{k}\binom{j+k}{j} \leq 4^k\binom{j+k}{k}${, where the first inequality can be easily proved by induction on $j$ using $\frac{j+1+2k}{j+1+k}\leq \frac{j+1+k}{j+1}$}. Thus, 
    \begin{equation}
       M_n[j]\leq j!\Gamma\tau_0^{j}\sum_{k=0}^{n-1}\binom{j+k}{k}[4(j+n-1)\Gamma \tau_0]^k
    \end{equation}
We now invoke Lemma~\ref{lemma: binomial sum bound}, from which the result follows.
\end{proof}
\end{slemma}
{Having bounded  $M_n[j]$, our next task is to bound $\varepsilon_n$:}
\begin{slemma}\label{lemma: Born bound}
 {Let $\tau_0\geq 0$, and let $\varepsilon_n$ be defined as in Eq.~\eqref{eq:epsilonbdef} of the main text, i.e., $      \varepsilon_n\coloneq \sum_{(q_i)_{i=1}^{n}\in \mathcal{W}_{n}^{n}}\prod_{i=1}^{n}\Gamma \mu_i
$. For a Gaussian bath with  $\mu_i<i! \tau_0^i $ for $i=1,\ldots n$, we then have  
    \begin{equation}\label{eq: epsilon_bound}
        \varepsilon_n \leq \Gamma (4\Gamma\tau_0)^n n!.
    \end{equation}
    \begin{proof}
        We first note that, by our assumption on $\{\mu_i\}$, $\mu_{q_i} \leq q_i!\tau_0^{q_i}$ for all weak compositions $(q_1,\ldots q_n)$ in $\mathcal W_{n}^n$. Additionally, $\sum_i q_i=n$. 
        Using this in the sum defining  $\varepsilon_n$ above, we find \begin{equation}
                \varepsilon_n\leq \Gamma^{n+1} \tau_0^n \sum_{(q_i)_{i=1}^{n}\in \mathcal{W}_{n}^{n}}\prod_{i=1}^{n}q_i! . 
        \end{equation}
    Next, we use $\prod_i q_i ! \leq (\sum_i q_i)! $ and $|\mathcal W_n^n|=\binom{2n-1}{n}$ to obtain
     \begin{equation}
        \varepsilon_n\leq \Gamma^{n+1} \tau_0^n  n!  \binom{2n-1}{n}.
        \label{eqa:epsbinom}
     \end{equation}
Eq.~\eqref{eq: epsilon_bound} follows when using $\binom{2n-1}{n}\leq 4^n$.
 \end{proof}}
\end{slemma}
{We  now substitute our bounds  for $\varepsilon_n$ and $M_n[j]$ into Proposition~\ref{proposition:TightestBound appendix} to obtain a bound $\norm{\xi_{mn}(t)}_{\rm tr}$ and thereby establish the main result of this section. In this process, we need to bound the product of combinatorial factors  in Eq.~\eqref{eqa:tight_markov_error_bound_app} that remain after this substitution. To this end, we establish the following useful property of  the set of weak compositions:}
\begin{slemma}\label{lemma:combinatorial_lemma}
For $j\in \{0,1\}$, we have  
\begin{equation}
		\begin{aligned}
		   \sum_{(q_i)_1^{m+j} \in \mathcal W_{m+j}^{m}} 
          \prod_{l=1}^{m+j} \binom{l-j-\sum_{i=1}^{l-1}q_i}{q_i} = {m!} 
        \end{aligned}
        \end{equation}
\begin{proof}
To see the result, note that 
	\begin{equation}
    \label{EqCombinatoricRelation}
		\begin{aligned}
		   \sum_{(q_i)_1^{{m+j}} \in \mathcal W_{m+j}^{m}} 
         \prod_{l=1}^{m+j}  \binom{l-j-\sum_{i=1}^{l-1}q_i}{q_l} =  \sum_{q_1,\ldots q_{m+j}=0}^{\infty} \delta\left[m-\sum_{i=1}^{m+j}q_i\right]\prod_{l=1}^{m+j} \binom{l-j-\sum_{i=1}^{l-1}q_i}{q_l}
        \end{aligned}
        \end{equation}
        Note that the last ($l=m+j$) factor in the product above is $1$, since {$q_{m+j}$} must equal {$m-\sum_{i=1}^{m+j-1} q_i$} for the summand in the right-hand side above to be nonzero. 
        Thus, evaluating the sum over $q_{m+j}$ we find 
        \begin{equation}
		\begin{aligned}
		   \sum_{(q_i)_1^{m+j} \in {\mathcal W^{m}_{m+j}}} 
         \prod_{l=1}^{m+j}  \binom{l-j-\sum_{i=1}^{l-1}q_i}{q_l} =  \sum_{q_1,\ldots q_{m+j-1}=0}^{\infty} \prod_{l=1}^{m+j-1} \binom{l-j-\sum_{i=1}^{l-1}q_i}{q_l}
        \end{aligned}
        \end{equation}
We now note that
	\begin{equation}
    \label{EqCombinatoricRelation1}
		\begin{aligned}
		   \sum_{q_1\ldots q_k=0}^\infty  
          \prod_{l=1}^k \binom{l-1+a-\sum_{i=1}^{l-1}q_i}{q_{l}} ={k! (k+1)^a}. 
	\end{aligned}
	\end{equation}
   {This follows by induction when using} that $\sum_{k=0}^{\infty}\binom{n}{k}m^{n-k+1}=m(m+1)^n$. Eq.~\eqref{EqCombinatoricRelation} now follows by using the above  relation in Eq.~\eqref{EqCombinatoricRelation1} with $k={m+j-1}$ and $a=1-j$, and noting that $(m+j-1)!(m+j)^{1-j}=m!$ for $j\in\{0,1\}$.
\end{proof}
\end{slemma}
{With these preparations in place, we are now ready to establish Lemma~\ref{lemma:simple_bound} of the main text:}
\begin{sproposition}[Simplified bound]\label{PropositionMainExponential}
Let $0<\Gamma\tau_0<\frac{1}{4}$ and let $n,m\in \mathbb{N}_{0}$ be such that $4\Gamma\tau_0[n+m-1]{\leq}
1$. For a Gaussian bath with $\mu_i < i! \tau_0^i$ 
for $i=1,...,m+n-1$, 
we have
\begin{equation}
	     \begin{aligned}
     	        \frac{\norm{\xi_{mn}(t)}_{\rm tr}}{\Gamma} \leq&\frac{ [(m-1)!+1]m! (\Gamma \tau_0)^{m}}{(1-4\Gamma \tau_0 (m+n-1))^{2m+1}} 
     +n!(4\Gamma \tau_0)^n\sum_{k=0}^{m} \frac{[(k-1)!+1]k!(\Gamma \tau_0)^{k}}{(1-4\Gamma \tau_0(m+n-1))^{2k}}.
\label{eqa:simplebound}	     \end{aligned}
	 \end{equation}
      for all $t\in (t_0,\infty)$, with the convention that $(-1)!=0$.
{Lemma~\ref{lemma:simple_bound} from the main text follows directly from the above using $1-4\Gamma \tau_0(m+n-1)\leq 1$ and $(k-1)!+1 \leq k!$.}
\begin{proof}
{We first consider the cases $n=0$.} 
Note that the right-hand side is larger than  $\Gamma$ for $n=0$, while $\mathcal K_0(t,s)=0$, and thus $\kett{\xi_{m0}(t)}=\partial_t \kett{\hat \rho(t)}$. Hence the bound above is trivially satisfied for $n=0$, since $\norm{\partial_t \kett{\hat \rho(t)}}_{\rm tr}\leq \Gamma$ by Ref.~\cite{nathan2020universal_sm} [see also Eq.~\eqref{Eq2}]. 

{We next consider the case where $n\geq 1$.
We seek to bound the right-hand side of Proposition~\ref{proposition:TightestBound appendix}. 
We first focus on bounding the factors of $M_n$ that appear here. 
To this end, let $$(q_1,\ldots q_{a})\in \mathcal W_{a}^b$$ for some $b\leq m$ and $a\geq 1$ that we will pick later. We note that $q_i\leq b\leq m+n-1$ for all $i=1,...,a$, by our assumptions $n\geq 1$ and $b\leq m$. Thus $\mu_{q_l} \leq q_l! \tau_0^{q_l}$ for all $l=1,\ldots a$. Furthermore, since we assume $4\Gamma \tau_0[m+n-1]\leq 1$, we in particular have, for all $l$, $4\Gamma \tau_0[q_l+n-1]\leq 1$ and $\mu_i \leq i! \tau_0^i$ for  $i=1,\ldots q_l+n-1$. This  means   Lemma~\ref{slemma:mbound} applies to  $M_n[q_l]$ for each $l$, implying: 
    \begin{equation}
    M_n[q_l]\leq \frac{ q_l! \Gamma \tau_0^{q_l} }{(1-4(m+n-1)\Gamma \tau_0)^{m+k}}.
    \end{equation}
Using $\sum_l q_l = b $ for $(q_1,\ldots q_{a})\in \mathcal W_a^b$, we thus find 
\begin{equation}
    \prod_{l=1}^{a} M_n[q_l]\leq \frac{(\prod_{l=1}^{a}q_l!)\Gamma^a \tau_0^b}{(1-4(m+n-1)\Gamma \tau_0)^{a+b}}.
    \end{equation}}
We now seek to bound the product $\prod_{i=1}^k q_i!$. We first consider the case where $(q_1,\ldots q_a)$ is in the subset $\mathcal S_{a}^b\subseteq \mathcal W_a^b$ of weak compositions  $(q_1,\ldots q_a)$ where 
    $q_l = b$ for exactly one  choice of $l$ (with  $q_l=0$ for all other choices of $l$). In this case we find $\prod_i q_i!=b!$, and thus
    \begin{equation}
    \prod_{l=1}^{a} M_n[q_l]\leq \frac{b!\Gamma^a \tau_0^b}{(1-4(m+n-1)\Gamma \tau_0)^{a+b}} \quad{\rm for}\quad (q_1,\ldots q_a)\in \mathcal S_{a}^b.
    \end{equation}
    On the other hand, if $(q_1,\ldots q_a)$ is {\it not} in this subset, i.e., in $\bar{\mathcal  S}_a^b = \mathcal W_a^b/ \mathcal S_a^b$,  we must have $q_l\geq 1$ for at least two choices of $l$. In this case, we have $\prod_{i}q_i! \leq (q-1)! $. This can be shown using   $\alpha!\beta!\leq (\alpha+\beta -1)!$ for $\alpha,\beta\geq 1$ and $\alpha!\beta! \leq (\alpha+\beta)!$ if $\alpha,\beta\geq0$.
    Hence, 
    \begin{equation}
    \prod_{l=1}^{a} M_n[q_l]\leq \frac{(b-1)!\Gamma^a \tau_0^b}{(1-4(m+n-1)\Gamma \tau_0)^{a+b}}\quad{\rm for}\quad (q_1,\ldots q_k)\in \bar{\mathcal S}_{a}^b.
    \end{equation}

   Next, we insert the results above in Proposition~\ref{proposition:TightestBound appendix}, with $a=m+1$ and $b=m$ (for the first term) and $a=b=k$ (for the seond term) to obtain 
   \begin{equation}
\begin{aligned}
    \norm{{\xi_{mn}(t)}}_{\rm tr}&\leq   \frac{m!\Gamma^{m+1} \tau_0^m}{(1-4(m+n-1)\Gamma \tau_0)^{2m+1}} \sum_{(q_i)_1^{{m+1}} \in \mathcal \mathcal S_{m+1}^{m} } 
   \prod_{l=1}^{{m+1}} \binom{l-1-\sum_{i=1}^{l-1}q_{i}}{q_{l}} \\
   &+   \frac{(m-1)!\Gamma^{m+1} \tau_0^m}{(1-4(m+n-1)\Gamma \tau_0)^{2m+1}} \sum_{(q_i)_1^{{m+1}} \in \bar {\mathcal S}_{m+1}^{m} } 
   \prod_{l=1}^{{m+1}} \binom{l-1-\sum_{i=1}^{l-1}q_{i}}{q_{l}} \\
    &+\varepsilon_{n}\sum_{k={0}}^{m}  \frac{k!\Gamma^{k} \tau_0^k}{(1-4(m+n-1)\Gamma \tau_0)^{2k}}\sum_{(q_i)_1^{{k}}\in \mathcal  S_{k}^k}
    \prod_{{l}=1}^{k} \binom{l-\sum_{i=1}^{l-1}q_{i}}{q_{l}}\\
    &+\varepsilon_{n}\sum_{k={0}}^{m}  \frac{(k-1)!\Gamma^{k} \tau_0^k}{(1-4(m+n-1)\Gamma \tau_0)^{2k}}\sum_{(q_i)_1^{{k}} \in \bar{\mathcal  S}_{k}^k}
    \prod_{{l}=1}^{k} \binom{l-\sum_{i=1}^{l-1}q_{i}}{q_{l}}. 
        \label{eqa:tight_markov_error_bound1}
        \end{aligned}
        \end{equation}
Notice that, by our convention that $\binom{\alpha}{\beta}=0$ for $\alpha<0$, there is exactly one element in $\mathcal S^m_{m+1}$ for which   $ \prod_{l=1}^{{m+1}} \binom{l-1-\sum_{i=1}^{l-1}q_{i}}{q_{l}}$  is nonzero, namely the weak composition where $q_l=0$ for $l\leq m$ and $q_{m+1}=m$. 
For this weak composition, the product takes value  $1$. Likewise, there is exactly one element in $\mathcal S_k^k$ for which  $ \prod_{l=1}^{{k}} \binom{l-\sum_{i=1}^{l-1}q_{i}}{q_{l}}\neq 0$, namely the weak composition where $q_k=k$ and $q_l=0$ for $l\leq k-1$. 
Using this result, along with the fact that $\bar{\mathcal S}_{a}^b \subseteq \mathcal W_a^b$, we find 
   \begin{equation}
   \begin{aligned}
    \norm{{\xi_{mn}(t)}}_{\rm tr}&\leq   \frac{\Gamma^{m+1} \tau_0^m}{(1-4(m+n-1)\Gamma \tau_0)^{2m+1}} \left[m! + (m-1)! \sum_{(q_i)_1^{{m+1}} \in \mathcal W_{m+1}^{m} } 
   \prod_{l=1}^{{m+1}} \binom{l-1-\sum_{i=1}^{l-1}q_{i}}{q_{l}} \right]\\
    &+\varepsilon_{n}\sum_{k={0}}^{m}  \frac{\Gamma^{k} \tau_0^k}{(1-4(m+n-1)\Gamma \tau_0)^{2k}}\left[k! + (k-1)!\sum_{(q_i)_1^{{k}} \in \mathcal W_{k}^k}
    \prod_{{l}=1}^{k} \binom{l-\sum_{i=1}^{l-1}q_{i}}{q_{l}}\right]. 
        \label{eqa:tight_markov_error_bound2}
        \end{aligned}
\end{equation}
Now, Lemma~\ref{lemma:combinatorial_lemma} allows us to identify the sums inside the parentheses as $m!$ and $k!$, respectively. Using this, along with Lemma~\ref{lemma: Born bound} that dictates $\varepsilon_n\leq \Gamma (4\Gamma \tau_0)^n n!$, we establish   Eq.~\eqref{eqa:simplebound}, which we wanted to prove.
\end{proof}
\end{sproposition}

\ssection{Proof of Theorem~\ref{ThmExpDecay}}
\label{seca:expdecay}
{We are finally ready to prove our last result: Theorem~\ref{ThmExpDecay} from the main text, that demonstrates the exponential accuracy of Markovian quantum master equation in the weak-coupling regime.}
\begin{stheorem}[Theorem~\ref{ThmExpDecay} of the main text: Exponential accuracy of MQMEs]
\label{ThmExpDecayv3}
{Let $\mexp$ denote the integer from Definition~\ref{def:mexp}. The residual correction to the order $(\mexp,\mexp)$ dissipator, $\kett{\xi_{\mexp\mexp}(t)}$, satisfies}
  {  \begin{equation}
      \norm{{\xi_{\mexp\mexp}(t)}}_{\tr}
    < \exp\left(-\frac{2}{\sqrt{\Gamma \tau}}\frac{1-\sqrt{\Gamma \tau}-4\Gamma\tau}{1+8\sqrt{\Gamma \tau}}+2.13\right)
    \label{eqa:thm1v2}
  \end{equation}}
    \begin{proof} 

{We first establish a useful fact about the bath moments $\{\mu_i\}$ based on  our given value of ${\Gamma \tau}$ that will allow us to leverage the lemmas and propositions we obtained above. 
To recap, we have  
\begin{equation}
    \mexp\coloneq \left\lfloor \frac{1+4x^{2}}{x+8x^2}\right\rfloor,
\end{equation}
Where, for convenience, we use the shorthand $x=\sqrt{\Gamma \tau}$ here and below. From this it follows that that $2\mexp -1 \leq  2/{x}$. Now, by the definition of $\tau$ in Definition \ref{def: tau} of the main text, we have $\mu_i \leq i! \tau^i $ for $i=1,\ldots \lceil 2/x\rceil$. Thus, 
\begin{equation}
    \mu_i \leq i ! \tau^i \quad{\rm for}\quad i =1,\ldots 2\mexp -1.\label{eqa:fact1}
\end{equation}

We now proceed to prove Eq.~\eqref{eqa:thm1v2}. 
We split the proof into two parts,  considering  the cases where {$x>0.042$} and $x\leq 0.042$ separately. }

{We first prove that Eq.~\eqref{eqa:thm1v2} holds for $x>0.042$, by direct numerical computation, using  Proposition \ref{proposition:TightestBound appendix}.
To circumvent the computational cost from the exponentially many terms involved in Proposition~\ref{proposition:TightestBound appendix}, we consider a slightly relaxed version of the bound.  Specifically, we note from Proposition~\ref{proposition:TightestBound appendix} that
    \begin{equation}\label{eq: bound relaxed21}
\begin{aligned}
    &\norm{{\xi_{\mexp\mexp}(t)}}_{\rm tr}\leq \mexp !\max_{(q_i)_1^{\mexp}\in \mathcal W^{\mexp}_{\mexp+1}}\left(\prod_{j=1}^{\mexp+1}M_{\mexp}[q_j]\right)  
+\mexp!\binom{2\mexp-1}{\mexp}(\Gamma\tau)^{\mexp}\sum_{k=0}^{\mexp}k!\!\!\!\max_{(q_i)_1^{k}\in \mathcal W_{k}^k}\!\left(\prod_{j=1}^{k}M_{\mexp}[q_j]\right)\!. 
        \end{aligned}
\end{equation}
where we also used  Lemma~\ref{lemma:combinatorial_lemma} and Eq.~\eqref{eqa:epsbinom}. {We are allowed to leverage Eq.~\eqref{eqa:epsbinom}, since $\mu_i \leq i! \tau^i $ for $i=1,\ldots 2\mexp-1$ implies that the conditions for that result is satisfied~\cite{tautology}.}} 
To bound the above numerically,  
we use that $M_{\mexp}[q]\leq c_{\mexp}[q]$ for $ q\leq \mexp$, where  $$
 c_{\mexp}[q] \coloneq \Gamma\tau^{q}\sum_{k=0}^{\mexp-1} (q+k)! (\Gamma\tau)^k\binom{q+2k}{q+k}.$$ 
 {This result follows by using Eq.~\eqref{eq: Mbound} {with $n=\mexp$, $j=q$ and $\tau_0=\tau$}, since Eqs.~\eqref{eqa:fact1} establish that the conditions for  Eq.~\eqref{eq: Mbound} are satisfied when $q\leq \mexp$ \cite{ncondition}.} Using this bound, we obtain
    \begin{equation}\label{eq: bound relaxed22}
\begin{aligned}
    &\norm{{\xi_{\mexp\mexp}(t)}}_{\rm tr}\leq \mexp!\max_{(q_i)_1^{\mexp}\in \mathcal W^{\mexp}_{\mexp+1}}\left(\prod_{j=1}^{
    \mexp+1}c_{\mexp}[q_j]\right)  
+n!\binom{2\mexp-1}{\mexp}(\Gamma\tau)^{\mexp}\sum_{k=0}^{\mexp}k!\!\!\!\max_{(q_i)_1^{k}\in \mathcal W_{k}^k}\!\left(\prod_{j=1}^{k} c_{\mexp}[q_j]\right)\!. 
        \end{aligned}
\end{equation}
{We  compute the maxima above through direct search over the sets of weak compositions. The computational complexity is drastically reduced from Proposition~\ref{proposition:TightestBound appendix} since we only need to consider sets of ordered  weak compositions to evaluate the maximum, resulting in an exponential reduction of the search space. In Fig.~\ref{fig:error bound appendix}, we plot the  right-hand side of Eq.~\eqref{eq: bound relaxed22}  against $1/x^2=1/\Gamma \tau$ for $x\geq 0.042$ (i.e., for  $0\leq 1/\Gamma \tau\leq 567$), and compare with  the right-hand side of Eq.~\eqref{eqa:thm1v2}. We see by direct inspection that the right-hand side of Eq.~\eqref{eqa:thm1v2} is an upper bound for the  right-hand side of Eq.~\eqref{eq: bound relaxed22} throughout the plotted interval, implying that  Eq.~\eqref{eqa:thm1v2}  holds for  {$x\geq 0.042$}.}
 \begin{figure}
     \centering
     \includegraphics[height=1.5in]{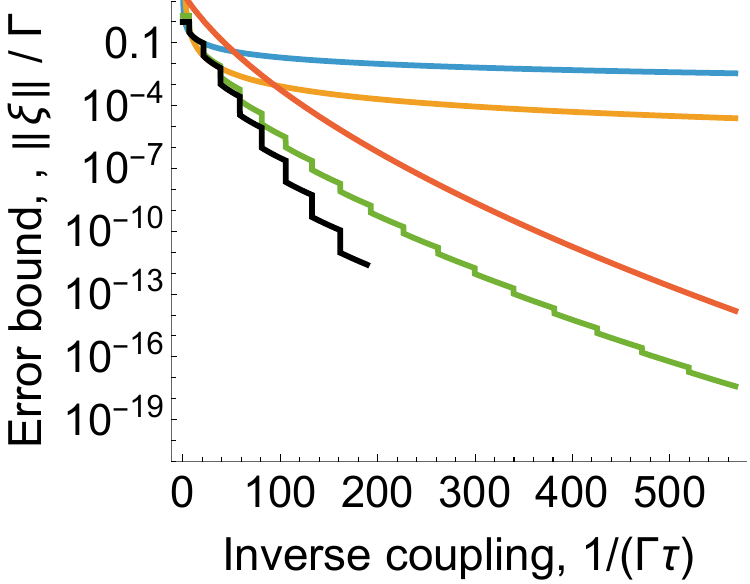}
     \caption{{{\bf {Numerical data proving  Eq.~\eqref{eqa:thm1v2} holds for $\sqrt{\Gamma \tau}\geq 0.042$ ($1/\Gamma \tau\leq 567$)}}. {\it Green:}  bound on $\norm{\xi_{\nexp\nexp}}_{\tr}$  from Eq.~\eqref{eq: bound relaxed22}. {\it Red}: right-hand side of Eq.~\eqref{eqa:thm1v2}. For convenience, we also depict the other curves shown in Fig.~\ref{fig:1} of the main text:
      {Blue and orange} depict bounds on $\norm{\xi_{nn}}_{\tr}$ from Proposition~\ref{proposition:TightestBound appendix}, for $n=1,2$, respectively. { Black curve} depicts bound on $\norm{\xi_{\nexp\nexp}}_{\tr}$ from Proposition~\ref{proposition:TightestBound appendix} for a part of the interval. Here we use that $\mu_i \leq i ! \tau$ for $i=1,\ldots \mexp$.}}
     \label{fig:error bound appendix}
 \end{figure}

We next prove that Eq.~\eqref{eqa:thm1v2} holds for $x\leq 0.042$. 
To this end, we first note that  $x\leq 0.042$ clearly implies that $\Gamma \tau=x^2< 1/4$. Moreover, from the definition of $\mexp$ above, it is also straightforward to verify that, for $x\leq 0.042$,
\begin{equation}
    4\Gamma \tau  [2\mexp -1]\leq 1. \label{eqa:fact2}
\end{equation}
This fact, combined with Eq.~\eqref{eqa:fact1}, establishes that we may invoke Proposition~\ref{PropositionMainExponential} to bound $\norm{\xi_{mn}(t)}_{\rm tr}$ with $\tau_0=\tau$ and $m=n=\mexp$. Thus,
{\begin{equation}\label{eq: prop proof}
	     \begin{aligned}
	        \frac{\norm{\xi_{\mexp\mexp}(t)}_{\rm tr}}{\Gamma} \leq&\frac{ [(\mexp-1)!+1]\mexp! x^{2\mexp}}{(1-4(2\mexp-1)x^2)^{2\mexp+1}} 
     +n!(4x^2)^n\sum_{k=0}^{\mexp} \frac{[(k-1)!+1]k!(x^2)^{k}}{(1-4(2\mexp-1)x^2)^{2k}}.
	     \end{aligned}
	 \end{equation}}

We first focus on bounding the first term. To this end, we use  Stirling's approximation (upper bound)~\cite{785bbcf5-e12d-3b4f-9b4c-913b654991d7} which states that 
\begin{equation}
    \mexp! \leq \sqrt{2\pi\mexp}e^{-\mexp +\frac{1}{12\mexp}}\mexp^{\mexp},
\label{eqa:stirling1}\end{equation}
and, equivalently, since $(n-1)!=n!/n$,
\begin{equation}
    (\mexp-1)!=\mexp!/\mexp \leq \sqrt{2\pi/\mexp}e^{-\mexp +\frac{1}{12\mexp}}\mexp^{\mexp}.\label{eqa:stirling2}
\end{equation}
Furthermore, one can easily verify that $\mexp\geq 17$ for $x\leq0.042$, implying  $(\mexp-1)!+1\leq \frac{16!+1}{16!}(\mexp-1)!$. 
Hence 
{\begin{equation}
    \begin{aligned}
        \frac{[(\mexp!-1)!+1]\mexp! (x^2)^{\mexp}}{(1-4(2\mexp-1)x^2)^{2\mexp}}&\leq2\pi\frac{16!+1}{16!}  e^{-2\mexp+\frac{1}{6\mexp}}\left(\frac{x\mexp}{1-4(2\mexp-1)x^2}\right)^{2\mexp}.\label{eqa:comexpr}
    \end{aligned}
\end{equation} }
Now,  we use that $\mexp = \lfloor g(x)\rfloor$, where $$g(x)=\frac{1+4x^2}{x+8x^2}.$$ 
Thus, in particular, $\mexp \leq g(x)$, implying
\begin{equation}
    \frac{x\mexp}{1-4(2\mexp-1)x^2}\leq \frac{xg(x)}{1-4(2g(x)-1)x^2}.
\end{equation}
It is straightforward to verify that our choice of $g(x)$ ensures that the right hand side above is exactly $1$: 
\begin{equation}\frac{g(x)x}{1-4(2g(x)-1)x^2}=1\label{eqa:m0ident},
\end{equation}implying
\begin{equation}
     \frac{x\mexp}{1-4(2\mexp-1)x^2} \leq 1.
\label{eqa:fractionbound}\end{equation}
Using this in Eq.~\eqref{eqa:comexpr}, we thus find 
{\begin{equation}\label{eqa:sumsimplification}
    \begin{aligned}
        \frac{[(\mexp!-1)!+1]\mexp! (x^2)^{\mexp}}{(1-4(2\mexp-1)x^2)^{2\mexp}}&\leq2\pi\frac{16!+1}{16!}  e^{-2\mexp+\frac{1}{6\mexp}}.
    \end{aligned}
\end{equation} }
This bounds the first term in Eq.~\eqref{eq: prop proof}. 

We next seek to bound the second term in Eq.~\eqref{eq: prop proof}.
     First we focus on simplifying the sum over $k$.  
{To this end, we note from Eq.~\eqref{eqa:fractionbound} that  
\begin{equation}
    \begin{aligned}
        \frac{[(k-1)!+1]k! (x^2)^{k}}{(1-4(2\mexp-1)x^2)^{2k}}&\leq\frac{ [(k-1)!+1]k!}{\mexp^{2k}}
        \end{aligned}\end{equation}
We next use Stirling's approximation, to find, for $1\leq k \leq \mexp$,  
\begin{equation}
            \frac{[(k-1)!+1]k! (x^2)^{k}}{(1-4(2\mexp-1)x^2)^{2k}} \leq 2\pi e^{-2k+\frac{1}{6k}}\left(\frac{k}{\mexp}\right)^{2k}+2\pi e^{-k+\frac{1}{12k}}\left(\frac{k}{\mexp}\right)^{k+1/2}\left(\frac{1}{\mexp}\right)^{k-1/2}.
\end{equation}}
{Thus, since, by our convention, the left-hand side above evaluates to $1$ for $k=0$, we find, }
{\begin{equation}\begin{aligned}
	         \sum_{k=0}^{\mexp} \frac{[(k-1)!+1]k!(x^2)^{k}}{{(1-4(2\mexp-1)x^2)^{2k}}}
            &\leq 1+
                   \sum_{k=1}^{\mexp} 2\pi \left[e^{-2k+\frac{1}{6k}}+(e\mexp)^{-k}e^{1/12}\sqrt{\mexp}\right]                 \\ & \leq 2\pi \left[e^{1/6}\sum_{k={0}}^{\mexp}  e^{-2k}+e^{1/12}\sqrt{\mexp}\sum_{k={1}}^{\mexp}  (e\mexp)^{-k}\right]\\ &{\leq}\frac{2\pi e^{1/6}}{1-e^{-2}}+\frac{2\pi e^{1/12}\sqrt{\mexp}}{e\mexp-1}.
                  \label{eqa:sumsimplification2}
	     \end{aligned}
	 \end{equation} }
This establishes a bound for the second term in Eq.~\eqref{eq: prop proof}.

Having bounded both terms in Eq.~\eqref{eq: prop proof} in  Eqs.~\eqref{eqa:sumsimplification}~and~\eqref{eqa:sumsimplification2}, we now combine these bounds to find
\begin{equation}
	     \begin{aligned}
	        \frac{\norm{\xi_{\mexp\mexp}(t)}_{\rm tr}}{2\pi \Gamma}  
            \leq \frac{16!+1}{16!}\frac{e^{-2\mexp +\frac{1}{6\mexp}}}{1-4(2\mexp-1)x^2}
     +\mexp !(4x^2)^{\mexp}\left[\frac{e^{1/6}}{1-e^{-2}}+\frac{ e^{1/12}\sqrt{\mexp}}{e\mexp-1}\right].
	     \end{aligned}
	 \end{equation}
We next use $\mexp \leq g(x)$ and Eq.~\eqref{eqa:m0ident}, which imply $$(1-4x^2(2\mexp-1))\geq(1-4x^2(2g(x)-1))=xg(x).$$
This allows  us to simplify further:
\begin{equation}
	     \begin{aligned}
	        \frac{\norm{\xi_{\mexp\mexp}(t)}_{\rm tr}}{2\pi \Gamma} 
            \leq \frac{16!+1}{16!}\frac{e^{-2\mexp +\frac{1}{6\mexp}}}{g(x)x}
     +\mexp !(4x^2)^{\mexp}\left[\frac{ e^{1/6}}{1-e^{-2}}+\frac{ e^{1/12}\sqrt{\mexp}}{e\mexp-1}\right].
	     \end{aligned}
	 \end{equation}
Next, we use Stirling's approximation  again, to bound $\mexp!$, leading to 
\begin{equation}
	     \begin{aligned}
	        \frac{\norm{\xi_{\mexp\mexp}(t)}_{\rm tr}}{2\pi \Gamma}  
            \leq \frac{16!+1}{16!}\frac{e^{-2\mexp +\frac{1}{6\mexp}}}{g(x)x}
     +\sqrt{2\pi\mexp } e^{-\mexp+\frac{1}{12\mexp}}(4x^2\mexp)^{\mexp}\left[\frac{ e^{1/6}}{1-e^{-2}}+\frac{ e^{1/12}\sqrt{\mexp}}{e\mexp-1}\right].
	     \end{aligned}
	 \end{equation}
Using again that {$g(x)-1\leq \mexp \leq g(x)$}, we obtain
\begin{equation}
	     \begin{aligned}
	        \frac{\norm{\xi_{\mexp\mexp}(t)}_{\rm tr}}{ 2\pi\Gamma}  
            \leq \frac{16!+1}{16!}\frac{e^{-2\mexp +\frac{1}{6[g(x)-1]}}}{g(x)x}
     +\sqrt{2\pi g(x)} e^{-\mexp+\frac{1}{12}}(4x^2\mexp)^{\mexp}\left[\frac{ e^{1/6}}{1-e^{-2}}+\frac{e^{1/12}\sqrt{g(x)}}{e[g(x)-1]-1}\right].
	     \end{aligned}
	 \end{equation}
We next extract a prefactor of $e^{-2\mexp }$ to obtain 
\begin{equation}
	     \begin{aligned}
	        \frac{\norm{\xi_{\mexp\mexp}(t)}_{\rm tr}}{2\pi \Gamma}  
            \leq e^{-2\mexp}\left(\frac{16!+1}{16!}\frac{e^{\frac{1}{6[g(x)-1]}}}{g(x)x}
     +\sqrt{2\pi g(x)} e^{\mexp(1+\log[4x^2\mexp])+\frac{1}{12}}\left[\frac{e^{1/6}}{1-e^{-2}}+\frac{ e^{1/12}\sqrt{g(x)}}{e[g(x)-1]-1}\right]\right).
	     \end{aligned}
	 \end{equation}
Now, we use {again} that $g(x)-1\leq \mexp \leq g(x)$. 
Thus, the parenthesis above is upper-bounded by 
\begin{equation}
\begin{aligned}
    \max\Bigg\{& \frac{16!+1}{16!}\frac{e^{\frac{1}{6[g(x)-1]}}}{g(x)x}
     +\sqrt{2\pi g(x)} e^{g(x)(1+\log[4x^2g(x)])+\frac{1}{12}}\left[\frac{ e^{1/6}}{1-e^{-2}}+\frac{ e^{1/12}\sqrt{g(x)}}{e[g(x)-1]-1}\right],\\&    \frac{16!+1}{16!}\frac{e^{\frac{1}{6[g(x)-1]}}}{g(x)x}
     +\sqrt{2\pi g(x)} e^{[g(x)-1](1+\log[4x^2g(x))])+\frac{1}{12}}\left[\frac{2\pi e^{1/6}}{1-e^{-2}}+\frac{2\pi e^{1/12}\sqrt{g(x)}}{e[g(x)-1]-1}\right]\Bigg\}.
\end{aligned} 
\end{equation}
Direct computation shows that  the above is bounded by $e^{2.13}/2\pi$ for $0\leq x\leq 0.042$.
Thus, for $x\leq 0.042$,
\begin{equation}
   \begin{aligned}
	        \frac{\norm{\xi_{\mexp\mexp}(t)}_{\rm tr}}{  \Gamma} 
            \leq e^{-2\mexp+2.13}
	     \end{aligned}
	 \end{equation}
We finally note that 
\begin{equation}
    \mexp\geq g(x)-1
    =\frac{1}{x}\left(\frac{1-x-4x^2}{1+8x}\right).
\end{equation}
Thus 
\begin{equation}
   \begin{aligned}
	        \frac{\norm{\xi_{\mexp\mexp}(t)}_{\rm tr}}{  \Gamma} 
            \leq \exp\left[-\frac{2}{x}\left(\frac{1-x-4x^2}{1+8x}\right)+2.13\right].
	     \end{aligned}
	 \end{equation}
    Hence Eq.~\eqref{eqa:thm1v2} also holds for $x\leq 0.042$,  concluding the proof.
\end{proof}
\end{stheorem}
\putbib[bibliography_sm.bib]
\end{bibunit}

 \newcommand{\noop}[1]{}
\begin{thebibliography}{45}%
\makeatletter
\providecommand \@ifxundefined [1]{%
 \@ifx{#1\undefined}
}%
\providecommand \@ifnum [1]{%
 \ifnum #1\expandafter \@firstoftwo
 \else \expandafter \@secondoftwo
 \fi
}%
\providecommand \@ifx [1]{%
 \ifx #1\expandafter \@firstoftwo
 \else \expandafter \@secondoftwo
 \fi
}%
\providecommand \natexlab [1]{#1}%
\providecommand \enquote  [1]{``#1''}%
\providecommand \bibnamefont  [1]{#1}%
\providecommand \bibfnamefont [1]{#1}%
\providecommand \citenamefont [1]{#1}%
\providecommand \href@noop [0]{\@secondoftwo}%
\providecommand \href [0]{\begingroup \@sanitize@url \@href}%
\providecommand \@href[1]{\@@startlink{#1}\@@href}%
\providecommand \@@href[1]{\endgroup#1\@@endlink}%
\providecommand \@sanitize@url [0]{\catcode `\\12\catcode `\$12\catcode `\&12\catcode `\#12\catcode `\^12\catcode `\_12\catcode `\%12\relax}%
\providecommand \@@startlink[1]{}%
\providecommand \@@endlink[0]{}%
\providecommand \url  [0]{\begingroup\@sanitize@url \@url }%
\providecommand \@url [1]{\endgroup\@href {#1}{\urlprefix }}%
\providecommand \urlprefix  [0]{URL }%
\providecommand \Eprint [0]{\href }%
\providecommand \doibase [0]{https://doi.org/}%
\providecommand \selectlanguage [0]{\@gobble}%
\providecommand \bibinfo  [0]{\@secondoftwo}%
\providecommand \bibfield  [0]{\@secondoftwo}%
\providecommand \translation [1]{[#1]}%
\providecommand \BibitemOpen [0]{}%
\providecommand \bibitemStop [0]{}%
\providecommand \bibitemNoStop [0]{.\EOS\space}%
\providecommand \EOS [0]{\spacefactor3000\relax}%
\providecommand \BibitemShut  [1]{\csname bibitem#1\endcsname}%
\let\auto@bib@innerbib\@empty
\bibitem [{\citenamefont {Gardiner}\ and\ \citenamefont {Zoller}(2004)}]{gardiner_quantum_2004}%
  \BibitemOpen
  \bibfield  {author} {\bibinfo {author} {\bibfnamefont {C.}~\bibnamefont {Gardiner}}\ and\ \bibinfo {author} {\bibfnamefont {P.}~\bibnamefont {Zoller}},\ }\href@noop {} {\emph {\bibinfo {title} {Quantum {Noise}: {A} {Handbook} of {Markovian} and {Non}-{Markovian} {Quantum} {Stochastic} {Methods} with {Applications} to {Quantum} {Optics}}}},\ \bibinfo {edition} {3rd}\ ed.\ (\bibinfo  {publisher} {Springer},\ \bibinfo {address} {Berlin ; New York},\ \bibinfo {year} {2004})\BibitemShut {NoStop}%
\bibitem [{\citenamefont {Delgado-Granados}\ \emph {et~al.}(2025)\citenamefont {Delgado-Granados}, \citenamefont {Krogmeier}, \citenamefont {Sager-Smith}, \citenamefont {Avdic}, \citenamefont {Hu}, \citenamefont {Sajjan}, \citenamefont {Abbasi}, \citenamefont {Smart}, \citenamefont {Narang}, \citenamefont {Kais}, \citenamefont {Schlimgen}, \citenamefont {Head-Marsden},\ and\ \citenamefont {Mazziotti}}]{delgado-granados_quantum_2025}%
  \BibitemOpen
  \bibfield  {author} {\bibinfo {author} {\bibfnamefont {L.~H.}\ \bibnamefont {Delgado-Granados}}, \bibinfo {author} {\bibfnamefont {T.~J.}\ \bibnamefont {Krogmeier}}, \bibinfo {author} {\bibfnamefont {L.~M.}\ \bibnamefont {Sager-Smith}}, \bibinfo {author} {\bibfnamefont {I.}~\bibnamefont {Avdic}}, \bibinfo {author} {\bibfnamefont {Z.}~\bibnamefont {Hu}}, \bibinfo {author} {\bibfnamefont {M.}~\bibnamefont {Sajjan}}, \bibinfo {author} {\bibfnamefont {M.}~\bibnamefont {Abbasi}}, \bibinfo {author} {\bibfnamefont {S.~E.}\ \bibnamefont {Smart}}, \bibinfo {author} {\bibfnamefont {P.}~\bibnamefont {Narang}}, \bibinfo {author} {\bibfnamefont {S.}~\bibnamefont {Kais}}, \bibinfo {author} {\bibfnamefont {A.~W.}\ \bibnamefont {Schlimgen}}, \bibinfo {author} {\bibfnamefont {K.}~\bibnamefont {Head-Marsden}},\ and\ \bibinfo {author} {\bibfnamefont {D.~A.}\ \bibnamefont {Mazziotti}},\ }\href {https://doi.org/10.1021/acs.chemrev.4c00428} {\bibfield  {journal} {\bibinfo  {journal} {Chemical Reviews}\ }\textbf {\bibinfo
  {volume} {125}},\ \bibinfo {pages} {1823} (\bibinfo {year} {2025})}\BibitemShut {NoStop}%
\bibitem [{\citenamefont {Krantz}\ \emph {et~al.}(2019)\citenamefont {Krantz}, \citenamefont {Kjaergaard}, \citenamefont {Yan}, \citenamefont {Orlando}, \citenamefont {Gustavsson},\ and\ \citenamefont {Oliver}}]{krantz_quantum_2019}%
  \BibitemOpen
  \bibfield  {author} {\bibinfo {author} {\bibfnamefont {P.}~\bibnamefont {Krantz}}, \bibinfo {author} {\bibfnamefont {M.}~\bibnamefont {Kjaergaard}}, \bibinfo {author} {\bibfnamefont {F.}~\bibnamefont {Yan}}, \bibinfo {author} {\bibfnamefont {T.~P.}\ \bibnamefont {Orlando}}, \bibinfo {author} {\bibfnamefont {S.}~\bibnamefont {Gustavsson}},\ and\ \bibinfo {author} {\bibfnamefont {W.~D.}\ \bibnamefont {Oliver}},\ }\href {https://doi.org/10.1063/1.5089550} {\bibfield  {journal} {\bibinfo  {journal} {Applied Physics Reviews}\ }\textbf {\bibinfo {volume} {6}},\ \bibinfo {pages} {021318} (\bibinfo {year} {2019})}\BibitemShut {NoStop}%
\bibitem [{\citenamefont {Akamatsu}(2021)}]{akamatsu_quarkonium_2021}%
  \BibitemOpen
  \bibfield  {author} {\bibinfo {author} {\bibfnamefont {Y.}~\bibnamefont {Akamatsu}},\ }\href {https://doi.org/10.48550/arXiv.2009.10559} {\bibinfo {title} {Quarkonium in {Quark}-{Gluon} {Plasma}: {Open} {Quantum} {System} {Approaches} {Re}-examined}} (\bibinfo {year} {2021})\BibitemShut {NoStop}%
\bibitem [{\citenamefont {Nakajima}(1958)}]{nakajima_quantum_1958}%
  \BibitemOpen
  \bibfield  {author} {\bibinfo {author} {\bibfnamefont {S.}~\bibnamefont {Nakajima}},\ }\href {https://doi.org/10.1143/PTP.20.948} {\bibfield  {journal} {\bibinfo  {journal} {Progress of Theoretical Physics}\ }\textbf {\bibinfo {volume} {20}},\ \bibinfo {pages} {948} (\bibinfo {year} {1958})}\BibitemShut {NoStop}%
\bibitem [{\citenamefont {Zwanzig}(1960)}]{zwanzig_ensemble_1960}%
  \BibitemOpen
  \bibfield  {author} {\bibinfo {author} {\bibfnamefont {R.}~\bibnamefont {Zwanzig}},\ }\href {https://doi.org/10.1063/1.1731409} {\bibfield  {journal} {\bibinfo  {journal} {The Journal of Chemical Physics}\ }\textbf {\bibinfo {volume} {33}},\ \bibinfo {pages} {1338} (\bibinfo {year} {1960})}\BibitemShut {NoStop}%
\bibitem [{\citenamefont {Redfield}(1965)}]{redfield_theory_1965}%
  \BibitemOpen
  \bibfield  {author} {\bibinfo {author} {\bibfnamefont {A.~G.}\ \bibnamefont {Redfield}},\ }in\ \href {https://doi.org/10.1016/B978-1-4832-3114-3.50007-6} {\emph {\bibinfo {booktitle} {Advances in {Magnetic} and {Optical} {Resonance}}}},\ \bibinfo {series} {Advances in {Magnetic} {Resonance}}, Vol.~\bibinfo {volume} {1},\ \bibinfo {editor} {edited by\ \bibinfo {editor} {\bibfnamefont {J.~S.}\ \bibnamefont {Waugh}}}\ (\bibinfo  {publisher} {Academic Press},\ \bibinfo {year} {1965})\ pp.\ \bibinfo {pages} {1--32}\BibitemShut {NoStop}%
\bibitem [{\citenamefont {Feynman}\ and\ \citenamefont {Vernon}(1963)}]{feynman_theory_1963}%
  \BibitemOpen
  \bibfield  {author} {\bibinfo {author} {\bibfnamefont {R.}~\bibnamefont {Feynman}}\ and\ \bibinfo {author} {\bibfnamefont {F.}~\bibnamefont {Vernon}},\ }\href {https://doi.org/10.1016/0003-4916(63)90068-X} {\bibfield  {journal} {\bibinfo  {journal} {Annals of Physics}\ }\textbf {\bibinfo {volume} {24}},\ \bibinfo {pages} {118} (\bibinfo {year} {1963})}\BibitemShut {NoStop}%
\bibitem [{\citenamefont {Kubo}(1963)}]{kubo_stochastic_1963}%
  \BibitemOpen
  \bibfield  {author} {\bibinfo {author} {\bibfnamefont {R.}~\bibnamefont {Kubo}},\ }\href {https://doi.org/10.1063/1.1703941} {\bibfield  {journal} {\bibinfo  {journal} {Journal of Mathematical Physics}\ }\textbf {\bibinfo {volume} {4}},\ \bibinfo {pages} {174} (\bibinfo {year} {1963})}\BibitemShut {NoStop}%
\bibitem [{\citenamefont {Van~Kampen}(1974)}]{van_kampen_cumulant_1974}%
  \BibitemOpen
  \bibfield  {author} {\bibinfo {author} {\bibfnamefont {N.}~\bibnamefont {Van~Kampen}},\ }\href {https://doi.org/10.1016/0031-8914(74)90122-0} {\bibfield  {journal} {\bibinfo  {journal} {Physica}\ }\textbf {\bibinfo {volume} {74}},\ \bibinfo {pages} {239} (\bibinfo {year} {1974})}\BibitemShut {NoStop}%
\bibitem [{\citenamefont {Chaturvedi}\ and\ \citenamefont {Shibata}(1979)}]{chaturvedi_time-convolutionless_1979}%
  \BibitemOpen
  \bibfield  {author} {\bibinfo {author} {\bibfnamefont {S.}~\bibnamefont {Chaturvedi}}\ and\ \bibinfo {author} {\bibfnamefont {F.}~\bibnamefont {Shibata}},\ }\href {https://doi.org/10.1007/BF01319852} {\bibfield  {journal} {\bibinfo  {journal} {Zeitschrift fur Physik B Condensed Matter and Quanta}\ }\textbf {\bibinfo {volume} {35}},\ \bibinfo {pages} {297} (\bibinfo {year} {1979})}\BibitemShut {NoStop}%
\bibitem [{\citenamefont {Lindblad}(1976)}]{lindblad1976generators}%
  \BibitemOpen
  \bibfield  {author} {\bibinfo {author} {\bibfnamefont {G.}~\bibnamefont {Lindblad}},\ }\href@noop {} {\bibfield  {journal} {\bibinfo  {journal} {Communications in mathematical physics}\ }\textbf {\bibinfo {volume} {48}},\ \bibinfo {pages} {119} (\bibinfo {year} {1976})}\BibitemShut {NoStop}%
\bibitem [{\citenamefont {Gorini}\ \emph {et~al.}(1976)\citenamefont {Gorini}, \citenamefont {Kossakowski},\ and\ \citenamefont {Sudarshan}}]{gorini1976completely}%
  \BibitemOpen
  \bibfield  {author} {\bibinfo {author} {\bibfnamefont {V.}~\bibnamefont {Gorini}}, \bibinfo {author} {\bibfnamefont {A.}~\bibnamefont {Kossakowski}},\ and\ \bibinfo {author} {\bibfnamefont {E.~C.~G.}\ \bibnamefont {Sudarshan}},\ }\href {https://doi.org/10.1016/0034-4877(78)90050-2} {\bibfield  {journal} {\bibinfo  {journal} {Journal of Mathematical Physics}\ }\textbf {\bibinfo {volume} {17}},\ \bibinfo {pages} {821} (\bibinfo {year} {1976})}\BibitemShut {NoStop}%
\bibitem [{\citenamefont {Davies}(1976)}]{davies_quantum_1976}%
  \BibitemOpen
  \bibfield  {author} {\bibinfo {author} {\bibfnamefont {E.~B.}\ \bibnamefont {Davies}},\ }\href {http://digitool.hbz-nrw.de:1801/webclient/DeliveryManager?pid=2374409&custom_att_2=simple_viewer} {\emph {\bibinfo {title} {Quantum theory of open systems}}}\ (\bibinfo  {publisher} {Academic Press},\ \bibinfo {address} {London},\ \bibinfo {year} {1976})\BibitemShut {NoStop}%
\bibitem [{\citenamefont {Tanimura}\ and\ \citenamefont {Kubo}(1989)}]{tanimura_time_1989}%
  \BibitemOpen
  \bibfield  {author} {\bibinfo {author} {\bibfnamefont {Y.}~\bibnamefont {Tanimura}}\ and\ \bibinfo {author} {\bibfnamefont {R.}~\bibnamefont {Kubo}},\ }\href {https://doi.org/10.1143/JPSJ.58.101} {\bibfield  {journal} {\bibinfo  {journal} {Journal of the Physical Society of Japan}\ }\textbf {\bibinfo {volume} {58}},\ \bibinfo {pages} {101} (\bibinfo {year} {1989})}\BibitemShut {NoStop}%
\bibitem [{\citenamefont {Dalibard}\ \emph {et~al.}(1992)\citenamefont {Dalibard}, \citenamefont {Castin},\ and\ \citenamefont {Mølmer}}]{dalibard_wave-function_1992}%
  \BibitemOpen
  \bibfield  {author} {\bibinfo {author} {\bibfnamefont {J.}~\bibnamefont {Dalibard}}, \bibinfo {author} {\bibfnamefont {Y.}~\bibnamefont {Castin}},\ and\ \bibinfo {author} {\bibfnamefont {K.}~\bibnamefont {Mølmer}},\ }\href {https://doi.org/10.1103/PhysRevLett.68.580} {\bibfield  {journal} {\bibinfo  {journal} {Physical Review Letters}\ }\textbf {\bibinfo {volume} {68}},\ \bibinfo {pages} {580} (\bibinfo {year} {1992})}\BibitemShut {NoStop}%
\bibitem [{\citenamefont {Rosenbach}\ \emph {et~al.}(2016)\citenamefont {Rosenbach}, \citenamefont {Cerrillo}, \citenamefont {Huelga}, \citenamefont {Cao},\ and\ \citenamefont {Plenio}}]{rosenbach_efficient_2016}%
  \BibitemOpen
  \bibfield  {author} {\bibinfo {author} {\bibfnamefont {R.}~\bibnamefont {Rosenbach}}, \bibinfo {author} {\bibfnamefont {J.}~\bibnamefont {Cerrillo}}, \bibinfo {author} {\bibfnamefont {S.~F.}\ \bibnamefont {Huelga}}, \bibinfo {author} {\bibfnamefont {J.}~\bibnamefont {Cao}},\ and\ \bibinfo {author} {\bibfnamefont {M.~B.}\ \bibnamefont {Plenio}},\ }\href {https://doi.org/10.1088/1367-2630/18/2/023035} {\bibfield  {journal} {\bibinfo  {journal} {New Journal of Physics}\ }\textbf {\bibinfo {volume} {18}},\ \bibinfo {pages} {023035} (\bibinfo {year} {2016})}\BibitemShut {NoStop}%
\bibitem [{\citenamefont {Strathearn}\ \emph {et~al.}(2018)\citenamefont {Strathearn}, \citenamefont {Kirton}, \citenamefont {Kilda}, \citenamefont {Keeling},\ and\ \citenamefont {Lovett}}]{strathearn_efficient_2018}%
  \BibitemOpen
  \bibfield  {author} {\bibinfo {author} {\bibfnamefont {A.}~\bibnamefont {Strathearn}}, \bibinfo {author} {\bibfnamefont {P.}~\bibnamefont {Kirton}}, \bibinfo {author} {\bibfnamefont {D.}~\bibnamefont {Kilda}}, \bibinfo {author} {\bibfnamefont {J.}~\bibnamefont {Keeling}},\ and\ \bibinfo {author} {\bibfnamefont {B.~W.}\ \bibnamefont {Lovett}},\ }\href {https://doi.org/10.1038/s41467-018-05617-3} {\bibfield  {journal} {\bibinfo  {journal} {Nature Communications}\ }\textbf {\bibinfo {volume} {9}},\ \bibinfo {pages} {3322} (\bibinfo {year} {2018})}\BibitemShut {NoStop}%
\bibitem [{\citenamefont {Kiršanskas}\ \emph {et~al.}(2018)\citenamefont {Kiršanskas}, \citenamefont {Franckié},\ and\ \citenamefont {Wacker}}]{kirsanskas_phenomenological_2018}%
  \BibitemOpen
  \bibfield  {author} {\bibinfo {author} {\bibfnamefont {G.}~\bibnamefont {Kiršanskas}}, \bibinfo {author} {\bibfnamefont {M.}~\bibnamefont {Franckié}},\ and\ \bibinfo {author} {\bibfnamefont {A.}~\bibnamefont {Wacker}},\ }\href {https://doi.org/10.1103/PhysRevB.97.035432} {\bibfield  {journal} {\bibinfo  {journal} {Physical Review B}\ }\textbf {\bibinfo {volume} {97}},\ \bibinfo {pages} {035432} (\bibinfo {year} {2018})}\BibitemShut {NoStop}%
\bibitem [{\citenamefont {Nathan}\ \emph {et~al.}(2019)\citenamefont {Nathan}, \citenamefont {Martin},\ and\ \citenamefont {Refael}}]{nathan_topological_2019}%
  \BibitemOpen
  \bibfield  {author} {\bibinfo {author} {\bibfnamefont {F.}~\bibnamefont {Nathan}}, \bibinfo {author} {\bibfnamefont {I.}~\bibnamefont {Martin}},\ and\ \bibinfo {author} {\bibfnamefont {G.}~\bibnamefont {Refael}},\ }\href {https://doi.org/10.1103/PhysRevB.99.094311} {\bibfield  {journal} {\bibinfo  {journal} {Physical Review B}\ }\textbf {\bibinfo {volume} {99}},\ \bibinfo {pages} {094311} (\bibinfo {year} {2019})}\BibitemShut {NoStop}%
\bibitem [{\citenamefont {Davidović}(2020)}]{davidovic_completely_2020}%
  \BibitemOpen
  \bibfield  {author} {\bibinfo {author} {\bibfnamefont {D.}~\bibnamefont {Davidović}},\ }\href {https://doi.org/10.22331/q-2020-09-21-326} {\bibfield  {journal} {\bibinfo  {journal} {Quantum}\ }\textbf {\bibinfo {volume} {4}},\ \bibinfo {pages} {326} (\bibinfo {year} {2020})}\BibitemShut {NoStop}%
\bibitem [{\citenamefont {Nathan}\ and\ \citenamefont {Rudner}(2020)}]{nathan2020universal}%
  \BibitemOpen
  \bibfield  {author} {\bibinfo {author} {\bibfnamefont {F.}~\bibnamefont {Nathan}}\ and\ \bibinfo {author} {\bibfnamefont {M.~S.}\ \bibnamefont {Rudner}},\ }\href {https://doi.org/10.1103/PhysRevB.102.115109} {\bibfield  {journal} {\bibinfo  {journal} {Physical Review B}\ }\textbf {\bibinfo {volume} {102}},\ \bibinfo {pages} {115109} (\bibinfo {year} {2020})}\BibitemShut {NoStop}%
\bibitem [{\citenamefont {Mozgunov}\ and\ \citenamefont {Lidar}(2020)}]{mozgunov_completely_2020}%
  \BibitemOpen
  \bibfield  {author} {\bibinfo {author} {\bibfnamefont {E.}~\bibnamefont {Mozgunov}}\ and\ \bibinfo {author} {\bibfnamefont {D.}~\bibnamefont {Lidar}},\ }\href {https://doi.org/10.22331/q-2020-02-06-227} {\bibfield  {journal} {\bibinfo  {journal} {Quantum}\ }\textbf {\bibinfo {volume} {4}},\ \bibinfo {pages} {227} (\bibinfo {year} {2020})}\BibitemShut {NoStop}%
\bibitem [{\citenamefont {Trushechkin}(2021{\natexlab{a}})}]{Trushechkin_2021}%
  \BibitemOpen
  \bibfield  {author} {\bibinfo {author} {\bibfnamefont {A.}~\bibnamefont {Trushechkin}},\ }\href {https://doi.org/10.1103/PhysRevA.103.062226} {\bibfield  {journal} {\bibinfo  {journal} {Phys. Rev. A}\ }\textbf {\bibinfo {volume} {103}},\ \bibinfo {pages} {062226} (\bibinfo {year} {2021}{\natexlab{a}})}\BibitemShut {NoStop}%
\bibitem [{\citenamefont {Nathan}\ and\ \citenamefont {Rudner}(2024{\natexlab{a}})}]{nathan_quantifying_2024}%
  \BibitemOpen
  \bibfield  {author} {\bibinfo {author} {\bibfnamefont {F.}~\bibnamefont {Nathan}}\ and\ \bibinfo {author} {\bibfnamefont {M.~S.}\ \bibnamefont {Rudner}},\ }\href {https://doi.org/10.1103/PhysRevB.109.205140} {\bibfield  {journal} {\bibinfo  {journal} {Physical Review B}\ }\textbf {\bibinfo {volume} {109}},\ \bibinfo {pages} {205140} (\bibinfo {year} {2024}{\natexlab{a}})}\BibitemShut {NoStop}%
\bibitem [{\citenamefont {de~Vega}\ and\ \citenamefont {Alonso}(2017)}]{de_vega_dynamics_2017}%
  \BibitemOpen
  \bibfield  {author} {\bibinfo {author} {\bibfnamefont {I.}~\bibnamefont {de~Vega}}\ and\ \bibinfo {author} {\bibfnamefont {D.}~\bibnamefont {Alonso}},\ }\href {https://doi.org/10.1103/RevModPhys.89.015001} {\bibfield  {journal} {\bibinfo  {journal} {Reviews of Modern Physics}\ }\textbf {\bibinfo {volume} {89}},\ \bibinfo {pages} {015001} (\bibinfo {year} {2017})}\BibitemShut {NoStop}%
\bibitem [{\citenamefont {Trushechkin}(2021{\natexlab{b}})}]{trushechkin_derivation_2021}%
  \BibitemOpen
  \bibfield  {author} {\bibinfo {author} {\bibfnamefont {A.~S.}\ \bibnamefont {Trushechkin}},\ }\href {https://doi.org/10.1134/S008154382102022X} {\bibfield  {journal} {\bibinfo  {journal} {Proceedings of the Steklov Institute of Mathematics}\ }\textbf {\bibinfo {volume} {313}},\ \bibinfo {pages} {246} (\bibinfo {year} {2021}{\natexlab{b}})}\BibitemShut {NoStop}%
\bibitem [{\citenamefont {Breuer}\ \emph {et~al.}(2007)\citenamefont {Breuer}, \citenamefont {Petruccione}, \citenamefont {Breuer},\ and\ \citenamefont {Petruccione}}]{breuer_theory_2007}%
  \BibitemOpen
  \bibfield  {author} {\bibinfo {author} {\bibfnamefont {H.-P.}\ \bibnamefont {Breuer}}, \bibinfo {author} {\bibfnamefont {F.}~\bibnamefont {Petruccione}}, \bibinfo {author} {\bibfnamefont {H.-P.}\ \bibnamefont {Breuer}},\ and\ \bibinfo {author} {\bibfnamefont {F.}~\bibnamefont {Petruccione}},\ }\href@noop {} {\emph {\bibinfo {title} {The {Theory} of {Open} {Quantum} {Systems}}}}\ (\bibinfo  {publisher} {Oxford University Press},\ \bibinfo {address} {Oxford, New York},\ \bibinfo {year} {2007})\BibitemShut {NoStop}%
\bibitem [{\citenamefont {Breuer}\ \emph {et~al.}(2006)\citenamefont {Breuer}, \citenamefont {Gemmer},\ and\ \citenamefont {Michel}}]{breuer_non-markovian_2006}%
  \BibitemOpen
  \bibfield  {author} {\bibinfo {author} {\bibfnamefont {H.-P.}\ \bibnamefont {Breuer}}, \bibinfo {author} {\bibfnamefont {J.}~\bibnamefont {Gemmer}},\ and\ \bibinfo {author} {\bibfnamefont {M.}~\bibnamefont {Michel}},\ }\href {https://doi.org/10.1103/PhysRevE.73.016139} {\bibfield  {journal} {\bibinfo  {journal} {Physical Review E}\ }\textbf {\bibinfo {volume} {73}},\ \bibinfo {pages} {016139} (\bibinfo {year} {2006})}\BibitemShut {NoStop}%
\bibitem [{\citenamefont {Crowder}\ \emph {et~al.}(2024)\citenamefont {Crowder}, \citenamefont {Lampert}, \citenamefont {Manchanda}, \citenamefont {Shoffeitt}, \citenamefont {Gadamsetty}, \citenamefont {Pei}, \citenamefont {Chaudhary},\ and\ \citenamefont {Davidović}}]{crowder_invalidation_2024}%
  \BibitemOpen
  \bibfield  {author} {\bibinfo {author} {\bibfnamefont {E.}~\bibnamefont {Crowder}}, \bibinfo {author} {\bibfnamefont {L.}~\bibnamefont {Lampert}}, \bibinfo {author} {\bibfnamefont {G.}~\bibnamefont {Manchanda}}, \bibinfo {author} {\bibfnamefont {B.}~\bibnamefont {Shoffeitt}}, \bibinfo {author} {\bibfnamefont {S.}~\bibnamefont {Gadamsetty}}, \bibinfo {author} {\bibfnamefont {Y.}~\bibnamefont {Pei}}, \bibinfo {author} {\bibfnamefont {S.}~\bibnamefont {Chaudhary}},\ and\ \bibinfo {author} {\bibfnamefont {D.}~\bibnamefont {Davidović}},\ }\href {https://doi.org/10.1103/PhysRevA.109.052205} {\bibfield  {journal} {\bibinfo  {journal} {Physical Review A}\ }\textbf {\bibinfo {volume} {109}},\ \bibinfo {pages} {052205} (\bibinfo {year} {2024})}\BibitemShut {NoStop}%
\bibitem [{\citenamefont {Lampert}\ \emph {et~al.}(2025)\citenamefont {Lampert}, \citenamefont {Gadamsetty}, \citenamefont {Chaudhary}, \citenamefont {Pei}, \citenamefont {Chen}, \citenamefont {Crowder},\ and\ \citenamefont {Davidovi\ifmmode~\acute{c}\else \'{c}\fi{}}}]{Lampert_2025}%
  \BibitemOpen
  \bibfield  {author} {\bibinfo {author} {\bibfnamefont {L.}~\bibnamefont {Lampert}}, \bibinfo {author} {\bibfnamefont {S.}~\bibnamefont {Gadamsetty}}, \bibinfo {author} {\bibfnamefont {S.}~\bibnamefont {Chaudhary}}, \bibinfo {author} {\bibfnamefont {Y.}~\bibnamefont {Pei}}, \bibinfo {author} {\bibfnamefont {J.}~\bibnamefont {Chen}}, \bibinfo {author} {\bibfnamefont {E.}~\bibnamefont {Crowder}},\ and\ \bibinfo {author} {\bibfnamefont {D.}~\bibnamefont {Davidovi\ifmmode~\acute{c}\else \'{c}\fi{}}},\ }\href {https://doi.org/10.1103/PhysRevA.111.042214} {\bibfield  {journal} {\bibinfo  {journal} {Phys. Rev. A}\ }\textbf {\bibinfo {volume} {111}},\ \bibinfo {pages} {042214} (\bibinfo {year} {2025})}\BibitemShut {NoStop}%
\bibitem [{\citenamefont {Potts}\ \emph {et~al.}(2021)\citenamefont {Potts}, \citenamefont {Kalaee},\ and\ \citenamefont {Wacker}}]{Potts_2021}%
  \BibitemOpen
  \bibfield  {author} {\bibinfo {author} {\bibfnamefont {P.~P.}\ \bibnamefont {Potts}}, \bibinfo {author} {\bibfnamefont {A.~A.~S.}\ \bibnamefont {Kalaee}},\ and\ \bibinfo {author} {\bibfnamefont {A.}~\bibnamefont {Wacker}},\ }\href {https://doi.org/10.1088/1367-2630/ac3b2f} {\bibfield  {journal} {\bibinfo  {journal} {New Journal of Physics}\ }\textbf {\bibinfo {volume} {23}},\ \bibinfo {pages} {123013} (\bibinfo {year} {2021})}\BibitemShut {NoStop}%
\bibitem [{gen()}]{generalbaths}%
  \BibitemOpen
  \href@noop {} {}\bibinfo {note} {Note that we are not limited to having a traditional Schrodinger evolution of the combined system (i.e., the system comprised of ${\rm S}$ and ${\rm B}$). We only require that the dynamics have a superoperator generator that preserves Gaussian nature of the bath. As example we consider in the next section a system--bath with an underlying Lindblad evolution, i.e. where the bath Hamiltonian, $H_{\rm B}$ from the main text is replaced by a bath Lindbladian, and demonstrate the success of the second-order approximation numerically.}\BibitemShut {Stop}%
\bibitem [{SM()}]{SM}%
  \BibitemOpen
  \href@noop {} {}\bibinfo {note} {See Supplemental Material (SM) for details}\BibitemShut {NoStop}%
\bibitem [{non()}]{nonstationarybaths}%
  \BibitemOpen
  \href@noop {} {}\bibinfo {note} {Our results also extend straightforwardly to non-stationary baths, i.e. where the bath correlation functions, $C_{\alpha\beta}(t,s)\coloneq\Tr[\rho_B{\hat{B}_{\alpha}(t)\hat{B}_\beta(s)}]$, cannot be written as a function of $t-s$, see SM Definition \ref{def: nonstationary}.}\BibitemShut {Stop}%
\bibitem [{gam()}]{gammaproperty}%
  \BibitemOpen
  \href@noop {} {}\bibinfo {note} {In particular, consistent with the present findings [Eq.~\eqref{proposition:TightestBound}], Refs.~\cite{nathan2020universal,mozgunov_completely_2020} showed that $\norm{\partial_t\hat{\rho}}_{\tr}\leq \Gamma $, where $\hat{\rho}$ denotes the density matrix in the interaction picture, (see below for definition).}\BibitemShut {Stop}%
\bibitem [{hsp()}]{hsproduct}%
  \BibitemOpen
  \href@noop {} {}\bibinfo {note} {Technically, the Hilbert-Schmidt (HS) inner product defines an inner product only on the space of HS operators. Notice however that $\bbrakett{I\vert B}=\text{Tr}(B)$ makes sense whenever $B$ is trace class, even if the identity $I$ is not HS. We shall frequently use the inner product notation for non-HS operators in exactly this way.}\BibitemShut {Stop}%
\bibitem [{ipd()}]{ipdef}%
  \BibitemOpen
  \href@noop {} {}\bibinfo {note} {Specifically, $\hat H(t)$ describes the evolution of density matrix $\hat \rho(t)\equiv U_{\rm rf}(t)\rho(t)U_{\rm rf}^\dagger(t)$.}\BibitemShut {Stop}%
\bibitem [{\citenamefont {Imamoglu}(1994)}]{Imamoglu_1994}%
  \BibitemOpen
  \bibfield  {author} {\bibinfo {author} {\bibfnamefont {A.}~\bibnamefont {Imamoglu}},\ }\href {https://doi.org/10.1103/PhysRevA.50.3650} {\bibfield  {journal} {\bibinfo  {journal} {Phys. Rev. A}\ }\textbf {\bibinfo {volume} {50}},\ \bibinfo {pages} {3650} (\bibinfo {year} {1994})}\BibitemShut {NoStop}%
\bibitem [{\citenamefont {Xu}\ \emph {et~al.}(2026)\citenamefont {Xu}, \citenamefont {Vadimov}, \citenamefont {Stockburger},\ and\ \citenamefont {Ankerhold}}]{Xu_revmodphys_pseudomodes}%
  \BibitemOpen
  \bibfield  {author} {\bibinfo {author} {\bibfnamefont {M.}~\bibnamefont {Xu}}, \bibinfo {author} {\bibfnamefont {V.}~\bibnamefont {Vadimov}}, \bibinfo {author} {\bibfnamefont {J.~T.}\ \bibnamefont {Stockburger}},\ and\ \bibinfo {author} {\bibfnamefont {J.}~\bibnamefont {Ankerhold}},\ }\href {https://doi.org/10.1103/w3nw-hbjc} {\bibfield  {journal} {\bibinfo  {journal} {Rev. Mod. Phys.}\ ,\ } (\bibinfo {year} {2026})}\BibitemShut {NoStop}%
\bibitem [{ome()}]{omegachoice}%
  \BibitemOpen
  \href@noop {} {}\bibinfo {note} {We picked $\Omega$ to control the excitation energy of both the pseudomode and the spin to minimize the number of parameters.}\BibitemShut {Stop}%
\bibitem [{\citenamefont {Nathan}\ and\ \citenamefont {Rudner}(2024{\natexlab{b}})}]{nathan_2024}%
  \BibitemOpen
  \bibfield  {author} {\bibinfo {author} {\bibfnamefont {F.}~\bibnamefont {Nathan}}\ and\ \bibinfo {author} {\bibfnamefont {M.~S.}\ \bibnamefont {Rudner}},\ }\href {https://doi.org/10.1103/PhysRevB.109.205140} {\bibfield  {journal} {\bibinfo  {journal} {Phys. Rev. B}\ }\textbf {\bibinfo {volume} {109}},\ \bibinfo {pages} {205140} (\bibinfo {year} {2024}{\natexlab{b}})}\BibitemShut {NoStop}%
\bibitem [{ker()}]{kernelinversion}%
  \BibitemOpen
  \href@noop {} {}\bibinfo {note} {{Note that this is well-defined since, generically, $\bar \Delta _{n,n}$ is a full-rank automorphism on the subspace of traceless operators and $\xi_{nm}$ is traceless~\cite{nathan_quantifying_2024}}}\BibitemShut {NoStop}%
\bibitem [{\citenamefont {Tupkary}\ \emph {et~al.}(2022)\citenamefont {Tupkary}, \citenamefont {Dhar}, \citenamefont {Kulkarni},\ and\ \citenamefont {Purkayastha}}]{Tupkary_2022}%
  \BibitemOpen
  \bibfield  {author} {\bibinfo {author} {\bibfnamefont {D.}~\bibnamefont {Tupkary}}, \bibinfo {author} {\bibfnamefont {A.}~\bibnamefont {Dhar}}, \bibinfo {author} {\bibfnamefont {M.}~\bibnamefont {Kulkarni}},\ and\ \bibinfo {author} {\bibfnamefont {A.}~\bibnamefont {Purkayastha}},\ }\href {https://doi.org/10.1103/PhysRevA.105.032208} {\bibfield  {journal} {\bibinfo  {journal} {Phys. Rev. A}\ }\textbf {\bibinfo {volume} {105}},\ \bibinfo {pages} {032208} (\bibinfo {year} {2022})}\BibitemShut {NoStop}%
\bibitem [{\citenamefont {Pyurbeeva}\ and\ \citenamefont {Kosloff}(2026)}]{Pyurbeeva_2026}%
  \BibitemOpen
  \bibfield  {author} {\bibinfo {author} {\bibfnamefont {E.}~\bibnamefont {Pyurbeeva}}\ and\ \bibinfo {author} {\bibfnamefont {R.}~\bibnamefont {Kosloff}},\ }\href {https://doi.org/10.1088/1367-2630/ae3796} {\bibfield  {journal} {\bibinfo  {journal} {New Journal of Physics}\ }\textbf {\bibinfo {volume} {28}},\ \bibinfo {pages} {014515} (\bibinfo {year} {2026})}\BibitemShut {NoStop}%
\end{thebibliography}%


 \newcommand{\noop}[1]{}
\begin{thebibliography}{6}%
\makeatletter
\providecommand \@ifxundefined [1]{%
 \@ifx{#1\undefined}
}%
\providecommand \@ifnum [1]{%
 \ifnum #1\expandafter \@firstoftwo
 \else \expandafter \@secondoftwo
 \fi
}%
\providecommand \@ifx [1]{%
 \ifx #1\expandafter \@firstoftwo
 \else \expandafter \@secondoftwo
 \fi
}%
\providecommand \natexlab [1]{#1}%
\providecommand \enquote  [1]{``#1''}%
\providecommand \bibnamefont  [1]{#1}%
\providecommand \bibfnamefont [1]{#1}%
\providecommand \citenamefont [1]{#1}%
\providecommand \href@noop [0]{\@secondoftwo}%
\providecommand \href [0]{\begingroup \@sanitize@url \@href}%
\providecommand \@href[1]{\@@startlink{#1}\@@href}%
\providecommand \@@href[1]{\endgroup#1\@@endlink}%
\providecommand \@sanitize@url [0]{\catcode `\\12\catcode `\$12\catcode `\&12\catcode `\#12\catcode `\^12\catcode `\_12\catcode `\%12\relax}%
\providecommand \@@startlink[1]{}%
\providecommand \@@endlink[0]{}%
\providecommand \url  [0]{\begingroup\@sanitize@url \@url }%
\providecommand \@url [1]{\endgroup\@href {#1}{\urlprefix }}%
\providecommand \urlprefix  [0]{URL }%
\providecommand \Eprint [0]{\href }%
\providecommand \doibase [0]{https://doi.org/}%
\providecommand \selectlanguage [0]{\@gobble}%
\providecommand \bibinfo  [0]{\@secondoftwo}%
\providecommand \bibfield  [0]{\@secondoftwo}%
\providecommand \translation [1]{[#1]}%
\providecommand \BibitemOpen [0]{}%
\providecommand \bibitemStop [0]{}%
\providecommand \bibitemNoStop [0]{.\EOS\space}%
\providecommand \EOS [0]{\spacefactor3000\relax}%
\providecommand \BibitemShut  [1]{\csname bibitem#1\endcsname}%
\let\auto@bib@innerbib\@empty
\bibitem [{\citenamefont {Feynman}\ and\ \citenamefont {Vernon}(1963)}]{feynman_theory_1963_sm}%
  \BibitemOpen
  \bibfield  {author} {\bibinfo {author} {\bibfnamefont {R.}~\bibnamefont {Feynman}}\ and\ \bibinfo {author} {\bibfnamefont {F.}~\bibnamefont {Vernon}},\ }\href {https://doi.org/10.1016/0003-4916(63)90068-X} {\bibfield  {journal} {\bibinfo  {journal} {Annals of Physics}\ }\textbf {\bibinfo {volume} {24}},\ \bibinfo {pages} {118} (\bibinfo {year} {1963})}\BibitemShut {NoStop}%
\bibitem [{\citenamefont {Park}\ \emph {et~al.}(2024)\citenamefont {Park}, \citenamefont {Huang}, \citenamefont {Zhu}, \citenamefont {Yang}, \citenamefont {Chan},\ and\ \citenamefont {Lin}}]{park2024quasi_sm}%
  \BibitemOpen
  \bibfield  {author} {\bibinfo {author} {\bibfnamefont {G.}~\bibnamefont {Park}}, \bibinfo {author} {\bibfnamefont {Z.}~\bibnamefont {Huang}}, \bibinfo {author} {\bibfnamefont {Y.}~\bibnamefont {Zhu}}, \bibinfo {author} {\bibfnamefont {C.}~\bibnamefont {Yang}}, \bibinfo {author} {\bibfnamefont {G.~K.-L.}\ \bibnamefont {Chan}},\ and\ \bibinfo {author} {\bibfnamefont {L.}~\bibnamefont {Lin}},\ }\href {https://doi.org/10.1103/PhysRevB.110.195148} {\bibfield  {journal} {\bibinfo  {journal} {Physical Review B}\ }\textbf {\bibinfo {volume} {110}},\ \bibinfo {pages} {195148} (\bibinfo {year} {2024})}\BibitemShut {NoStop}%
\bibitem [{\citenamefont {Nathan}\ and\ \citenamefont {Rudner}(2020)}]{nathan2020universal_sm}%
  \BibitemOpen
  \bibfield  {author} {\bibinfo {author} {\bibfnamefont {F.}~\bibnamefont {Nathan}}\ and\ \bibinfo {author} {\bibfnamefont {M.~S.}\ \bibnamefont {Rudner}},\ }\href {https://doi.org/10.1103/PhysRevB.102.115109} {\bibfield  {journal} {\bibinfo  {journal} {Physical Review B}\ }\textbf {\bibinfo {volume} {102}},\ \bibinfo {pages} {115109} (\bibinfo {year} {2020})}\BibitemShut {NoStop}%
\bibitem [{tau()}]{tautology}%
  \BibitemOpen
  \href@noop {} {}\bibinfo {note} {{I.e., that $\mu_i \leq i! \tau^i $ for $i=1,\ldots \mexp$. Specifically, note that both conditions are tautologically true if $\nexp=0$, while $2\mexp - 1 \geq \mexp$ for $\mexp \geq 1$.}}\BibitemShut {Stop}%
\bibitem [{nco()}]{ncondition}%
  \BibitemOpen
  \href@noop {} {}\bibinfo {note} {{{Specifically Eq.~\eqref{eq: Mbound} only assumes $\mu_i<i! \tau_0^i$ for $i=1\ldots j+n-1$. These conditions are satisfied with $\tau_0=\tau$, $j=q$, and $n=\mexp$, since $\mu_i \leq i! \tau ^i$ for $i=1\ldots 2\mexp -1$, while $n\leq \mexp$, and $q\leq \mexp$.}}}\BibitemShut {Stop}%
\bibitem [{\citenamefont {Robbins}(1955)}]{785bbcf5-e12d-3b4f-9b4c-913b654991d7}%
  \BibitemOpen
  \bibfield  {author} {\bibinfo {author} {\bibfnamefont {H.}~\bibnamefont {Robbins}},\ }\href {http://www.jstor.org/stable/2308012} {\bibfield  {journal} {\bibinfo  {journal} {The American Mathematical Monthly}\ }\textbf {\bibinfo {volume} {62}},\ \bibinfo {pages} {26} (\bibinfo {year} {1955})}\BibitemShut {NoStop}%
\end{thebibliography}%


\begin{thebibliography}{0}%
\makeatletter
\providecommand \@ifxundefined [1]{%
 \@ifx{#1\undefined}
}%
\providecommand \@ifnum [1]{%
 \ifnum #1\expandafter \@firstoftwo
 \else \expandafter \@secondoftwo
 \fi
}%
\providecommand \@ifx [1]{%
 \ifx #1\expandafter \@firstoftwo
 \else \expandafter \@secondoftwo
 \fi
}%
\providecommand \natexlab [1]{#1}%
\providecommand \enquote  [1]{``#1''}%
\providecommand \bibnamefont  [1]{#1}%
\providecommand \bibfnamefont [1]{#1}%
\providecommand \citenamefont [1]{#1}%
\providecommand \href@noop [0]{\@secondoftwo}%
\providecommand \href [0]{\begingroup \@sanitize@url \@href}%
\providecommand \@href[1]{\@@startlink{#1}\@@href}%
\providecommand \@@href[1]{\endgroup#1\@@endlink}%
\providecommand \@sanitize@url [0]{\catcode `\\12\catcode `\$12\catcode `\&12\catcode `\#12\catcode `\^12\catcode `\_12\catcode `\%12\relax}%
\providecommand \@@startlink[1]{}%
\providecommand \@@endlink[0]{}%
\providecommand \url  [0]{\begingroup\@sanitize@url \@url }%
\providecommand \@url [1]{\endgroup\@href {#1}{\urlprefix }}%
\providecommand \urlprefix  [0]{URL }%
\providecommand \Eprint [0]{\href }%
\providecommand \doibase [0]{https://doi.org/}%
\providecommand \selectlanguage [0]{\@gobble}%
\providecommand \bibinfo  [0]{\@secondoftwo}%
\providecommand \bibfield  [0]{\@secondoftwo}%
\providecommand \translation [1]{[#1]}%
\providecommand \BibitemOpen [0]{}%
\providecommand \bibitemStop [0]{}%
\providecommand \bibitemNoStop [0]{.\EOS\space}%
\providecommand \EOS [0]{\spacefactor3000\relax}%
\providecommand \BibitemShut  [1]{\csname bibitem#1\endcsname}%
\let\auto@bib@innerbib\@empty
\end{thebibliography}%
\end{document}